\documentclass[english,prl,superscriptaddress,reprint,twocolumn]{revtex4-1}
\usepackage{standalone}
\usepackage{nccfoots}
\usepackage{comment}

\usepackage{graphics}
\usepackage{float}
\graphicspath{{figures/}}
\usepackage{braket}
\usepackage{color}
\usepackage{nicefrac}
\usepackage{bbold}
\usepackage{mathtools}

\usepackage[colorlinks=true,
   linkcolor=blue,
   citecolor=blue,
   urlcolor =blue]{hyperref}

\newcommand{\citeSM}{\cite[{\tiny SM}\kern-0.3em][]{SM}}

\newcommand{\be}{\begin{equation}}
\newcommand{\ee}{\end{equation}}

\newcommand*{\ketbra}[2]{\ensuremath{\ket{#1}\bra{#2}}}

\newcommand{\tr}{\mathrm{Tr}}
\newcommand{\du}{\! \text{d}u \, }

\newcommand{\ps}[1]{\hat{s}_{#1}}

\newcommand*{\trAE}[2][]{\ensuremath{\textnormal{Tr}_{#1}\left[ #2 \right]}}
\newcommand*{\expect}[2][]{\ensuremath{\mathbb{E}_{#1}\left[ #2 \right]}}
\newcommand{\bbe}[2][]{\mathbb{E}_{#1}\left[ #2 \right]}

\expandafter\let\csname equation*\endcsname\relax

\expandafter\let\csname endequation*\endcsname\relax


\usepackage{amsmath,amsfonts,amssymb,amsthm, bbm,braket}
\usepackage{tikz}
\usepackage{xcolor}

\DeclareMathOperator{\var}{\mathrm{Var}}

\definecolor{Red}{HTML}{E53E30}  
\definecolor{Green}{HTML}{00AD69}  
\definecolor{Blue}{HTML}{2171b5}
\definecolor{Purple}{HTML}{652F6C}  

\begin{document}
\title{Importance sampling of randomized measurements for probing entanglement}

\author{Aniket Rath}
\affiliation{Univ.  Grenoble Alpes, CNRS, LPMMC, 38000 Grenoble, France}

\author{Rick van Bijnen}
\affiliation{Center for Quantum Physics, University of Innsbruck, Innsbruck A-6020, Austria}	
\affiliation{Institute for Quantum Optics and Quantum Information of the Austrian Academy of Sciences,  Innsbruck A-6020, Austria}

\author{Andreas Elben}
\affiliation{Center for Quantum Physics, University of Innsbruck, Innsbruck A-6020, Austria}	
\affiliation{Institute for Quantum Optics and Quantum Information of the Austrian Academy of Sciences,  Innsbruck A-6020, Austria}

\author{Peter Zoller}
\affiliation{Center for Quantum Physics, University of Innsbruck, Innsbruck A-6020, Austria}	
\affiliation{Institute for Quantum Optics and Quantum Information of the Austrian Academy of Sciences,  Innsbruck A-6020, Austria}

\author{Beno\^it Vermersch}
\affiliation{Univ.  Grenoble Alpes, CNRS, LPMMC, 38000 Grenoble, France}
\affiliation{Center for Quantum Physics, University of Innsbruck, Innsbruck A-6020, Austria}	
\affiliation{Institute for Quantum Optics and Quantum Information of the Austrian Academy of Sciences,  Innsbruck A-6020, Austria}


\begin{abstract}
We show that combining randomized measurement protocols with importance sampling allows for characterizing entanglement in significantly larger quantum systems and in a more efficient way than in previous work. A drastic reduction of statistical errors is obtained using classical techniques of machine-learning and tensor networks using partial information on the quantum state. In current experimental settings of engineered many-body quantum systems this significantly increases the \mbox{(sub-)}system sizes for which entanglement can be measured.
In particular, we show an exponential reduction of the required number of measurements to estimate the purity of product states and GHZ states.
\end{abstract}
\maketitle

Measuring the properties of many-body  states, and in particular quantifying entanglement for increasing system sizes is a key challenge in assessing and utilizing the power of large-scale quantum computers~\cite{Arute2019} and simulators~\cite{Ebadi2021,Scholl2021}.  The recent development of \textit{randomized measurements} provides us with a general toolbox to measure in a state-agnostic way physical quantities associated with entanglement~\cite{VanEnk2012,Cong2016,Elben2018,Vermersch2018,Elben2018a,Knips2019,Ketterer2019,Huang2020,Elben2020,Zhou2020,Ketterer2020,Ketterer2020a,Vitale2021,Satoya2021,Rath2021b}, scrambling~\cite{Vermersch2019,Qi2019,Garcia2021}, topological order~\cite{Elben2019, Cian2020}, and in cross-device quantum verification~\cite{Elben2020a}. Randomized measurements are particularly well suited to current experimental settings, requiring only (random) single qubit rotations and site-resolved measurements.  Moreover, estimations are made directly from the measured data, with low number of measurements compared to tomography~\cite{Gross2010}. These protocols have enabled in recent experimental work the measurement of (entanglement) R\'enyi entropies~\cite{Brydges2019,Vitale2021}, negativities~\cite{Elben2020}, state-fidelities~\cite{Elben2020a}, and scrambling~\cite{Joshi2020}. 

While these experiments have been performed in the regime of subsystems with ten particles, the ongoing development of quantum systems involving hundreds of qubits~\cite{Arute2019,Ebadi2021,Scholl2021} raises the challenge to scale these protocols to significantly larger (sub-)system sizes. The current bottleneck is the required number of measurements to overcome statistical errors: For instance, the number of randomized measurements to estimate the purity with a given accuracy is of the order of $2^{aN}$ for a (sub-)system of $N$ qubits, with $a \approx 1$~\cite{Elben2018,Huang2020}. In this letter, we show that importance sampling will allow us to push randomized protocols to study significantly larger (sub-)system sizes.
In particular, our scaling analysis for product states and GHZ states shows that the required number of measurements $2^{a'N}$ has a reduced exponent $a'<a$ compared to our previous `uniform' sampling approach. 
We also observe below significant  reductions of statistical errors when estimating with importance sampling the purity of random states, and highly entangled states created by a quantum quench.

\begin{figure}[t]
    \includegraphics[width = 0.5\textwidth]{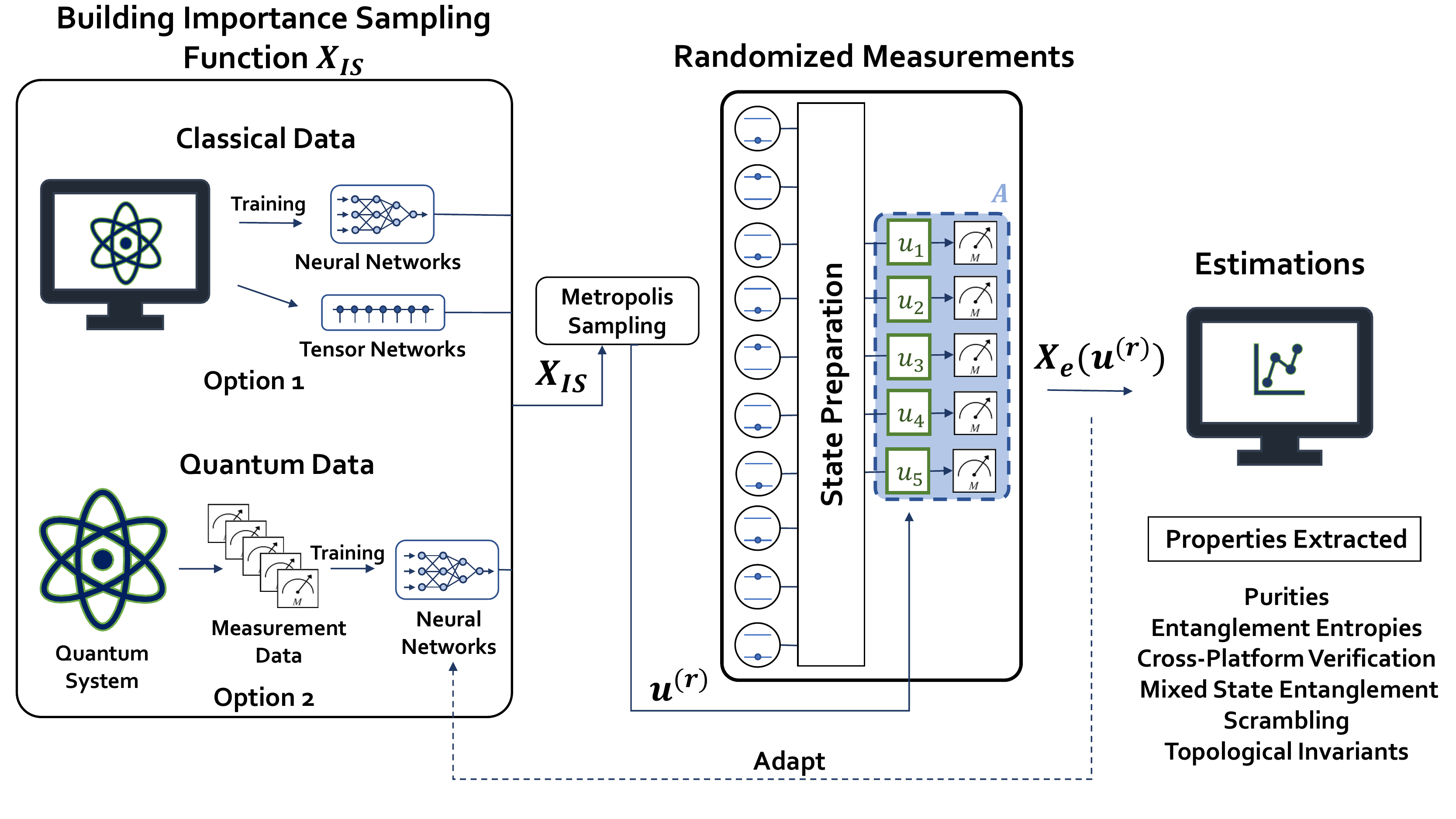}
    \caption{{\it Randomized measurement protocol with importance sampling}. In the first phase, we construct a classical function $X_\mathrm{IS}(u)$. In the second phase, unitaries are sampled from the appropriate classical representation. In the last phase, measurements are performed in the quantum system, and are analyzed to construct different properties accessible by randomized measurements. The measurement data obtained during the experiment could also be considered as additional samples to obtain improved classical function for future experiments.}
    \label{fig:setup}
\end{figure}

While our approach can be realized in any randomized measurement protocol, we consider for concreteness the situation of probing entanglement for a bipartite quantum system $A$ and $B$. Our aim is to measure the purities $p_2=\mathrm{Tr}(\rho^2)$, and second R\'enyi entropies $S_2=-\log(p_2)$ of a subsystem $A$ of $N$ qubits described by a reduced density matrix $\rho$. The values of $p_2$ and $S_2$ can be used to quantify entanglement~\cite{Horodecki1996}, but also to unravel universal aspects of many-body quantum matter~\cite{Eisert2010}.
Using the protocol presented in this Letter, the number of measurements to access the purity in existing setups can be exponentially reduced, allowing for instance to probe topological order on large-scale surface codes~\cite{satzinger2021realizing}, or to verify in a state-agnostic way large quantum circuits~\cite{Elben2020a,zhu2021crossplatform}.

The purity $p_2$ can be written as an integral \mbox{$p_2=\int X(u) du$} of the quantity~\cite{Elben2018,Elben2018a,Brydges2019}
\begin{equation}
X(u) = 2^N\sum_{s,s'} (-2)^{-D[s,s']} P_u(s) P_u(s') , \label{eq:RM}
\end{equation}
with the integration performed over all local unitary transformations $u=u_1\otimes \dots \otimes u_N$, with respect to the Haar measure $du  = \Pi_i\, du_i$ (see Supplemental Material (SM)~\cite{SM}\nocite{Diaconis2015,Brydges2019,NRecipes2007}). Here, $P_u(s)=\bra{s}u\rho u^\dag \ket{s}$ are the probabilities of measuring a particular bitstring $s$ in the computational basis after rotation $u$ (c.f.~Fig~\ref{fig:setup}), and $D$ is the Hamming distance. In practice, the purity can be evaluated using a Monte Carlo integration $p_2  \approx (N_u)^{-1} \sum_r X(u^{(r)})$, obtained by uniformly sampling a finite number of local transformations $u^{(r)}=u_1^{(r)}\otimes \dots \otimes u_N^{(r)} $ ($r = 1, \dots, N_u$).

Statistical errors in the estimation of the purity are due to both shot noise (the finite number of measurement samples $N_M$ used to estimate the probabilities $P_u(s)$), and to the finite number of transformations $N_u$. 
The challenge to overcome statistical errors is related to the fact that the function $X(u)$ takes values in an exponentially large interval $[2^{-N}, 2^N]$ (see SM~\cite{SM}).
Here, we propose to sample unitaries from a  distribution $p_\mathrm{IS}$ that prioritizes the  `important' regions of $X$ giving larger contributions to the total integral (\ref{eq:RM}), and we write the purity as
\begin{equation}
    p_2 = \int \left( \frac{X(u)}{p_\mathrm{IS}(u)}\right) p_\mathrm{IS}(u) du.\label{eq:IS}
\end{equation}
The gain in estimating the purity via Monte Carlo integration with importance sampling becomes apparent when quantifying the  statistical error $\mathcal{E}$ in measuring $p_2$ for $N_M\to \infty$ with a finite number of unitaries $N_u$, being of the order of $\mathrm{std}_\mathrm{IS}(X/p_\mathrm{IS})/\sqrt{N_u}$, 
when compared with uniform sampling $\mathrm{std}(X)/\sqrt{N_u}$~\cite{NRecipes2007}. Here $\mathrm{std}$ and $\mathrm{std}_\mathrm{IS}$ are the standard deviations according to the Haar measure $du$, and the distribution $p_\mathrm{IS}(u)du$, respectively.

Our protocol is summarized in Fig.~\ref{fig:setup}. 
(i) {\it Building $X_\mathrm{IS}$}:
We first construct on a classical computer an approximation $X_\mathrm{IS}(u)$ of the function $X(u)$. This function can be built based on partial information on the quantum state (classical data). We can also form  $X_\mathrm{IS}(u)$ from measurements performed on a quantum system (quantum data). This can be data from prior experiments under study, but could also be data from another experiment, potentially a more noisy quantum device or platform running the same quantum task.
(ii) {\it Sampling}: We define a probability distribution $p_\mathrm{IS}(u) = |X_\mathrm{IS}(u)|/\int |X_\mathrm{IS}(u)| du$~\footnote{In the examples below, we have $X_\mathrm{IS}\approx X(u)>0$.}, and sample a set of $N_u$ random unitaries via the Metropolis algorithm~\cite{NRecipes2007}.
(iii) {\it Measurements}: For each $u^{(r)}$, we collect $m=1,\dots,N_M$ bitstrings $s_m^{(r)}$
from randomized measurements performed on the quantum device.
(iv) {\it Estimation}: 
As the bistrings $s_m^{(r)}$ are distributed according to the probabilities $P_{u^{(r)}}(s)$, we use Eq.~\eqref{eq:RM}, and construct an unbiased estimation of $X(u^{(r)})$
\begin{equation}
    X_e(u^{(r)})=\frac{2^N}{N_M(N_M-1)}
    \sum_{m \neq m'} (-2) ^{-D[s_m^{(r)},s^{(r)}_{m'}]},
    \label{eq:ShotNoise}
\end{equation}
which only differs from $X(u^{(r)})$ due to shot noise. 
Averaging $(X_e(u^{(r)})/p_\mathrm{IS}(u^{(r)}))$ over the unitaries $u^{(r)}$, \mbox{$r=1,\dots,N_u$}, we obtain an estimation of the purity $[p_2]_\mathrm{IS}$.

Importance sampling reduces the total required number of measurements $N_uN_M$ associated with a given statistical error $\mathcal{E}$. 
 When sampling unitaries $u$ according to $p_\mathrm{IS}$, we first reduce the required number of unitaries $N_u$ to achieve $\mathcal{E}$ in the limit $N_M\to \infty$, as discussed above.
In addition, the number of shots $N_M$ required to satisfy an error threshold is also less compared to uniform sampling. 
The intuition behind this result is that the unitaries $u$ sampled according to $p_\mathrm{IS}$ are preferentially chosen in the vicinity of the maximum of $X$, where the effect of shot noise is minimal. 
For instance, with a product state, the maximum value of $X(u)$ is obtained when the distribution is peaked as $P_u(s)=\delta_{s,s_0}$ (see SM~\cite{SM}), i.e., when one shot only $N_M=1$ is sufficient to obtain convergence $X_e(u)=X(u)$.  When estimating the purity by averaging $X_e(u)$ over $p_\mathrm{IS}$, we indeed numerically observe, for product and GHZ states, an exponential reduction of the required value of $N_M$.

Task (i) of our protocol is the crucial part governing the efficiency of our protocol. If the quantum state can be represented classically up to unknown decoherence effects, such as for the product state, or a GHZ state, we can build a quasi-exact representation  $X_\mathrm{IS}(u)$ of $X(u)$.
Our protocol is also relevant when only approximations $X_\mathrm{IS}(u)$ of $X(u)$ are available, for instance if we have only access to a mean-field or a variational wavefunction. 
In particular, we show below that tensor networks~\cite{Schollwock2011}, which, with limited bond dimension, cannot faithfully represent a highly entangled state, are indeed useful to access the purity with reduced number of measurements compared to uniform sampling.
Similarly, when building $X_\mathrm{IS}(u)$ from quantum data, we can use recent tomographic techniques~\cite{Gross2010,Cramer2010,Torlai2018,Torlai2019,Kokail2021}, even in situations when they do not accurately represent the quantum state. 

The rest of this letter presents a detailed recipe to build the approximation $X_\mathrm{IS}(u)$ from limited information on the state, as well as performance tests and scaling analyses of statistical errors with various quantum states. 

{\it Building the sampler $X_\mathrm{IS}$---}To construct $X_\mathrm{IS}$, we assume we have access to a finite number $N_\mathrm{samples}$ of random measurements $X_a(u^{(k)})$, $k=1,\dots, N_\mathrm{samples}$. These measurements can be obtained from  {\it classical data}, i.e., from a representation of the state on a classical computer. 
 $X_a(u^{(k)})$ is only an approximation of the true measurement $X(u^{(k)})$. This can be due to unknown decoherence effects, but also to fundamental reasons that limit our ability to represent classically a quantum state. For instance, we can consider that $X_a$ is generated by a mean-field, variational tensor-network methods~\cite{Schollwock2011} (e.g.,  matrix-product-states (MPS) - two-dimensional projected-entangled pair states (PEPS)) with limited bond dimension, or machine-learning  representations~\cite{Carleo2017}.
Alternatively, we can also have prior access to the experimental system realizing the quantum state and measure $X_a(u^{(k)})=X_e(u^{(k)})$
via Eq.~\eqref{eq:ShotNoise} based on {\it quantum data}, c.f. Fig.~\ref{fig:setup}. Note that step (i) leads to a result that can be saved classically, i.e. this step does not need to be repeated every time we want to probe a given quantum system.

As detailed in SM~\cite{SM}, we can parametrize single qubit random unitaries $u_i=R_y(\theta_i)R_z(\varphi_i)$ in terms of two rotations.
The function $X(u)$ we would like to approximate is thus a multivariate function of $2N$ variables $\theta_i,\varphi_i$, $i=1,\dots,N$. 
In order to construct $X_\mathrm{IS}(u)$ as an object that can be used for sampling, we rely on machine-learning (ML) techniques of nonlinear multivariate regression. We use existing highly optimized algorithms to fit our samples by a neural network representing our target multivariate function $X_\mathrm{IS}(u)$. For each sample $k$, the $2N$ angles $\theta_i^{(k)},\varphi_i^{(k)}$ parametrizing $u^{(k)}$ are used as inputs of the neural network, while the value of the measured function $X_a(u^{(k)})$ is the output of the network. This provides a `training' procedure, which results in a fitted neural network $X_\mathrm{IS}(u)$, which we can finally save and use for the next step of sampling unitaries (ii) of the protocol. 
Note that, when a theory representation $X_a$ is available, one could define $X_\mathrm{IS}=X_a$, i.e., avoid fitting with ML and sample directly from $X_a$. While this approach is probably the most obvious for small systems, using ML offers in the large scale scenario the possibility of converting the result of a very costly classical computation into a neural network $X_\mathrm{IS}(u)$. This neural network can be seen as a `compressed object' and can be saved and shared classically on-demand (multiple times and/or for multiple users) to realize the sampling task (ii). 

\begin{figure}[t]
\begin{minipage}[b]{0.48\linewidth}
\centering
\includegraphics[width=\textwidth]{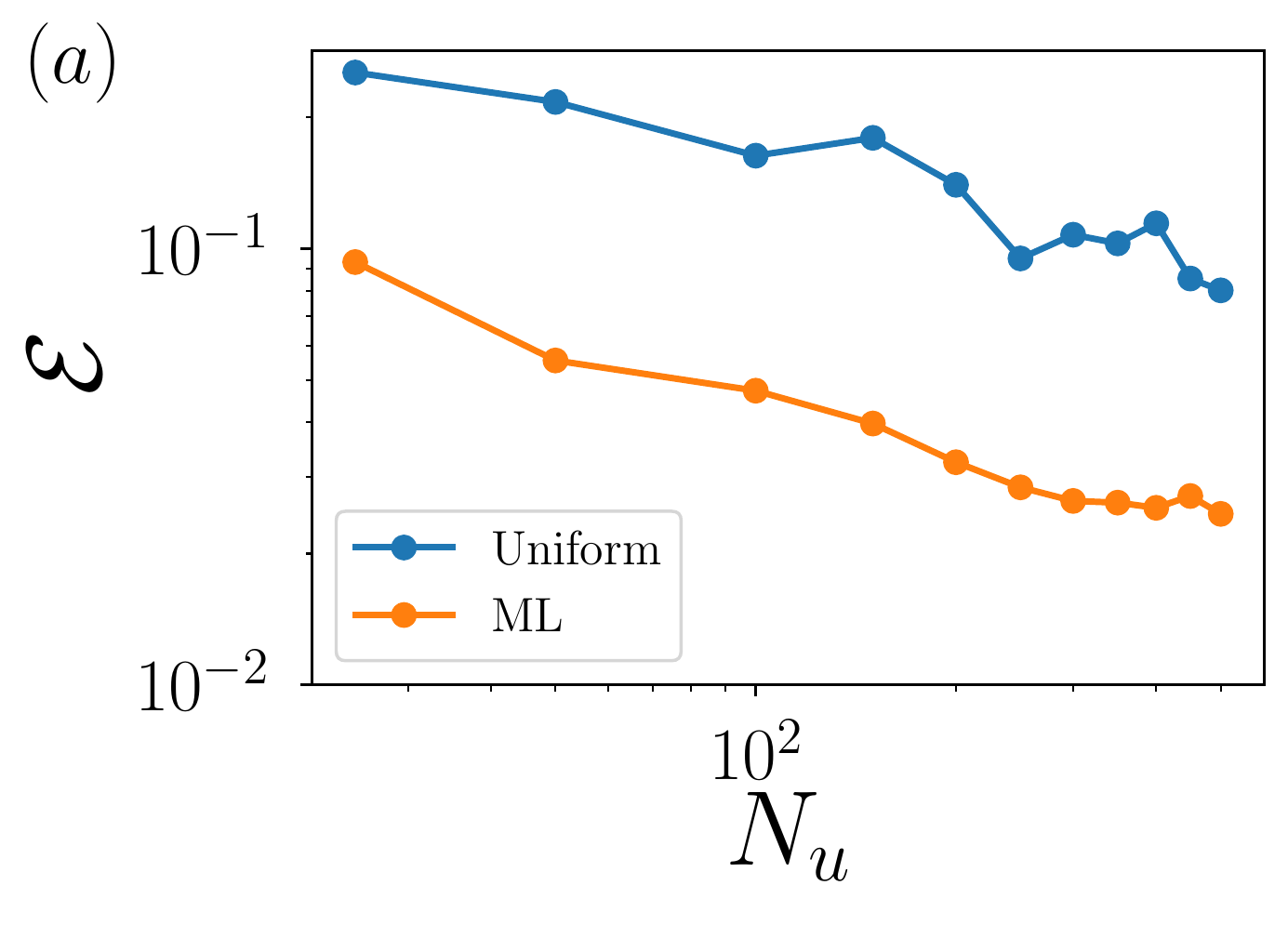}
\end{minipage}
\hskip -0.75ex
\begin{minipage}[b]{0.48\linewidth}
\centering
\includegraphics[width=\textwidth]{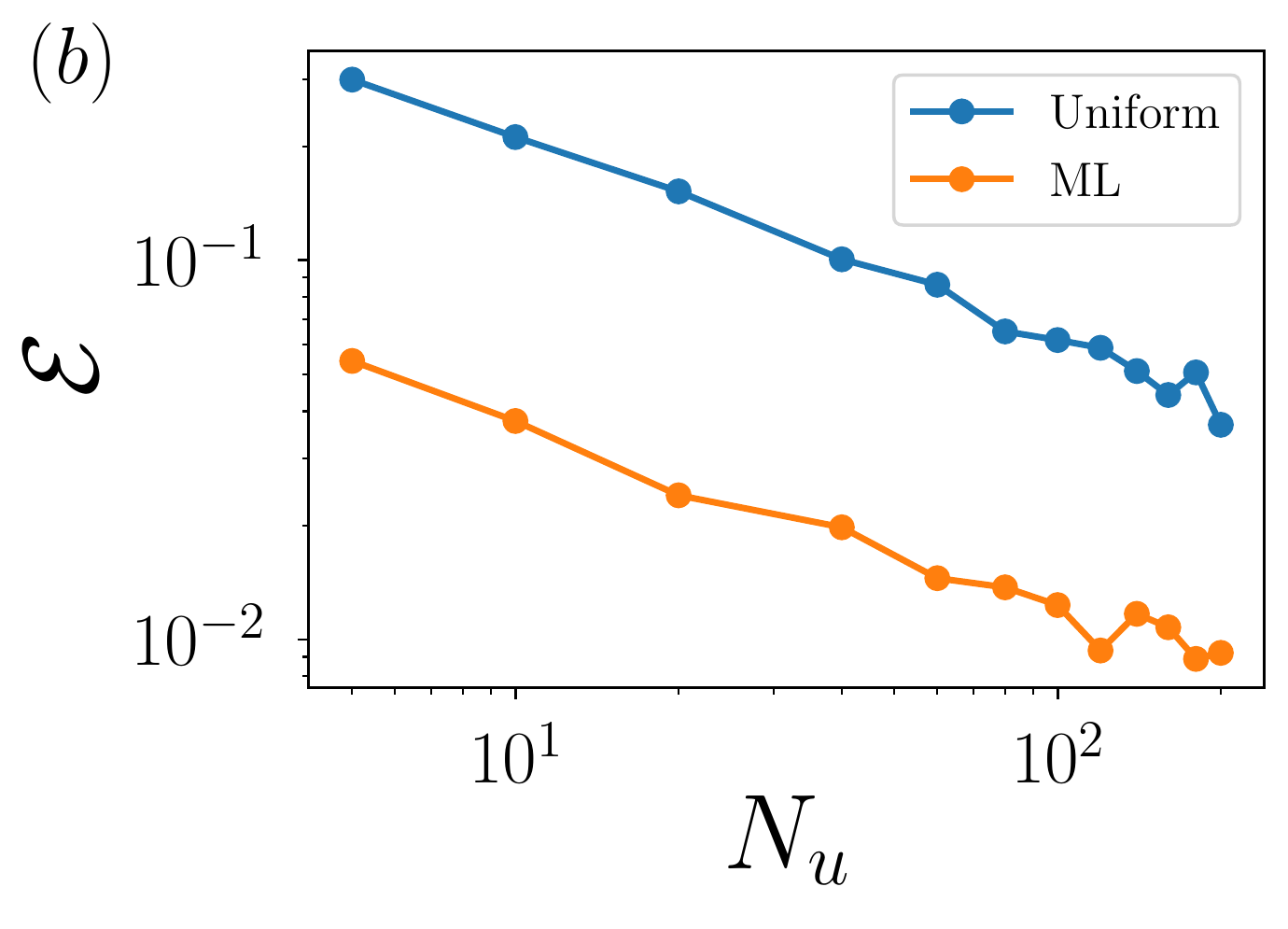}
\end{minipage}
\begin{minipage}[b]{0.48\linewidth}
\centering
\includegraphics[width=\textwidth]{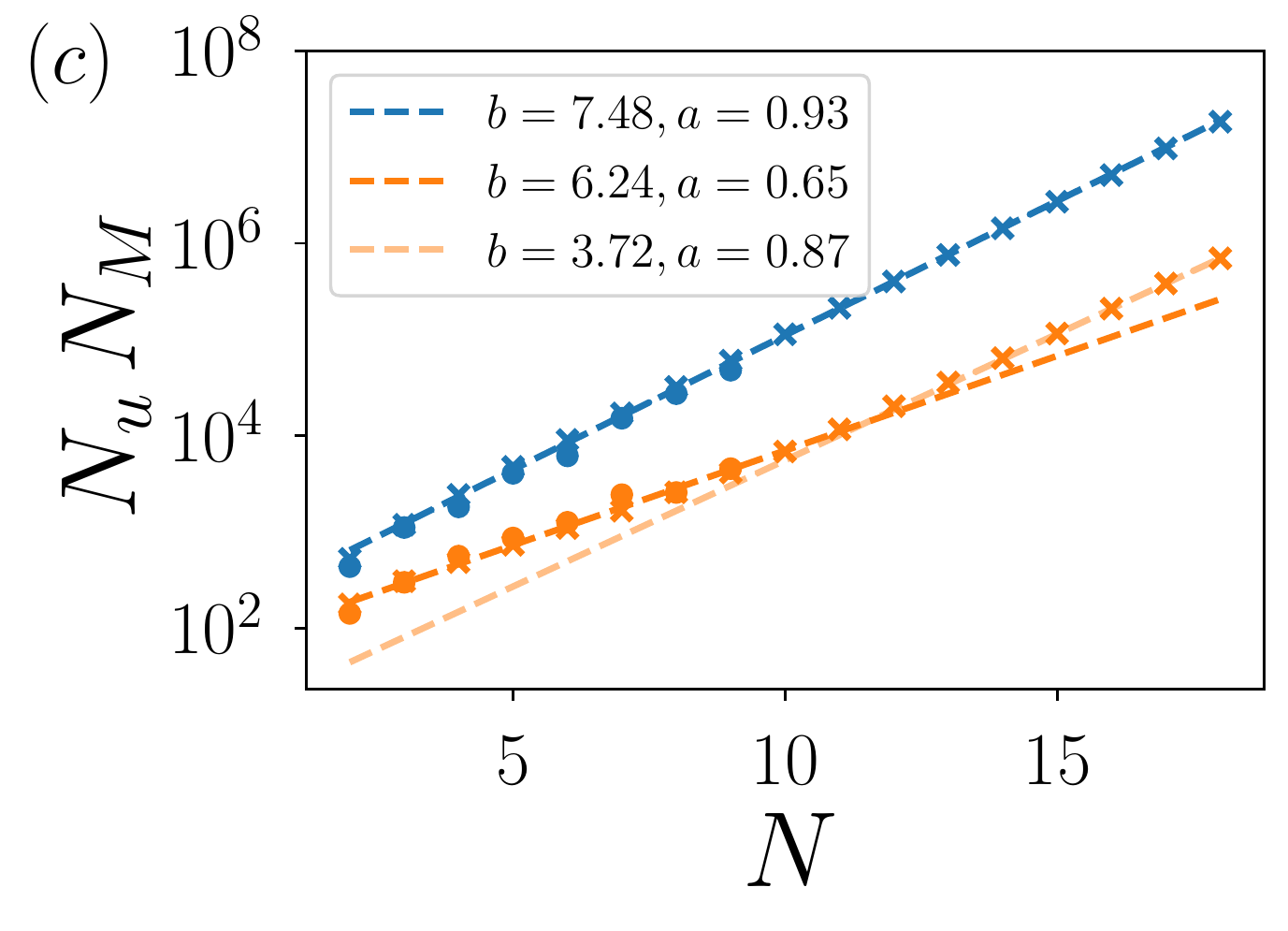}
\end{minipage}
\hskip -0.75ex
\begin{minipage}[b]{0.48\linewidth}
\centering
\includegraphics[width=\textwidth]{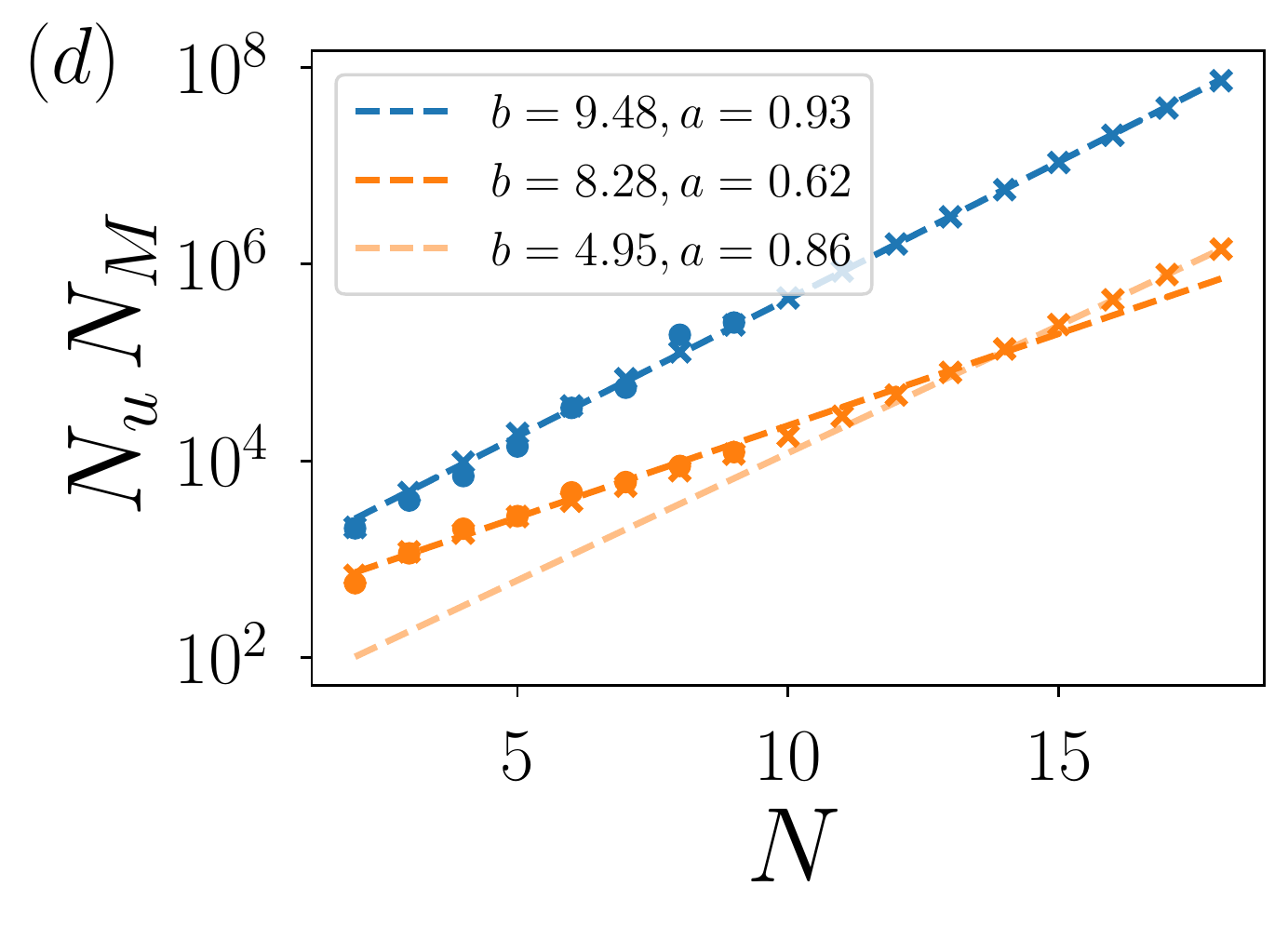}
\end{minipage}
\begin{minipage}[b]{0.48\linewidth}
\centering
\includegraphics[width=\textwidth]{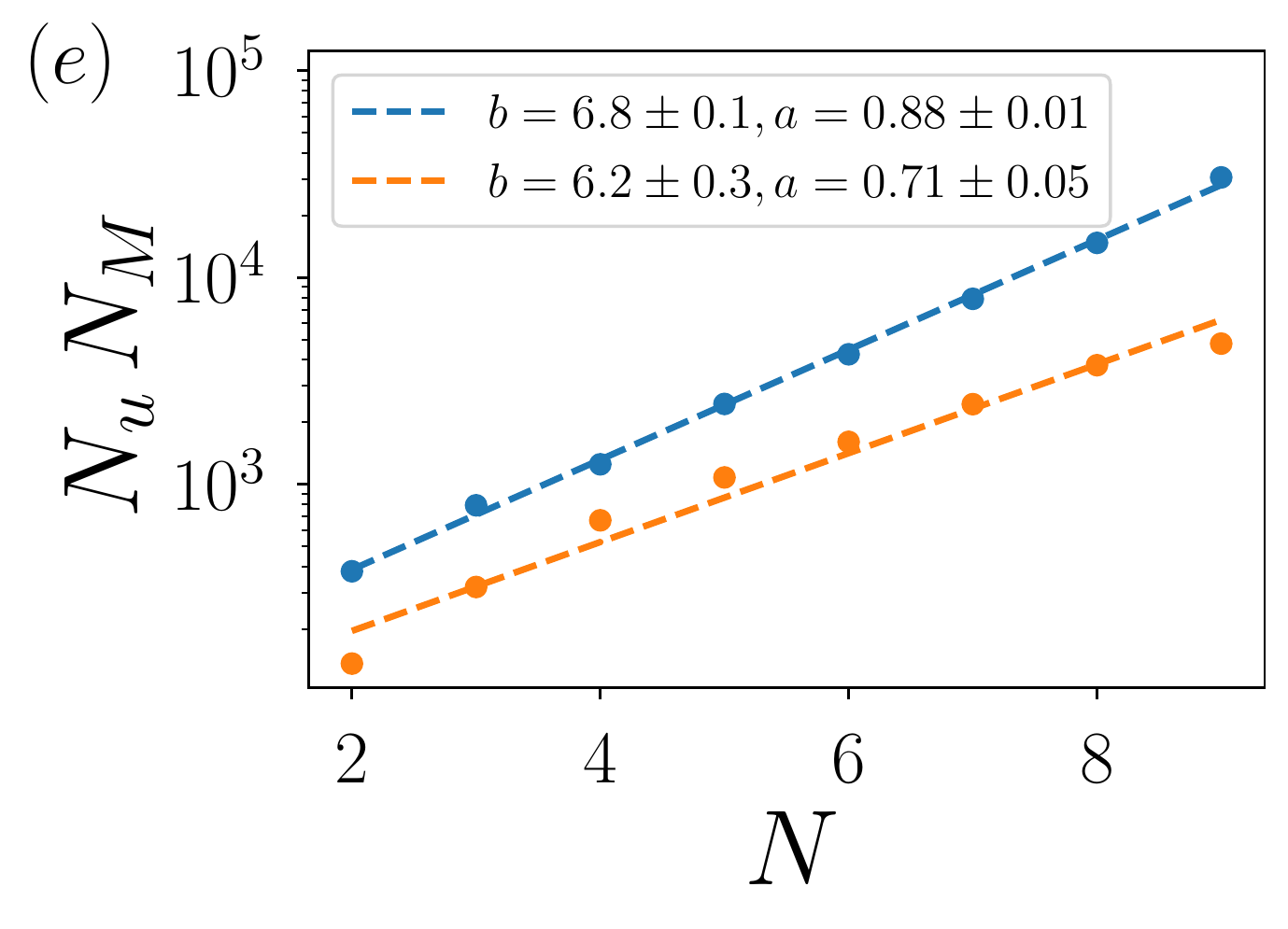}
\end{minipage}
\hskip -0.75ex
\begin{minipage}[b]{0.48\linewidth}
\centering
\includegraphics[width=\textwidth]{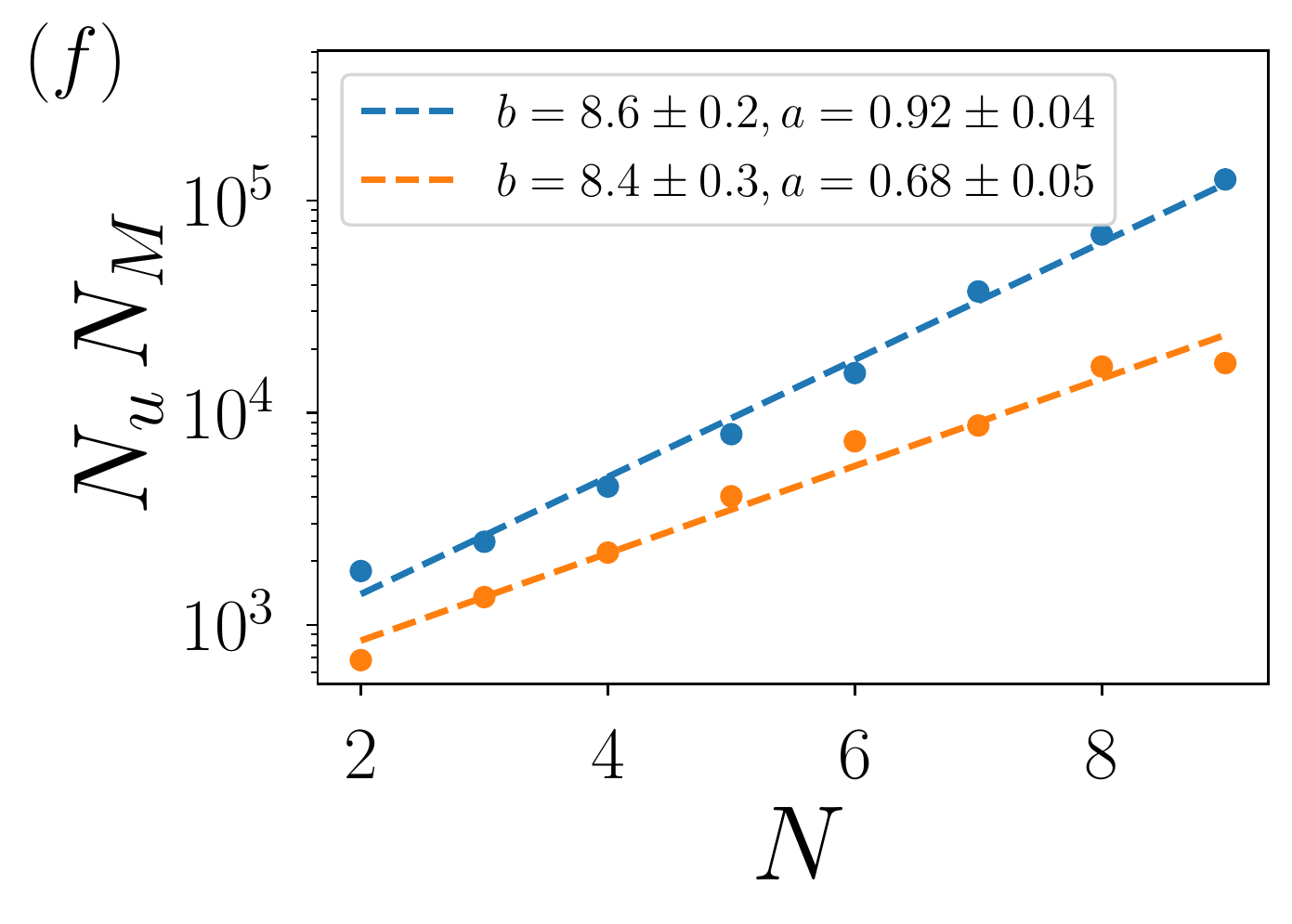}
\end{minipage}
\caption{{\it Statistical error scalings for product and GHZ states} Average statistical error $\mathcal{E}$ of the estimated purity for (a) 10-qubit product state and (b) 5-qubit GHZ state in function of $N_u$ with $N_M = 1000$ for a uniform  sampling (Uniform) and importance sampling from a machine learning model (ML).
(c-d) Scaling of the required total number of measurements $N_uN_M$ as a function of $N$ for uniform and importance sampling for a product state, to obtain a statistical error of $\mathcal{E}=0.1$ (c), and $\mathcal{E}=0.05$ (d), respectively.
We represent with cross the analytical prediction (c.f. SM) and with circles the numerical simulations. Panels (e-f) show the numerical simulations for the GHZ states for $\mathcal{E}=0.1$ and $\mathcal{E}=0.05$ with corresponding exponential fits of the type $2^{b+aN}$.
\label{fig:ML}} 
\end{figure}
{\it Performance tests---}
We now benchmark our protocol.
For all states that we analyzed, product states, GHZ states, random states and other highly entangled states, we observe a drastic reduction of statistical errors with importance sampling.

We begin by considering product states $\rho=\ket{\psi}\bra{\psi}$, with $\ket{\psi}=\ket{0}^{\otimes N}$.
We consider having access to classical data with samples of randomized measurements that are not affected by shot-noise. 
The details of the training procedure are presented in the SM~\cite{SM}. For such product state, the training of a neural network $X_\mathrm{IS}(u)$ is straightforward, and we achieve a fit of $X(u)$ using three layers of neurons, with mean absolute error below five percents, see SM~\cite{SM}.
To assess the performance of importance sampling,  we will compare the average statistical error $\mathcal{E}$ in estimating the purity, with the one obtained with uniform sampling ($X_\mathrm{IS}=1$).
We compute $\mathcal{E} = \overline{|p_2 - {p_2}_{e}|}$ by numerically simulating our protocol, with $\overline{\phantom{a}}$ an average over simulated experiments.
The results are shown in Fig.~\ref{fig:ML}(a). With uniform and importance sampling, the error decays as $1/\sqrt{N_u}$, with a prefactor that is approximately $5$ times smaller for importance sampling.
We consider GHZ states \mbox{$\ket{\psi}=(\ket{0}^{\otimes N}+\ket{1}^{\otimes N})/\sqrt{2}$}
in Fig.~\ref{fig:ML}(b). Here, importance sampling provides a significant advantage over uniform sampling, meaning that the neural network succeeded in learning how to sample correlated random unitaries that are adapted to probe a GHZ state.

We can also extract from numerical simulations the total number of measurements $N_u N_M$, minimized over possible choices of $N_u,N_M$, that is required to achieve a statistical error $\mathcal{E}$. Here, to ensure that we extract scaling relations that are independent of the choice of the neural network ansatz, with importance sampling, we sample directly from the ideal theory state $X_\mathrm{IS}(u)=X(u)$.
 In this case, for a fixed number of measurements, the statistical error is minimized for $N_u=1$. We present in the SM~\cite{SM} additional numerical simulations, using optimized neural networks for $N_u=200,500$ which support the same conclusions.
For the product state, we observe in Fig.~\ref{fig:ML}(c-d) that the required $N_uN_M$ grows as $ 2^{b+aN}$ (see also Ref.~\cite{Brydges2019}) with $a\approx 0.93$ for uniform sampling, and $a\approx 0.65$ for importance sampling.
Our numerical results for the GHZ states [panels (e)-(f)] show similar results, with favorable scaling exponents for importance samplings, in particular at high accuracy $\mathcal{E}=0.05$ [panel (f)].
As the exponent $a$ is reduced compared to uniform sampling, 
we see that importance sampling offers an exponential reduction of the required number of measurements.  In addition, in all panels (c-f), the prefactor $2^b$ obtained for importance sampling is smaller than the one for uniform sampling.

For pure product states, we can compare our numerical results with analytical calculations, which are presented in the SM~\cite{SM}, and extend them to the large $N$ limit.
Our analytical study shows the existence of two regimes: For $N \lesssim N_c $, smaller than a certain value $N_c\propto \log(1/\mathcal{E})$, we find a strongly favorable scaling exponent of $a=0.37$ for importance sampling.
For large $N \gtrsim N_c$, the exponent increases towards  $a\approx 0.88$ which is however still smaller than in the case of uniform sampling, $a\approx 0.92$.
In particular, we note that the favorable scaling regime, $N<N_c\propto \log(1/\mathcal{E})$, grows with the inverse error threshold $\mathcal{E}$, in agreement with the results shown in Fig.~\ref{fig:ML}(c-d).
The advantage of importance sampling is thus most pronounced at high accuracy (small $\mathcal{E}$), enabling estimation of the purities with exponentially less measurements compared to uniform sapling.

\begin{figure}[t]
\begin{minipage}[b]{0.5\linewidth}
\centering
\includegraphics[width=\textwidth]{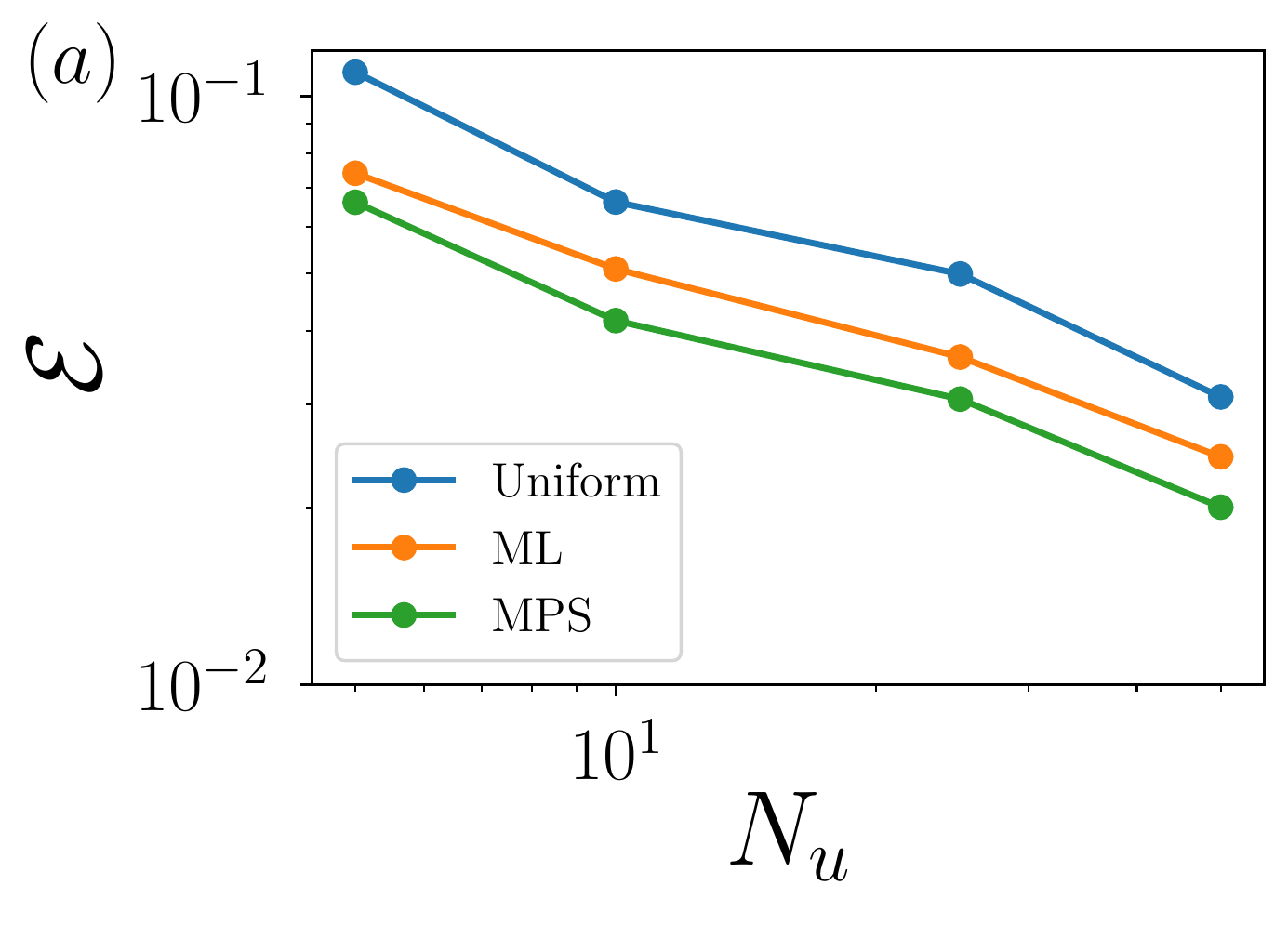}
\end{minipage}
\hskip -1ex
\begin{minipage}[b]{0.5\linewidth}
\centering
\includegraphics[width=\textwidth]{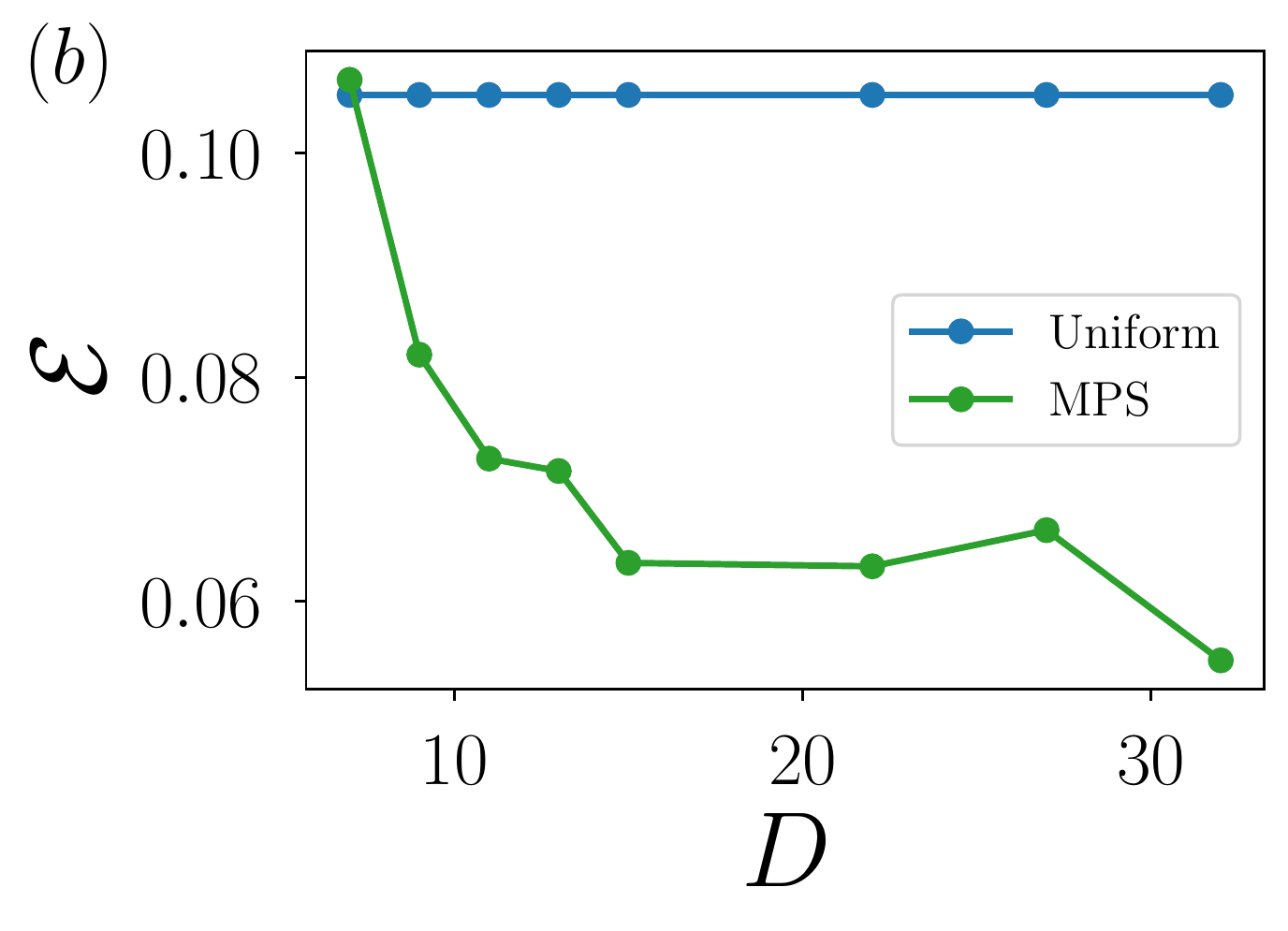}
\end{minipage}
\caption{{\it Purity estimation of a highly entangled 10 qubit state with ML and MPS samplers.} Panel (a) shows the average statistical error $\mathcal{E}$ of the estimated purity in function of $N_u$ with $N_M = 7500$ for a uniform sampling and importance sampling done from a neural network and a MPS representation of the corresponding state respectively. Panel (b) illustrates the scaling of the error $\mathcal{E}$ w.r.t different bond dimensions $D$ used for the MPS representation of the state for $N_u = 5$ and $N_M = 7500$.\label{fig:ML_MPS}}
\end{figure}
We have demonstrated that importance sampling provides an exponential reduction of the measurement budget for two specific states, product and GHZ states, which are `well-conditioned states'
, and whose fidelity can be efficiently estimated via direct fidelity estimation~\cite{Flammia2011,DaSilva2011}.
However, importance sampling is not useful only for these states. First, we show in the SM~\cite{SM} a scaling analysis for pure random states that show a significant gain in using importance sampling compared to uniform sampling, which is here however constant with $N$. 
Second, we can also use our protocol to probe  mixed, and highly entangled states, which are created via a quantum quench~\cite{Brydges2019}. Here, we consider a state modelling a  trapped-ion $10-$qubit experiment described in Ref.~\cite{Brydges2019}, which corresponds to the dynamics of a long-range $XY$ Hamiltonian~\footnote{The precise master equation is given in Ref.~\cite{Brydges2019}, the propagation time is $t=5$ ms}. This highly entangled state is characterized by a purity of $p_2\approx 0.62$, and a half-system purity of $p_2'\approx 0.16$, in agreement with the experimentally measured values~\cite{Brydges2019}. In order to mimic a situation when the decoherence parameters are unknown, we train our neural network on an ideal pure state, i.e., modelling the system without errors, and use it to estimate the purity of the mixed state $\rho$. The results are shown in Fig.~\ref{fig:ML_MPS}. While we see a clear improvement w.r.t uniform sampling, here importance sampling does not achieve the level of performance seen for GHZ states and product states. This is due to an imperfect training of the used convolutional neural network (CNN).
While the training can obviously be improved by changing the structure of the neural network, we propose now a `physics-motivated' complementary approach based on tensor networks, and which offers for this particular state an improvement over ML and provides a simplified approach to build $X_\mathrm{IS}$.

{\it Importance sampling from Matrix-Product-States--}
We illustrate how we can use approximate theory representation for importance sampling. Here, we consider 
Matrix-Product-States (MPS), which have been introduced to solve numerically condensed-matter problems~\cite{Schollwock2011}.
With $N$ qubits, MPS are wavefunctions of the form
\begin{equation}
    \ket{\psi_D}= \sum_{\substack{s_1,\dots,s_N\\ \ell_1,\dots, \ell_{N-1}}} [A_1]^{(\ell_1)}_{s_1}[A_2]^{(\ell_1,\ell_2)}_{s_2}\dots [A_N]^{(\ell_{N-1})}_{s_N} \ket{\mathbf{s}},
\end{equation}
with $\ket{\mathbf{s}}=\ket{s_1}\otimes \dots \ket{s_N}$, and 
where each `bond' index $\ell_i$ can take at most $D$ different values. A schematic of the sequence of $2,3$ leg tensor $A_i$ representing the MPS is shown in Fig.~\ref{fig:setup}. The bond dimension $D$ is the key parameter of a MPS, setting the maximum entanglement entropy $\propto \log(D)$ that can be captured by such state~\cite{Schollwock2011}. MPS are in particular relevant for approximating  ground states of a many-body Hamiltonians~\cite{Eisert2010,Schollwock2011}. The MPS framework thus appears as a `physically-inspired' approach to build an importance sampling function  $X_\mathrm{IS}(u)$, which complements the ML approach  (c.f., Fig.~\ref{fig:setup}.). The training of $X_\mathrm{IS}(u)$ here is straightforward: (i.1) Form via a MPS algorithm  an approximation $\ket{\psi_D}\bra{\psi_D}$ of the quantum state $\rho$~\cite{Schollwock2011}. (i.2) Build the function $X_\mathrm{IS}(u)$ with Eq.~\eqref{eq:RM}, by realizing projective measurements $(s_u^k)_{D}$ on the MPS. While this step can be realized efficiently~\cite{Han2018}, here we simply use the probabilities $[P_u(s)]_D$ to build $X_\mathrm{IS}(u)$.

As shown in Fig.~\ref{fig:ML_MPS}a), importance sampling with a MPS with $D=15$ already provides a reduction of statistical errors compared to our best neural network model, while the fidelity $\bra{\psi_D}\rho\ket{\psi_D}=0.7$ shows that this MPS is indeed only an approximation of $\rho$.
Here, $\ket{\psi_D}$ was formed by an algorithm that approximates the dynamics of a system with long-range interactions~\cite{Zaletel2015}, see also Ref.~\cite{Brydges2019}. When using MPS importance sampling, an interesting trade-off appears in terms of required classical versus quantum hardware to measure entanglement: MPS with increasing bond dimensions require more  classical resources, but are more performant for importance sampling. This is shown in  Fig.~\ref{fig:ML_MPS}b), where the statistical error is represented as a function of $D$. {As shown in the SM~\cite{SM}, we can draw the same conclusions when considering subsystems of $5$ and $10$ qubits being part of a $10$ and $20$ qubit system, respectively.}

{\it Conclusion---}
Importance sampling boosts the power of randomized measurements protocols, allowing for measuring more efficiently purities and second R\'enyi entropies. Our approach is immediately applicable in all randomized measurement protocols, e.g.  to measure scrambling~\cite{Vermersch2019}, topological invariants~\cite{Elben2019,Cian2020} , and fidelities~\cite{Flammia2011,DaSilva2011,Elben2020a}.

We have studied how the investment of classical resources for building an importance sampling function `pays off' in terms of statistical errors.
We believe that further studies extending our scaling analysis can help us to answer this conceptual question, but also to again push the limits of randomized measurements.

Finally, as an extension of our protocol, it would be interesting to consider an  adaptive measurement scheme, where the distribution $p_\mathrm{IS}$ is iteratively adapted based on prior measurements.

\begin{acknowledgments}
We thank A. Minguzzi, C. Branciard, M. Dalmonte for fruitful discussions, and comments on the manuscript.
AR is supported by Laboratoire d'excellence LANEF in Grenoble (ANR-10-LABX-51-01) and from the Grenoble Nanoscience Foundation.
BV acknowledges funding from the Austrian Science Fundation (FWF, P 32597 N), and the French National Research Agency (ANR-20-CE47-0005, JCJC project QRand). Work at Innsbruck is supported by the European Union program Horizon 2020 under Grants Agreement No.~817482 (PASQuanS) and No.~731473 (QuantERA via QTFLAG), the US Air Force Office of Scientific Research (AFOSR) via IOE Grant No.~FA9550-19-1-7044 LASCEM, by the Simons Collaboration on Ultra-Quantum Matter, which is a grant from the Simons Foundation (651440, PZ), and by the Institut f\"ur Quanteninformation. A.E.\ acknowledges funding by the German National Academy of Sciences Leopoldina under the grant number LPDS 2021-02. 
We used ML routines of TensorFlow-Keras, ITensor MPS algorithms~\cite{Fishman}, and the quantum toolbox QuTiP~\cite{Johansson2013}.
\end{acknowledgments}


%

\clearpage

\maketitle
\section{Appendix A: Parametrizing local random unitaries}
In this section we discuss how the local unitaries can be parametrized in terms of two angles. The local random unitaries used in this protocol are distributed by the Haar measure and belong to the CUE. A single qubit random rotation $u_i$ with $i = 1,...,N$ can be defined as:
\begin{equation}
 u_i =\begin{bmatrix} 
\cos{\phi_i} \: e^{i\alpha_i} & \sin{\phi_i}\:e^{i\psi_i} \\
 -\sin{\phi_i} \: e^{-i\psi_i}& \cos{\phi_i}\: e^{-i\alpha_i} \\
\end{bmatrix}\label{eq:huw}
\quad
\end{equation}
where $\phi_i \in [0\; , \pi/2]$ ; $\alpha_i \;\&\; \psi_i \in [0\;,2\pi]$ with the Haar measure given as follows \cite{Diaconis2015}
\begin{equation}
    du_i = 2\cos{\phi_i}\sin{\phi_i} \:d\phi_i \:d\alpha_i\: d\psi_i
\end{equation}
The measure can be rewritten by defining $\sin^2{\phi_i} = \xi_i$ and leads to
\begin{equation}
    du_i = d\xi_i \:d\alpha_i\: d\psi_i
\end{equation}
where $\xi_i \in [0\; , 1]$. The same local random unitary $u_i$ can be experimentally realized by combining random rotations along $Y$ and $Z$ axes of the Bloch sphere and one equally writes \cite{Brydges2019}
\begin{equation}
    u_i = R_z(\gamma_i)R_y(\theta_i)R_z(\varphi_i)\label{eq:rot}
\end{equation}
where $R_{\beta}(\theta) = e^{-i\sigma^{\beta}\theta/2}$, $\sigma^{\beta}$ with $\beta = {y,\,z}$ are the Pauli matrices and $\theta \in [0\; , 2\pi]$ is the random rotation angle. Equating the matrix elements of Eq.~\eqref{eq:huw} and Eq.~\eqref{eq:rot} gives the relation between the parametrized unitary angles in function of the rotation angles and its corresponding distribution measures
\begin{align}
\begin{cases}  \xi_i = \sin^2{\theta_i/2}\\ \psi_i = (\varphi_i - \gamma_i)/2 \\ \alpha_i = -(\varphi_i + \gamma_i)/2 
\end{cases}
\hskip -1ex
\implies
\begin{cases}  d\xi_i = \sin({\frac{\theta_i}{2}})\cos({\frac{\theta_i}{2}})\,d\theta_i\\ d\psi_i = (d\varphi_i - d\gamma_i)/2 \\ d\alpha_i = -(d\varphi_i + d\gamma_i)/2 \label{eq:measure}
\end{cases}
\end{align}

As the measurement of each qubit is finally performed in the computational basis, the last $R_z$ rotation of $u_i$ can be dropped by taking $\gamma_i = 0$. From the Eq.~\eqref{eq:measure}, sampling $\varphi_i$ uniformly in $[0,\, 2\pi]$ leads to $\alpha_i$ and $\psi_i$ being distributed uniformly and conversely sampling $\xi_i$ uniformly in $[0,\, 1]$ leads to a uniform sampling of $\theta_i$. To realize unitaries sampled from the Haar measure, it is sufficient to randomly sample: $\xi_i$ relating to the $Y$ rotation $R_y(\theta_i)$ and $\varphi_i$ connecting to the local unitary angles $\psi_i$ and $\alpha_i$. Thus finally, each single qubit random unitary $u_i$ is parametrized by $R_y(\theta_i)R_z(\varphi_i)$ to sample from the Haar measure.
\section{Appendix B: Bounds on $X(u)$}
In this section, we derive the bounds
\begin{equation}
    \frac{1}{2^N}\le X(u) \le 2^N.
\end{equation}
Our starting point consists in rewriting the function $X(u)$ as
\begin{eqnarray}
X(u) &=& \frac{1}{2^N}
\mathrm{Tr}
\left(
\bigotimes_{i=1}^N
\left[
1\otimes 1+3\sigma^z_i \otimes \sigma^z_i
\right]
(u\rho u^\dag \otimes u \rho u^\dag)
\right)
\nonumber \\ 
&=&\frac{1}{2^N}
\sum_A
3^{|A|} 
\langle \sigma^z_A \rangle^2, 
\end{eqnarray}
with $\langle \sigma^z_A \rangle=\mathrm{Tr}(\sigma^z_A u \rho u^\dag)$. The summation involves all qubit partitions $A=(i_1,\dots, i_{|A|})$, and we defined the Pauli string $\sigma^z_A=\bigotimes_{i\in A} \sigma^z_i$. The first line of the above equation can be proven by expanding the trace operation in the computational basis $\ket{s}\otimes \ket{s'}$.

We can now infer bounds on $X(u)$, using the relation $0\le \langle \sigma^z_A \rangle^2\le 1$. First, we get a lower bound as
\begin{equation}
X(u) = \frac{1}{2^N}
\left(1+
\sum_{A\neq\emptyset}
3^{|A|} 
\langle \sigma^z_A \rangle^2
\right) 
\ge \frac{1}{2^N}.
\end{equation}
The upper bound is obtained using 
\begin{equation}
\sum_A
3^{|A|} = \sum_{k=0}^N \binom{N}{k} 3^k = 4^N, 
\end{equation}
which leads to 
\begin{equation}
X(u) \le  \frac{1}{2^N}
\sum_A
3^{|A|} = 2^N.
\end{equation}

Note that the lower bound is saturated by an identity density matrix $\rho=I/2^N$, while the upper bound is saturated by a product state $\rho = \ket{\psi_u}\bra{\psi_u}$, with $\ket{\psi_u}=u^\dag \ket{0\otimes \dots \otimes 0}$.

\section{Appendix C: Machine Learning}
In this section we elaborate the details of the training procedure followed to obtain trained neural network importance sampler models $X_\mathrm{IS}(u)$ used in step (i) of our protocol. Training of neural network (NN) models consists in fine tuning different parameters in the available optimized Machine Learning (ML) algorithms to make accurate predictions using multivariate regression analysis. We elaborate here on the parameters that were adjusted to achieve well trained models for the different quantum states that we tested. To reduce the variability during training with all the different available parameters, we choose to fix some parameters beforehand for all the training. The regression loss function was taken to be the Mean Absolute Error (MAE) with the activation function on each neuron fixed to `Relu'. The `Epoch' which represents the number of times the entire dataset passes through the NN architecture was taken to be 500 with a fixed learning rate of 0.001.

The important parameters that were adjusted during the course of the training for various states of different system sizes $N$ can be summarized by (i) the number of hidden layers $N_\mathrm{layers}$, (ii) number of neurons in each layer $N_\mathrm{neurons}$ and (iii) the number of training samples $N_\mathrm{samples}$. Each training sample consisted of $2N$ inputs ($\xi_i$ and $\varphi_i$ for each qubit) and one output (corresponding $X$ function defined by the state of interest $\rho$). During the training phase, the $N_\mathrm{samples}$ were split into 2 separate sets: a training set on which the model learned the features of the target function and the test set, to characterize the ability of the trained model to generalize. The goal of the training was to obtain an efficient compressed model capable of fitting a function $X_\mathrm{IS}(u)$ optimally to the target function $X(u)$ for any given set of $2N$ angles of $u$. 

The product states and the GHZ states were trained by Deep Neural Network (DNN) models using $N_\mathrm{layers}$ = 3 with decreasing $N_\mathrm{neurons}$ per layers (complexity of the NN decreasing from the input layer to the output layer) that varied for the different states. The last highly entangled quantum simulation state was trained using a Convolutional Neural Network (CNN) highlighting some quasi-translational invariant features with a $N_\mathrm{layers} = 2$ configuration. The right combination of $N_\mathrm{neurons}$ for each hidden layer was fine tuned to achieve the lowest possible fitting error $\mathcal{E}$ (MAE) associated with the target function $X(u)$ and the obtained fit for the corresponding output $X_\mathrm{IS}(u)$. This step is crucial as there exists a right combination of $N_\mathrm{neurons}$ that provides the adequate NN model without overfitting the data (model losing the ability to generalize to samples outside the training set).
\begin{figure}[t]
\begin{minipage}[b]{0.5\linewidth}
\centering
\includegraphics[width=\textwidth]{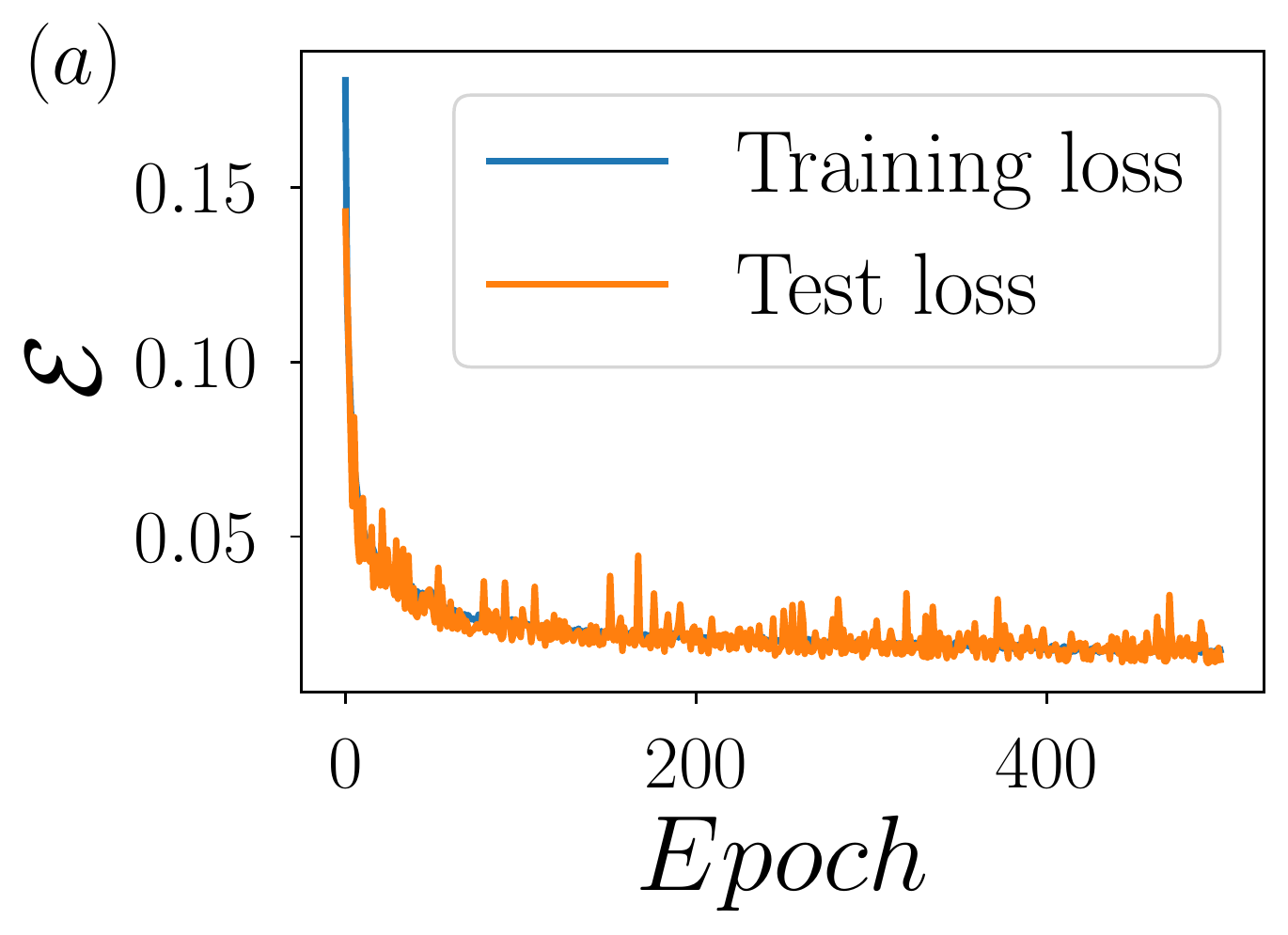}
\end{minipage}
\hskip -1ex
\begin{minipage}[b]{0.5\linewidth}
\centering
\includegraphics[width=\textwidth]{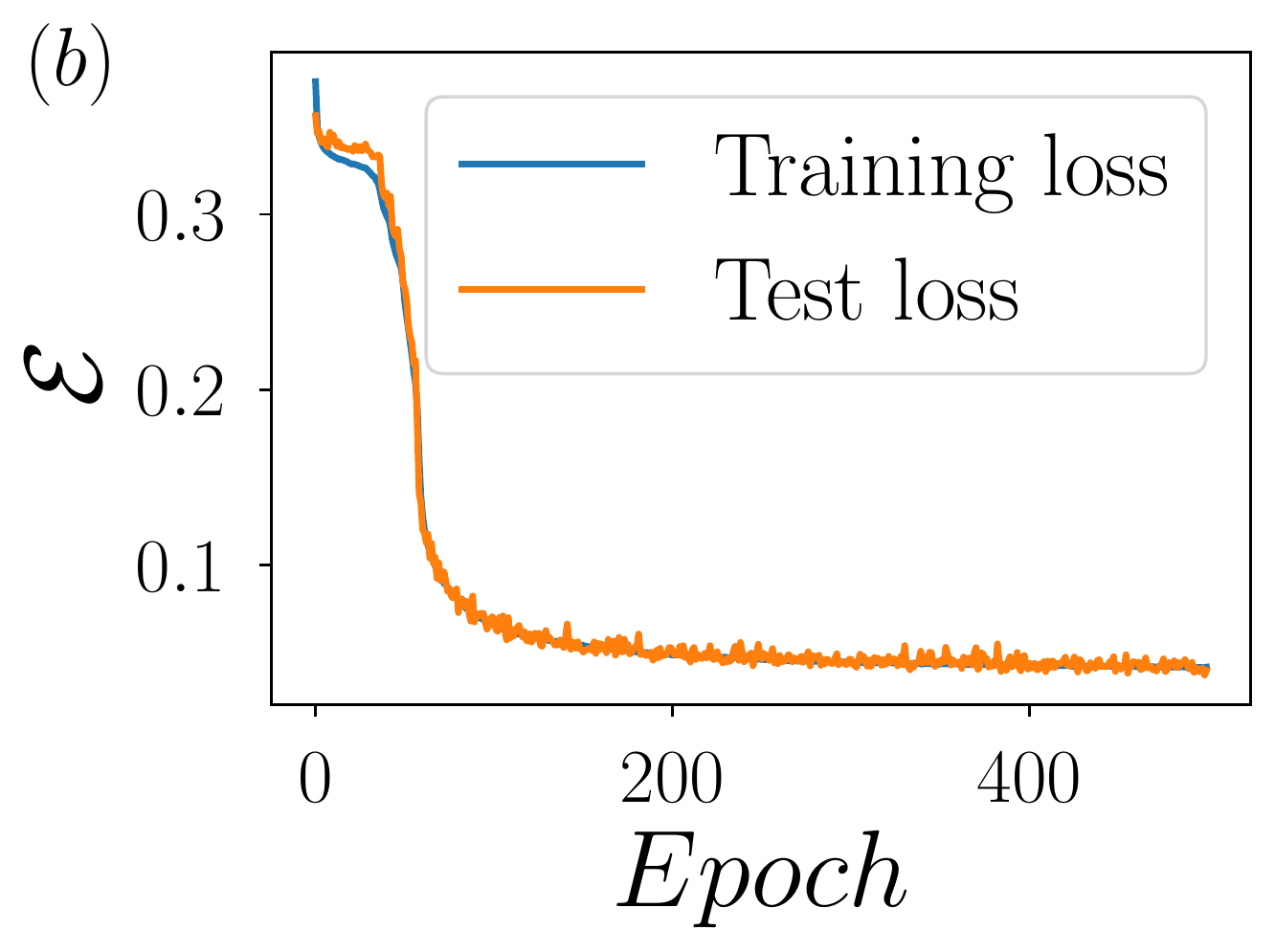}
\end{minipage}
\caption{{\it Training history for the DNN models.} An illustration of the fitting error ($\mathcal{E}$) vs Epoch for two different target functions, panel (a) for a 10-qubit product state and panel (b) for 5-qubit GHZ state. The blue curve indicates the fitting error for the training set while the orange curve shows for the test set. A superposition of the two curves highlights a good training without overfitting.\label{fig:val_loss}}
\end{figure}

For the product states, we targeted and achieved $ \mathcal{E} \leq 5\%$ for all the system sizes that we considered (an example in Fig.~\ref{fig:val_loss}a) by taking $N_\mathrm{samples} = 100000$ with $N_1 = 200$, $N_2 = 100$ and $N_3 = 10$ with $N_i$ being the $N_\mathrm{neurons}$ for each of the hidden layers of the DNN. On the other hand for GHZ states, the training cost of the model increased with the system size in order to learn the inherent non-trivial correlations of the target function. For example in Fig.~\ref{fig:val_loss}b, the 5 qubit GHZ state was trained to achieve $ \mathcal{E} \backsim 5\%$ by taking $N_\mathrm{samples} = 200000$ with $N_1 = 256$, $N_2 = 64$ and $N_3 = 16$. The training of the quantum simulation state was done on the pure MPS representation $\ket{\psi_D}$ of the same with a bond dimension $D = 32$. It was trained with $N_\mathrm{samples} = 500000$ and consisted a 2D-convolutional layer of $N_1 = 250$ with a kernel size of 2 by 2, followed by 2 dense hidden layers with $N_2 = 50$ and $N_3 = 5$ respectively. The training resulted in a fitting error $ \mathcal{E} \sim 10\%$.

It is important to note that the error $\mathcal{E}$ associated to the training does not lead to systematic errors in the estimations of the concerned quantities ($p_2$ and $S_2$) in the estimation phase of the protocol. We only take advantage here of the trained model to provide us with an approximate importance sampler in the form of $X_\mathrm{IS}(u)$. In general, all the training parameters and the overall NN architecture can be further improved to obtain efficient training and higher accurate fits to the target function. The classical cost of the training varies with an increase of the input parameters (in function of the number of qubits) and the state dependent target function $X(u)$.
\section{Appendix D: Metropolis sampling}
Once we obtain a faithful importance sampler $X_\mathrm{IS}(u)$, we perform an importance sampling of the unitaries $u$ using the Metropolis Algorithm (MA)~\cite{NRecipes2007}. The algorithm has the target distribution function $X_\mathrm{IS}(u)$ with a uniform proposal distribution to pick each candidate unitary $u^{(r)}$ defined by its $2N$ angles. For our algorithm, we had an acceptance rate $\alpha \sim 53\%$ for product states and $\alpha \sim 65\%$ for GHZ states with a burn-in period of typically 50 samples which could be adjusted depending on the concerned system size $N$. Through this acceptance-rejection method, the algorithm collects a total of $N_s$ samples of unitaries $u$ in which we have $N_u$ distinct ones. The final estimation of $p_2$ is made by the importance sampling Monte Carlo integration where we also take into account the number of occurrence $n^{(r)}$ of each unitary $u^{(r)}$ in our $N_u$ distinct samples. This expression is given as
\begin{equation}
    [p_2]_\mathrm{IS} = \frac{ 1}{N_s}
    \sum_{r = 1}^{N_u} \frac{n^{(r)} X_e(u^{(r)}) }{p_\mathrm{IS}(u^{(r)})}, 
\end{equation}
with $N_s=\sum_r n^{(r)}$.
In general, there is scope to further improve the Metropolis algorithm by implementing a more advanced Hamiltonian Monte Carlo (HMC) Metropolis algorithm.
\section{Appendix E: Analytics of importance sampling}
Statistical errors in our protocols for purity estimation arise from a finite number $N_u$ of local random unitaries $u=\bigotimes_{i=1}^N u_i$ and a finite number $N_M$ of single-shot measurements per unitary.  In this appendix, we derive analytical expressions for the variances of the  estimator $\hat{X}(u)$ for finite $N_u$ and $N_M$. 
We consider two limiting cases, (i) a uniform sampler $X_{\text{IS}}(u)=1$ where the single qubit unitaries $u_i$ a sampled uniformly and independently from the Haar measure $\text{d}u_i$ ($i=1,\dots, N$) and (ii) a perfect sampler where the importance sampling distribution $X_{\text{IS}}(u)=X(u)$ is given by the function $X(u)$ whose integral $p_2=\int \du X(u)$  we aim to estimate (here $\text{d}u=\prod_i\text{d}u_i $). 
We note that in the latter case (ii) statistical errors still arise from shot-noise due to a finite number $N_M$ of single-shot measurements.

We first recall the data taking procedure and estimators. We assume that an approximation $X_{\text{IS}}(u)$ of the function $X(u)$  whose integral $p_2=\int \du X(u)$ we aim to estimate has been obtained. In the case without prior knowledge, we simply take $X_{\text{IS}}(u)\equiv 1$ for all $u$ (uniform sampler). This defines the importance sampling distribution $p_{\text{IS}}(u)=X_{\text{IS}}(u)/ \int \du X_{\text{IS}}(u)$. From this distribution, we sample independently $N_u$ local random unitaries $u^{(r)}$ ($r=1,\dots,N_u$), apply them to the quantum state of interest $\rho$,  and perform $N_M$ computational basis measurements per unitary. An  unbiased estimator $\hat{X}$ of the purity $ p_2=\trAE{\rho^2}$ is then constructed in two steps: 

First, using the $N_M$ observed bitstrings $s^{(r)}_m$  ($m=1,\dots,N_M$) after the application of unitary $u^{(r)}$,  we define 
\begin{align}
\hat{X}(u^{(r)}) &=\binom{N_M}{2}^{-1} \sum_{\substack{m,m'=1 \\ m> m'}}^{N_M} (-2) ^{-D[s_m^{(r)},s^{(r)}_{m'}]}  \\ 
&=\binom{N_M}{2}^{-1}   \sum_{\substack{m,m'=1 \\ m> m'}}^{N_M}  \trAE{ A \; \hat{s}_m^{(r)}\otimes \hat{s}_{m'}^{(r)}}\;
\end{align}
with $\hat{s}_m^{(r)}=\ketbra{s_m^{(r)}}{s_m^{(r)}}$ denoting the projector to the computational basis state  corresponding to the bitstring $s_m^{(r)}$ and the $2$-copy operator $A$ is defined  as
\begin{align}
    A = 2^N \sum_{s,s'} (-2)^{-D[{s},{s}']} \ketbra{s}{s} \otimes \ketbra{s'}{s'} .
\end{align}
We note that $\hat{X}(u^{(r)})$ is the precisely  U-statistic~\cite{Hoeffding1992} for the  two-copy expectation value $\trAE{A (u^{(r)}\rho u^{(r)\dagger})^{\otimes 2}}$. Averaging over many computational basis measurements, we find thus for a fixed unitary $u^{(r)}$
\begin{align}
    \expect[\text{QM}]{ \hat{X}(u^{(r)})} &=\trAE{A (u^{(r)}\rho u^{(r)\dagger})^{\otimes 2}} = X({u^{(r)}}) \; .
\end{align}

Secondly, taking the outcomes for all unitaries $r=1,\dots,N_u$ together, we define  the estimator 
\begin{align}
\hat{X} = \frac{1}{N_u} \sum_{r=1}^{N_u} \frac{ \hat{X}(u^{(r)}) }{p_{\text{IS}}(u^{(r)})} .
\end{align}
As shown in Refs.~\cite{Elben2018,Elben2018a,Brydges2019},  $\hat{X}$ is an unbiased estimator of the purity $\trAE{\rho^2}$, i.e.\ 
\begin{align}
\expect{ \hat{X} }   &\equiv \expect[u]{ \expect[\text{QM}]{ \hat{X}}} \nonumber = \expect[u]{\frac{ X(u)} {p_{\text{IS}}(u)}} \nonumber \\ &= \int \du {X}(u) =    \trAE{\rho^2}.
\end{align}
Here, $\expect[u]{f(u)}\equiv \int \du p_{\text{IS}}(u) f(u)$ denotes the average over the importance sampling distribution $p_{\text{IS}}(u) \text{d}u$, with $\text{d}u=\prod_i \text{d}u_i$ and $\text{d}u_i$ the Haar measure  on the unitary group $U(2)$. 
In the case of $p_{\text{IS}}(u)=1$ (uniform sampler), $\mathbb{E}_u$ is thus simply the uniform average over  local random unitaries of the form $u=\bigotimes u_i$.

Our aim is to calculate the variance of $\hat{X}$ for finite $N_u$ and $N_M$, governing the statistical errors in our protocol. Summarizing our findings, we find:
\newtheorem{prop}{Proposition}
\begin{prop}
The variance of $\hat{X}$ is given by
\begin{align}
\var \left[\hat{X}\right]=& \frac{1}{N_u} \left(\frac{(N_M-3)(N_M-2)}{N_M (N_M-1)} \; \Gamma_4  + \frac{4 (N_M-2)}{N_M (N_M-1)}\; \Gamma_3    \right. \nonumber \\ &\qquad \quad +\left. \frac{2}{N_M (N_M-1)}\; \Gamma_2 - \trAE{\rho^2}^2 \right) \;. \label{varr}
\end{align}
Here, the coefficients $\Gamma_k$ are given by
\begin{align}
    \Gamma_k &= \mathbb{E}_u\left[ \frac{\trAE{ A_k  (u\rho u^\dag)^{\otimes k}}} {X_{\textnormal{IS}}^2(u)} \right]
\end{align}
where the $k$-copy operators $A_k$  are defined as
\begin{align}
A_4&= A \otimes A \label{eq:A4}\\
A_3& = (\mathbb{1}\otimes A)( {A}\otimes \mathbb{1}) \label{eq:A3} \\ 
A_2 &= A^2 \label{eq:A2}
\end{align}
with $A = 2^N \sum_{s,s'} (-2)^{-D[{s},{s}']} \ketbra{s}{s} \otimes \ketbra{s'}{s'}$. 
\end{prop}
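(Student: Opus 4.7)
Since the $u^{(r)}$ are drawn i.i.d.\ from $p_\mathrm{IS}$ and the measurement outcomes across different $r$ are conditionally independent, the $N_u$ summands $\hat{X}(u^{(r)})/p_\mathrm{IS}(u^{(r)})$ in $\hat{X}$ are i.i.d.\ with common mean $p_2$ (already established just above the proposition), so
\begin{equation*}
\var\!\left[\hat{X}\right] = \frac{1}{N_u}\!\left(\expect[u]{\frac{\expect[\text{QM}]{\hat{X}(u)^2}}{p_\mathrm{IS}^2(u)}} - p_2^2\right),
\end{equation*}
and the problem reduces to computing $\expect[\text{QM}]{\hat{X}(u)^2}$ for a single fixed $u$; the $u$-average and the identification with $\Gamma_k$ is then a one-line step.

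\textbf{Expanding the square.} Squaring the U-statistic and using the identity $\mathrm{Tr}[A\,M_1\otimes M_2]\,\mathrm{Tr}[A\,M_3\otimes M_4]=\mathrm{Tr}[A_4(M_1\otimes M_2\otimes M_3\otimes M_4)]$ with $A_4=A\otimes A$, one obtains
\begin{equation*}
\hat{X}(u)^2=\binom{N_M}{2}^{-2}\!\!\!\sum_{\substack{m_1>m_1'\\ m_2>m_2'}}\!\!\mathrm{Tr}\!\left[A_4\,\hat{s}_{m_1}\otimes\hat{s}_{m_1'}\otimes\hat{s}_{m_2}\otimes\hat{s}_{m_2'}\right].
\end{equation*}
I would then split this sum according to the number $k\in\{2,3,4\}$ of distinct indices among $(m_1,m_1',m_2,m_2')$. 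The central remark is that $A$, and hence $A_4$, is diagonal in the computational basis, so when averaging independent $\hat{s}_m=|s_m\rangle\langle s_m|$ one may replace $\expect[\text{QM}]{\hat{s}_m}=\sum_s P_u(s)|s\rangle\langle s|$ by $u\rho u^{\dagger}$ under the trace against $A_4$.

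\textbf{Case analysis and identification of $A_k$.} A short count under the constraints $m_i>m_i'$ gives $6\binom{N_M}{4}$ four-distinct tuples, $6\binom{N_M}{3}$ three-distinct tuples (the repeat may occupy any of four position-pairs, yielding $3\cdot2$ configurations per $3$-subset), and $\binom{N_M}{2}$ two-distinct tuples forced to $(m_1,m_1')=(m_2,m_2')$; these sum to $\binom{N_M}{2}^2$ as a sanity check. For the 4-distinct class the expectation factorizes into independent bitstring averages and yields $\mathrm{Tr}[A_4(u\rho u^{\dagger})^{\otimes 4}]$. For the 2-distinct class, $\expect[\text{QM}]{\hat{s}_m\otimes\hat{s}_{m'}\otimes\hat{s}_m\otimes\hat{s}_{m'}}$ introduces a double Hamming-distance exponent $-2D[s,s']$, so the contraction with $A_4$ equals $\mathrm{Tr}[A^2(u\rho u^{\dagger})^{\otimes 2}]=\mathrm{Tr}[A_2(u\rho u^{\dagger})^{\otimes 2}]$. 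For each of the four 3-distinct subcases I would use the scalar form $2^{2N}\expect[\text{QM}]{(-2)^{-D[s_{m_1},s_{m_1'}]-D[s_{m_2},s_{m_2'}]}}$ and a relabeling of the three dummy integrations to identify the common value with $\mathrm{Tr}[A_3(u\rho u^{\dagger})^{\otimes 3}]$, where $A_3=(\mathbb{1}\otimes A)(A\otimes\mathbb{1})$ is precisely the diagonal operator that sandwiches the shared bitstring between the two Hamming-distance factors.

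\textbf{Assembly.} Dividing by $\binom{N_M}{2}^2$ converts the three tuple-counts into the rational prefactors $(N_M-2)(N_M-3)/[N_M(N_M-1)]$, $4(N_M-2)/[N_M(N_M-1)]$, and $2/[N_M(N_M-1)]$ that appear in the proposition. Averaging each $\mathrm{Tr}[A_k(u\rho u^{\dagger})^{\otimes k}]/p_\mathrm{IS}^2(u)$ over $u\sim p_\mathrm{IS}$ yields the $\Gamma_k$, and the $-p_2^2$ and $1/N_u$ prefactors are inherited from the first step. I expect the main obstacle to be the 3-distinct bookkeeping: one has to verify that all four position-sharing patterns yield the same scalar, and that it matches the asymmetric operator $A_3=(\mathbb{1}\otimes A)(A\otimes\mathbb{1})$ rather than some other contraction of $A\otimes A$. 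Everything else is combinatorial counting and the diagonal-against-diagonal identity $\mathrm{Tr}[A\,(\expect[\text{QM}]{\hat{s}_m})\otimes M]=\mathrm{Tr}[A\,(u\rho u^{\dagger})\otimes M]$.
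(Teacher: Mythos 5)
Your proposal is correct and follows essentially the same route as the paper's proof: reduce to the single-unitary variance using the i.i.d.\ structure, expand the square of the U-statistic against $A_4=A\otimes A$, classify the index tuples by the number of coincidences (with counts $6\binom{N_M}{4}$, $6\binom{N_M}{3}$, $\binom{N_M}{2}$, exactly as in the paper), and identify the three classes with $\Gamma_4$, $\Gamma_3$, $\Gamma_2$. The 3-coincidence bookkeeping you flag as the main obstacle indeed works out because $A$ is diagonal and symmetric under swapping its two factors, so all position-sharing patterns give the same scalar, matching $A_3=(\mathbb{1}\otimes A)(A\otimes\mathbb{1})$.
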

\begin{proof}
We first note that $\var \left[\hat{X}\right] = 1/N_u\var\left[ \hat{X}(u^{(r)})\right] $ for any $r=1,\dots N_u$ due to independence and identical distribution of the sampled local random unitaries and of the separate single shot quantum measurements  (Born's rule). For simplicity of notation, we suppress thus in the following the index $r$.   Secondly, we have that $\var \hat{X}(u)= \mathbb{E} \left[  \hat{X}^2(u) \right] - \mathbb{E} \left[  \hat{X}(u) \right]^2$.  As stated above, we have  $\mathbb{E} \left[  \hat{X}(u) \right]=\trAE{\rho^2}$ and thus we concentrate in the following on the non-trivial first term. We find
\begin{align}
&\bbe{\frac{\hat{X}^2(u)}{X_{\textnormal{IS}}^2(u)}}\binom{N_M}{2}^{2}  \label{eq:profprop1_1} \\ &=  \! \! \! \sum_{\substack{m> n \\ m' > n'}} \! \! \bbe{  \trAE{ A^{\otimes 2}  \; \ps{m}\otimes \ps{n} \otimes  \ps{m'}\otimes \ps{n'}}X_{\textnormal{IS}}^{-2}(u) } \nonumber \\
&= \! \! \! \sum_{\substack{m> n \\ m' > n'}} \! \!  \expect[u]{ \expect[\text{QM}]{ \trAE{ A^{\otimes 2}  \; \ps{m}\otimes \ps{n} \otimes  \ps{m'}\otimes \ps{n'}}}X_{\textnormal{IS}}^{-2}(u) }\!. \nonumber
\end{align}
Magnitude and type of the different terms in this sum depend on how many indices in the expression $\expect[\text{QM}]{ \trAE{ A^{\otimes 2}  \; \ps{m}\otimes \ps{n} \otimes  \ps{m'}\otimes \ps{n'}}}$ coincide. We distinguish the following possibilities:
\begin{itemize}
    \item All indices are pairwise distinct, i.e.\ $m\neq m'$ and $n\neq n'$.  In this case, the expectation value $\mathbb{E}_{\text{QM}}$ completely factorizes, yielding a contribution
    \begin{align*}
    &\expect[\text{QM}]{ \trAE{ A^{\otimes 2}  \; \ps{m}\otimes \ps{n} \otimes  \ps{m'}\otimes \ps{n'}}} \\
    &\quad=\trAE{ A^{\otimes 2}  (u\rho u^\dag)^{\otimes 4}}.
    \end{align*}
    which is fourth order in the density matrix $\rho$.
    In total, there are $\binom{N_M}{4} \binom{4}{0} \binom{4}{2}$ such terms.
    \item Exactly two indices coincide, e.g.\ $m=m'$ and $n\neq n'$.  In this case,  we obtain a third order contribution
    \begin{align*}
    &\expect[\text{QM}]{ \trAE{ A^{\otimes 2}  \; \ps{m}\otimes \ps{n} \otimes  \ps{m}\otimes \ps{n'}}} \\
    &\quad=\trAE{ A_3  (u\rho u^\dag)^{\otimes 3}}
    \end{align*}
    with $A_3=(\mathbb{1}\otimes A)( {A}\otimes \mathbb{1})$.
    In total, there are $\binom{N_M}{3} \binom{3}{1} \binom{2}{1}$ such terms.
    \item Two pairs  of indices coincide,  e.q.\ $m= m'$ and $n= n'$.  In this case,  we obtain a second  order contribution
    \begin{align*}
    &\expect[\text{QM}]{ \trAE{ A^{\otimes 2}  \; \ps{m}\otimes \ps{n} \otimes  \ps{m}\otimes \ps{n}}} \\
    &\quad=  \trAE{ A^2  (u\rho u^\dag)^{\otimes 2}}\; .
    \end{align*}
    In total, there are $\binom{N_M}{2} \binom{2}{2} \binom{0}{0}$ such terms.
\end{itemize}
Inserting into Eq.~\eqref{eq:profprop1_1} and summing all terms up, we obtain 
\begin{align*}
&\bbe{\frac{\hat{X}^2(u)}{X_{\textnormal{IS}}^2(u)}}=  \frac{(N_M-3)(N_M-2)}{N_M (N_M-1)} \; \Gamma_4  + \frac{4 (N_M-2)}{N_M (N_M-1)}\; \Gamma_3     \nonumber\\ &\qquad \quad + \frac{2}{N_M (N_M-1)}\; \Gamma_2 \;. 
\end{align*}
Using $\var \left[\hat{X} \right] = 1/N_u\var\left[ \hat{X}(u)\right] $ yields the claim.
\end{proof}

\newtheorem{lem}{Lemma}

\begin{lem}[Uniform sampling]
Suppose $\rho$ is pure product state  of $N$ qubits \footnote{Using Weingarten calculus and techniques presented in Ref.~\cite{Zhou2020}, we can generalize the variance for uniform sampling to pure product states of $N$ qudits with arbitrary local dimension $d$ (i.e.\ including the case of global random unitaries). We find
$$
\Gamma_4=\left(\frac{d^2+9 d+2}{d^2+5 d+6} \right)^N,
\Gamma_3=\left(\frac{3d}{2+d}\right)^N,
\Gamma_2=\left(2d-1\right)^N .
$$
} and assume that the local random unitaries $u_i$ are uniformly sampled from the Haar measure, i.e. $X_{\text{IS}}(u)=1$.  Then, we find 
\begin{align}
    \Gamma_4 &=\left(\frac{6}{5}\right)^N
    \\
    \Gamma_3  &=\left(\frac{3}{2}\right)^N
     \\
  \Gamma_2 &=3^N \; .
\end{align} 
\end{lem}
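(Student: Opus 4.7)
The crucial observation is that every object appearing in $\Gamma_k$ — the two-copy operator $A$, the $k$-copy operators $A_k$, the pure product state $\rho$, and the uniform Haar measure on $u=\bigotimes_i u_i$ — factorizes across qubit sites. Consequently $\Gamma_k$ will reduce to the $N$-th power of a single-qubit integral $\gamma_k$, and I only need to compute $\gamma_2,\gamma_3,\gamma_4$ on the Bloch sphere.

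First, I would invoke the identity proved in Appendix~B,
\[
A \;=\; \bigotimes_{i=1}^N \tfrac{1}{2}\bigl(\mathbb{1}\otimes\mathbb{1}+3\,\sigma^z_i\otimes\sigma^z_i\bigr),
\]
and reorder the tensor product from $\mathcal{H}^{\otimes 2}=\bigotimes_i(\mathcal{H}_i^{\otimes 2})$ so that $A=\bigotimes_i a$ with identical single-site factor $a=\tfrac{1}{2}(\mathbb{1}\otimes\mathbb{1}+3\sigma^z\otimes\sigma^z)$. The $k$-copy operators $A_k$ are built from $A$ purely by tensor products and compositions on fixed copy slots, so the same reordering yields $A_k=\bigotimes_i a_k$ with $a_k$ the single-site analogue of $A_k$. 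Writing $\rho=\bigotimes_i\ket{\psi_i}\bra{\psi_i}$ and using independence of the local Haar unitaries together with Haar invariance (which lets me replace each $\ket{\psi_i}$ by $\ket{0}$), the integral splits into $N$ identical factors:
\[
\Gamma_k \;=\; \gamma_k^{\,N}, \quad \gamma_k \;=\; \mathbb{E}_{u\in U(2)}\!\bigl[\mathrm{Tr}\bigl(a_k\,(u\ket{0}\bra{0}u^\dagger)^{\otimes k}\bigr)\bigr].
\]

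Second, I would expand each $a_k$ by multiplying out, using $(\sigma^z)^2=\mathbb{1}$. Direct calculation gives
\[
a_2 \;=\; \tfrac{5}{2}\,\mathbb{1}+\tfrac{3}{2}\,\sigma^z\!\otimes\sigma^z,
\]
\[
a_3 \;=\; \tfrac{1}{4}\bigl[\mathbb{1}+3\,\sigma^z_{1}\sigma^z_{2}+3\,\sigma^z_{2}\sigma^z_{3}+9\,\sigma^z_{1}\sigma^z_{3}\bigr],
\]
\[
a_4 \;=\; \tfrac{1}{4}\bigl[\mathbb{1}+3\,\sigma^z_{1}\sigma^z_{2}+3\,\sigma^z_{3}\sigma^z_{4}+9\,\sigma^z_{1}\sigma^z_{2}\sigma^z_{3}\sigma^z_{4}\bigr],
\]
where subscripts label the copies of the single qubit. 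Taking the trace against $\sigma^{\otimes k}$ with $\sigma=u\ket{0}\bra{0}u^\dagger$, any Pauli-$z$ string supported on $2m$ copies contributes $\langle\sigma^z\rangle^{2m}$ with $\langle\sigma^z\rangle\equiv\mathrm{Tr}(\sigma^z\sigma)$, and the remaining copies contribute factors $\mathrm{Tr}(\sigma)=1$.

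Third, a Haar-random pure qubit state has Bloch vector uniform on $S^2$, so $\langle\sigma^z\rangle$ is uniformly distributed on $[-1,1]$ and $\mathbb{E}\bigl[\langle\sigma^z\rangle^{2m}\bigr]=1/(2m+1)$. Plugging in $\mathbb{E}[\langle\sigma^z\rangle^2]=1/3$ and $\mathbb{E}[\langle\sigma^z\rangle^4]=1/5$ yields $\gamma_2=\tfrac{5}{2}+\tfrac{3}{2}\cdot\tfrac{1}{3}=3$, $\gamma_3=\tfrac{1}{4}(1+3\cdot\tfrac{1}{3}+3\cdot\tfrac{1}{3}+9\cdot\tfrac{1}{3})=\tfrac{3}{2}$, and $\gamma_4=\tfrac{1}{4}(1+1+1+9\cdot\tfrac{1}{5})=\tfrac{6}{5}$. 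Raising to the $N$-th power proves the three stated formulas. The main obstacle is purely organizational: one must reorder the tensor decomposition between ``copies of the full system'' and ``copies of each qubit'' carefully and then track which copy each $\sigma^z$ lives on — particularly in $a_3$, where the shared middle copy produces $(\sigma^z_2)^2=\mathbb{1}$ and collapses the $3\times 3$ cross-term into the two-copy string $9\,\sigma^z_1\sigma^z_3$. Once this bookkeeping is in place the remainder is an elementary Bloch-sphere moment calculation.
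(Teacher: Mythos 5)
Your proposal is correct and follows essentially the same route as the paper: both reduce $\Gamma_k$ to a product of identical single-site factors via the tensor-product form $A=\bigotimes_i\tfrac12(\mathbb{1}\otimes\mathbb{1}+3\sigma^z_i\otimes\sigma^z_i)$, expand to get the same per-site polynomials in $Z=\langle\sigma^z\rangle$ (namely $\tfrac14(10+6Z^2)$, $\tfrac14(1+15Z^2)$, $\tfrac14(1+3Z^2)^2$), and integrate using the fact that $Z$ is uniform on $[-1,1]$. The only cosmetic difference is that you justify the uniformity of $Z$ by the Bloch-sphere (Archimedes) argument, whereas the paper derives it from its explicit Haar parametrization $Z(u_i)=1-2\xi_i$ of Appendix~A; the bookkeeping of $a_2,a_3,a_4$ and the resulting moments match the paper's computation exactly.
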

\begin{proof}
We first note that in the case of qubits
\begin{align}
    A &= \frac{1}{2^N}\bigotimes_i(1+3\sigma^i_z \otimes \sigma^i_z) \; .
\end{align}
Assuming without loss of generality that $\rho = \ketbra{0}{0}^{\otimes N}$, it holds thus that
\begin{align}
    X(u) = \trAE{A (u\rho u^\dagger)^{\otimes 2}} = \frac{1}{2^N} \prod_i (1+3Z(u_i)^2) 
\end{align}
with the expectation value $Z(u_i)=\braket{0|u^\dagger_i\sigma_z u_i|0}$.
Secondly, we  define $\Gamma_k(u)=\trAE{ A_k  (u\rho u^\dagger)^{\otimes k}}$. Using the definition of $A_k$ [Eqs.~\eqref{eq:A4}-\eqref{eq:A2}], we find
\begin{align}
    \Gamma_2(u) &=\frac{1}{4^N} \prod_i (10+6 \, Z(u_i)^2)\\
        \Gamma_3(u) &=\frac{1}{4^N} \prod_i (1+15 \, Z(u_i)^2)\\
        \Gamma_4(u) &=\frac{1}{4^N} \prod_i (1+3 \,Z(u_i)^2)^2 \; .
\end{align}
 Employing the decompositions \eqref{eq:huw} and \eqref{eq:measure}, we can rewrite $Z(u_i)=1-2 \xi_i$ where $\xi_i$ is uniformly distributed in $[0,1]$. With the substitution $z_i=1-2 \xi_i$, we  find
\begin{align*}
\Gamma_2  &= \int \du \Gamma_2(u) = \left[\frac{1}{8}  \int_{-1}^1 dz (10+6 \, z^2) \right]^N = 3^N \\
\Gamma_3  &= \int \du \Gamma_3(u) = \left[\frac{1}{8}  \int_{-1}^1 dz (1+15 \, z^2) \right]^N= \left(\frac{3}{2}\right)^N \\
\Gamma_4 &= \int \du \Gamma_4(u) = \left[\frac{1}{8}  \int_{-1}^1 dz (1+3 \,z^2)^2 \right]^N=  \left(\frac{6}{5}\right)^N \;.
\end{align*}
Here, we used that $X_{\text{IS}}(u)=1$ (uniform sampler) and that the integral factorizes due to the independence of the local random unitaries.
\end{proof}

\begin{lem}[Perfect sampler]
Suppose $\rho$ is pure product state  of $N$ qubits and assume that  $X_{\text{IS}}(u)=X(u)$ defines a perfect importance sampler.  Then, we find
\begin{align}
    \Gamma_4 &=1
    \\
    \Gamma_3  &=\alpha^N
     \\
  \Gamma_2 &=\beta^N  
\end{align}
with $\alpha=\frac{5}{2}-\frac{2 \pi }{3 \sqrt{3}}\approx 1.29$, and $\beta=1+\frac{4 \pi }{3 \sqrt{3}}\approx 3.42$. 
\end{lem}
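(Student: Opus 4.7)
My plan is to reduce this to the Uniform sampling lemma by exploiting the fact that for the perfect sampler $p_{\text{IS}}(u) = X(u)/p_2$, so the weight $X_{\text{IS}}(u)$ appearing in $\Gamma_k$ cancels one power of $X(u)$ from the sampling density. Since $\rho$ is pure, $p_2 = 1$, and we obtain
\begin{equation}
\Gamma_k = \int \! du \, \frac{\Gamma_k(u)}{X(u)},
\end{equation}
where $\Gamma_k(u) = \mathrm{Tr}[A_k (u\rho u^\dagger)^{\otimes k}]$ is exactly the quantity already evaluated in the proof of the previous lemma. The key point is that both $X(u)$ and the $\Gamma_k(u)$ were shown to factor over sites as polynomials in $Z(u_i)^2$, so the integral splits into $N$ identical one-dimensional integrals.

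Concretely, I would plug in the product forms $X(u) = 2^{-N}\prod_i(1+3Z_i^2)$, $\Gamma_4(u) = 4^{-N}\prod_i(1+3Z_i^2)^2$, $\Gamma_3(u) = 4^{-N}\prod_i(1+15Z_i^2)$, and $\Gamma_2(u) = 4^{-N}\prod_i(10+6Z_i^2)$ with $Z_i = 1-2\xi_i$, and change variables to $z = 1-2\xi \in [-1,1]$ with measure $dz/2$ (the phase angle $\varphi_i$ integrates out trivially). This yields, per site, the one-dimensional integrals
\begin{equation}
\Gamma_4 = \Bigl[\tfrac{1}{2}\!\int_{-1}^1\! (1+3z^2)\,dz\Bigr]^N\!\! \cdot 2^{-N}, \quad \Gamma_3 = \Bigl[\tfrac{1}{2}\!\int_{-1}^1\! \tfrac{1+15z^2}{1+3z^2}\,dz\Bigr]^N\!\! \cdot 2^{-N},
\end{equation}
\begin{equation}
\Gamma_2 = \Bigl[\tfrac{1}{2}\!\int_{-1}^1 \tfrac{10+6z^2}{1+3z^2}\,dz\Bigr]^N \cdot 2^{-N}.
\end{equation}

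The $\Gamma_4$ integral is elementary and gives $2$ per site, producing the factor $2^N$ that cancels the $2^{-N}$, yielding $\Gamma_4 = 1$. For $\Gamma_3$ and $\Gamma_2$ the main technical step is a trivial partial-fraction split, writing $(1+15z^2)/(1+3z^2) = 5 - 4/(1+3z^2)$ and $(10+6z^2)/(1+3z^2) = 2 + 8/(1+3z^2)$. The only non-elementary piece is then $\int_{-1}^1 dz/(1+3z^2) = (2/\sqrt{3})\arctan(\sqrt{3}) = 2\pi/(3\sqrt{3})$, which supplies the $\pi/\sqrt{3}$ terms. Collecting the constants gives $\Gamma_3 = (\tfrac{5}{2} - \tfrac{2\pi}{3\sqrt{3}})^N$ and $\Gamma_2 = (1 + \tfrac{4\pi}{3\sqrt{3}})^N$, matching the claimed $\alpha$ and $\beta$.

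There is no conceptual obstacle: all the heavy lifting — factorization of $X(u)$, $\Gamma_k(u)$, and reduction to the one-variable $z$ integrals via the parametrization in Appendix A — has already been done in the proof of the Uniform sampling lemma. The only mild care required is to keep track of the cancellation of $2^N$ prefactors between $X(u)$ in the denominator and $\Gamma_k(u) = 4^{-N}(\cdots)$ in the numerator, and to verify via $\arctan(\sqrt{3}) = \pi/3$ that the transcendental constant simplifies to $\pi/(3\sqrt{3})$ as stated.
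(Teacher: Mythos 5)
Your proposal is correct and follows essentially the same route as the paper: reuse the per-site factorizations of $X(u)$ and $\Gamma_k(u)$ from the uniform-sampling lemma, note that the perfect sampler (with $p_2=1$ for a pure state) reduces $\Gamma_k$ to $\int du\,\Gamma_k(u)/X(u)$, and evaluate the resulting one-dimensional integrals in $z=1-2\xi$ via $\int_{-1}^1 dz/(1+3z^2)=2\pi/(3\sqrt{3})$. Your bookkeeping of the $2^{-N}$ versus $4^{-N}$ prefactors and the partial-fraction evaluations all match the paper's stated values.
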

\begin{proof}
With the definitions and notation of the proof of Lemma 1, we have
\begin{align*}
\Gamma_2  &= \int \du \frac{\Gamma_2(u)}{X(u)} = \left[\frac{1}{4}  \int_{-1}^1 dz \frac{10+6 \, z^2}{1+3z^2} \right]^N = \left(1+\frac{4 \pi }{3 \sqrt{3}}\right)^N \\
\Gamma_3  &= \int \du \frac{\Gamma_3(u)}{X(u)} =  \left[\frac{1}{4}  \int_{-1}^1 dz \frac{ 1+15 \, z^2}{1+3z^2} \right]^N= \left(\frac{5}{2}-\frac{2 \pi }{3 \sqrt{3}} \right)^N \\
\Gamma_4 &= \int \du \frac{\Gamma_4(u)}{X(u)} = \left[\frac{1}{4}  \int_{-1}^1 dz  (1+3z^2) \right]^N=  1 \;.
\end{align*}
\end{proof}
\begin{figure}[h]
\includegraphics[width=0.78\columnwidth]{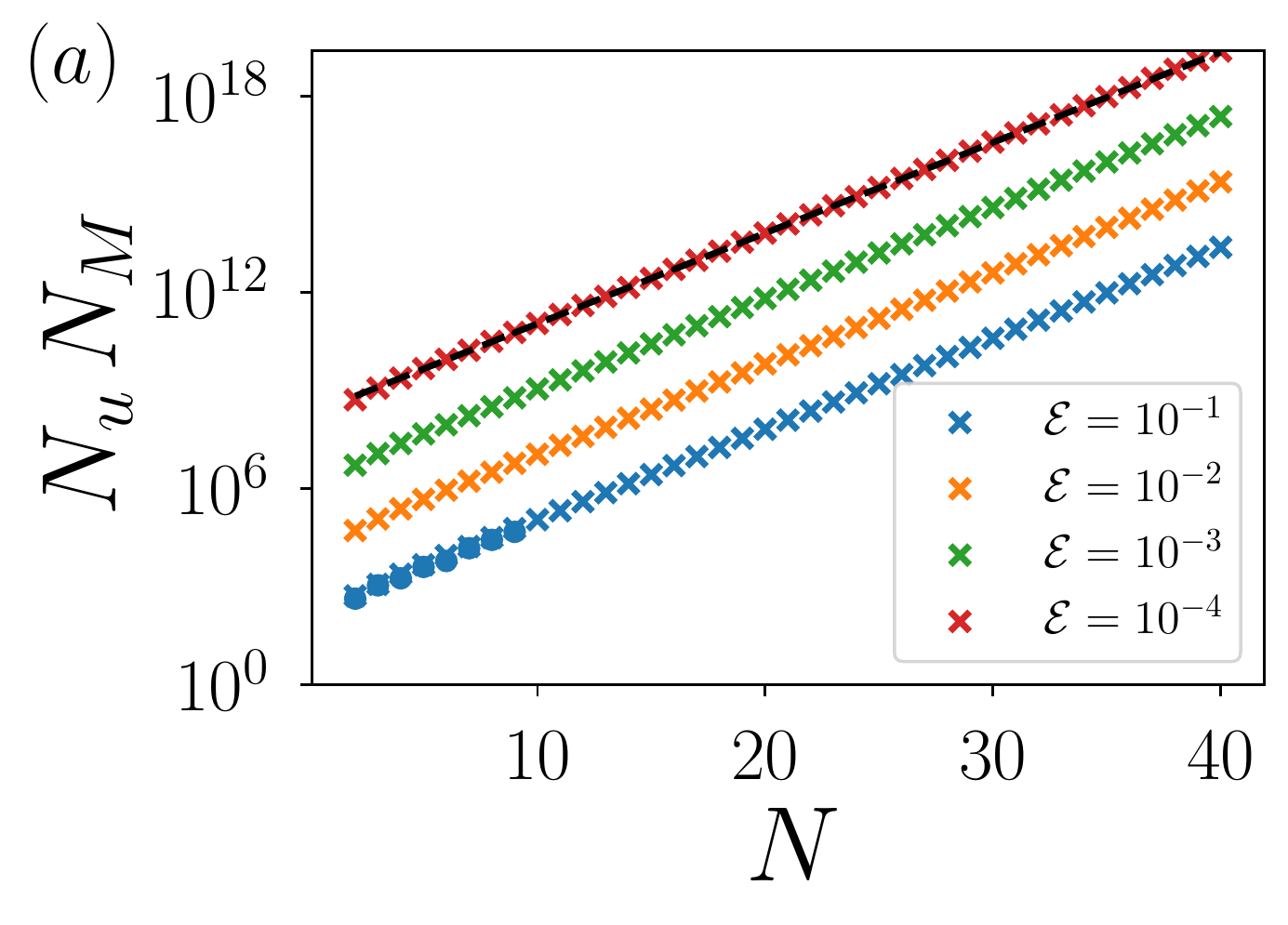}
\includegraphics[width=0.78\columnwidth]{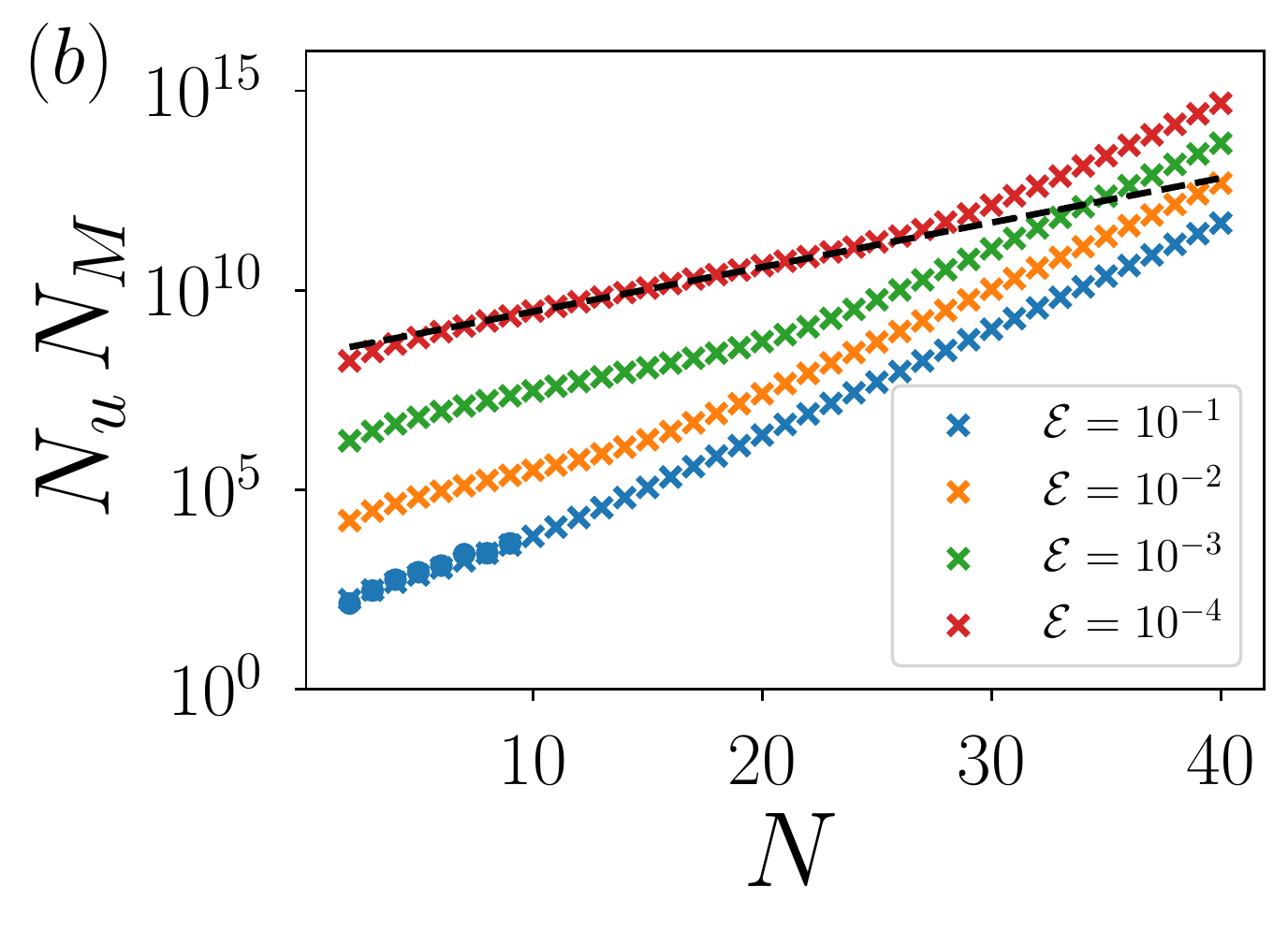}
\caption{{\it Scaling of the total number of measurements in different error regimes $\mathcal{E}$ for a product state.} Panels (a) uniform sampling and (b) importance sampling present the scaling of the required total number of measurements $N_uN_M$ to achieve different levels of accuracy (or error regimes) as a function of the system size $N$. The points in cross are analytical results and circles are the corresponding values obtained by numerical simulations of our protocol. The black dashed line ($\propto 2^{aN}$) is a guide for the asymptotic scaling in the high accuracy limit ($\mathcal{E} \to 0$) with $a = 0.92$ for uniform sampling and $a = 0.37$ for importance sampling.
\label{fig:SM_scaling_analytics}}
\end{figure}
In order to relate our numerics to the analytical expression of Eq.~\eqref{varr} obtained earlier, we could assume a normal distribution of the values $\hat{X}_e$ obtained by simulating numerically the experimental protocol for different set of values of $N_u$ and $N_M$ in the case of uniform sampling ($X_{\mathrm{IS}}(u) = 1$) and importance sampling from a perfect sampler ($X_{\mathrm{IS}}(u) = X(u)$).
In that case, $\mathrm{Std}[\hat{X}] = \sqrt{\frac{\pi}{2}}\,\cal{E}$ where $\mathcal{E} = \overline{|\hat{X}_e - \hat{X}|}$ is the average statistical error in estimating the purity. Note that we have checked numerically that are our samples of purity estimations are indeed approximately normally distributed. 
For a given value of $\cal{E}$, we could then extract analytically the optimal required number of measurements $N_uN_M$ by using Eq.~\eqref{varr} in the case of uniform and importance sampling.
This is illustrated in Fig. 2(c,d) of the main text. 

From the analytical expressions of the variance derived above in Lemma 2, we observe in Fig.~\ref{fig:SM_scaling_analytics}  for importance sampling two different regimes of scaling. For $N\lesssim N_c$ smaller than a certain $N_c \sim \log (1/\mathcal{E})$, the scaling exponent $a$ for importance sampling is strongly reduced $a=0.37$, and increases to $a=0.88$ for $N>N_c$. In contrast, uniform sampling does not present this feature and has a nearly constant scaling of $a = 0.92$. While being advantageous in all displayed cases, importance sampling is thus most powerful in the high accuracy regime (small errors $\mathcal{E}$).


\section{Appendix F: Further Results}
\subsection{Scaling analysis}
In this section, firstly, we provide different scaling analysis for the product state, and the GHZ state for a fixed value of $N_u$. We represent the average statistical error $\mathcal{E} = \overline{|p_2 - {p_2}_{e}|}$ computed over 100 experimental runs, as a function of the rescaled units $N_M/2^{aN}$ where $a$ is adjusted so that points for different system sizes $N$ collapse into one curve as shown in Fig.~\ref{fig:SM_scaling}(a-d). Here, we performed importance sampling from trained neural network ansatz. We can identify an error scaling where we observe that importance sampling scaling is approximately divided by 2. This suggests that the required number of measurements $N_M$ to reach a given accuracy is exponentially reduced by using importance sampling as we pick unitaries that reduce the effect of shot noise.

Secondly, we show additionally the scaling for a pure random state taken from the Haar measure with the importance sampling was done from a perfect sampler ($X_{\mathrm{IS}}(u) = X(u)$). Fig.~\ref{fig:SM_scaling}(e-f) show the scaling for uniform and importance sampling where we plot the optimal required number of measurements $N_uN_M$ for a given average statistical error $\cal{E}$ as a function of the system size $N$ for uniform and importance sampling. 

\begin{figure}[h]
\includegraphics[width=0.48\columnwidth]{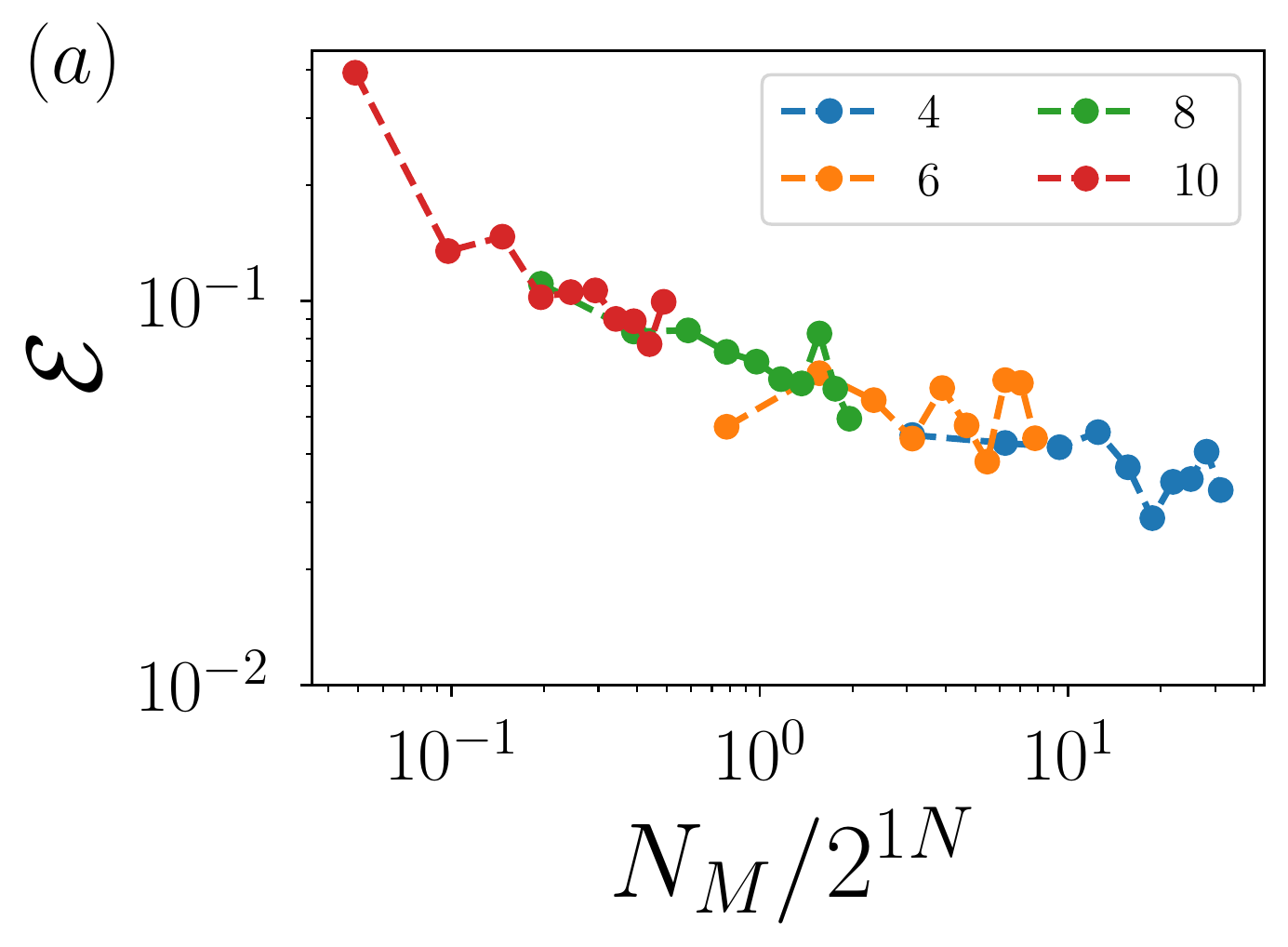}
\includegraphics[width=0.48\columnwidth]{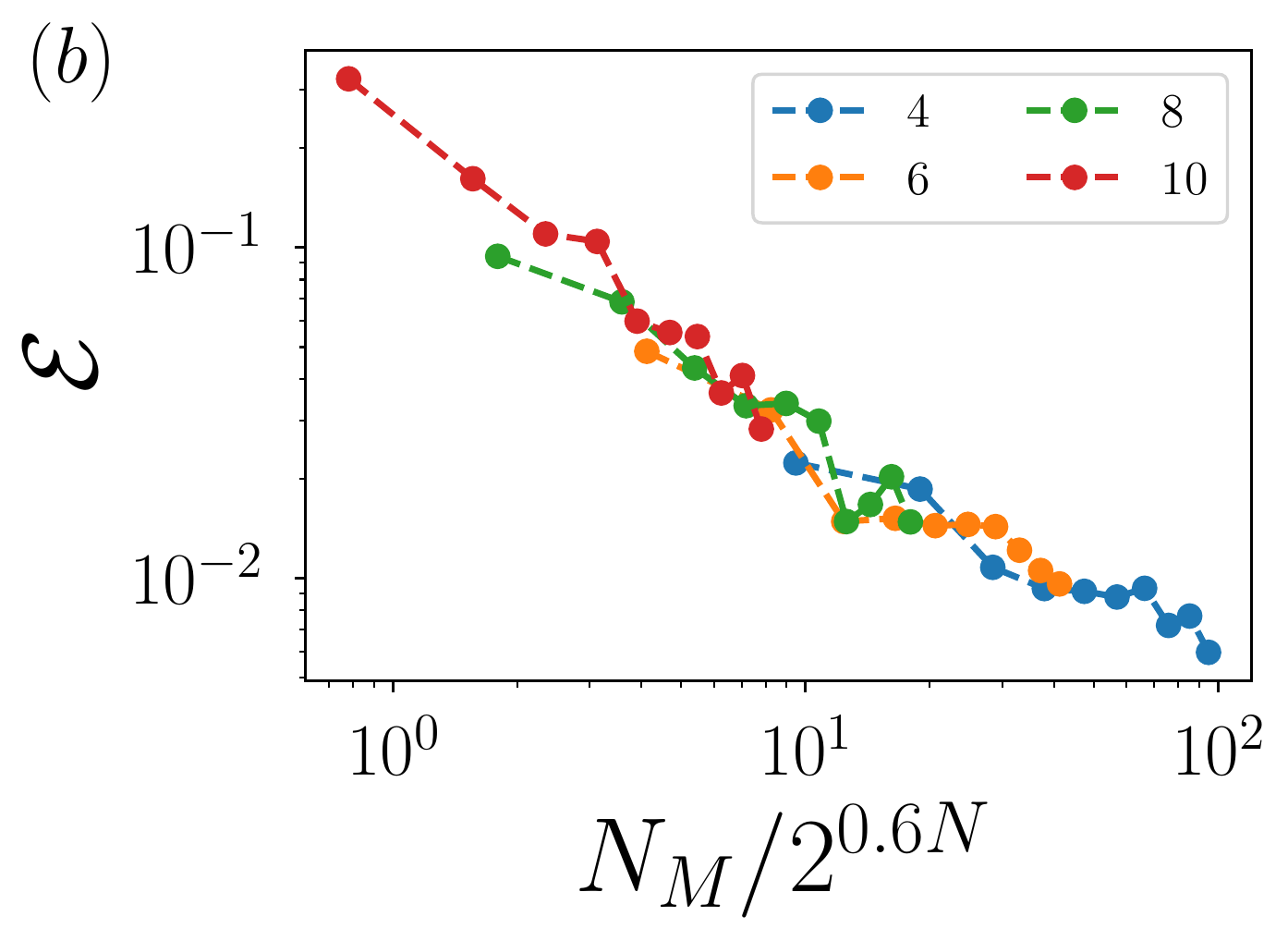}
\includegraphics[width=0.48\columnwidth]{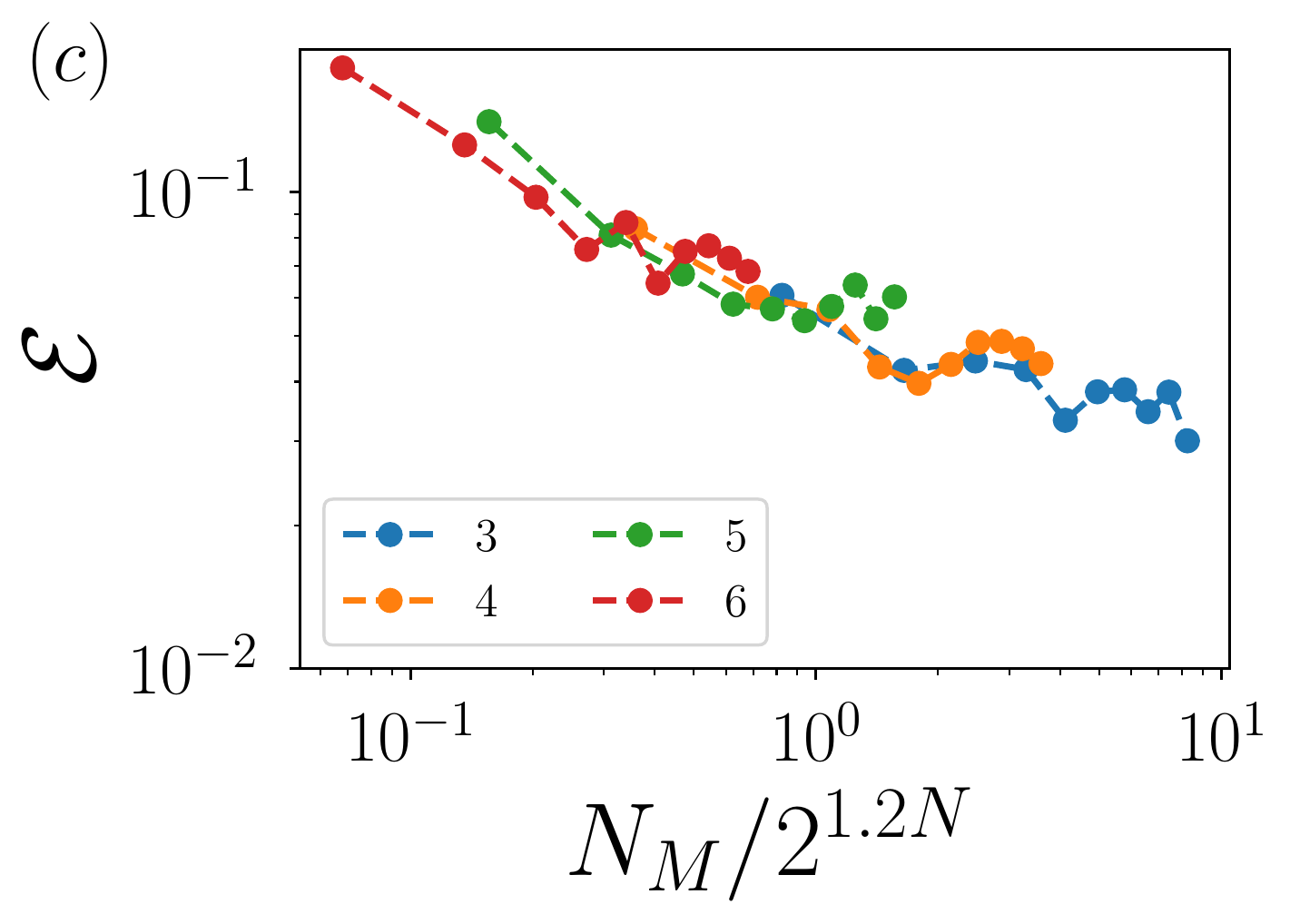}
\includegraphics[width=0.48\columnwidth]{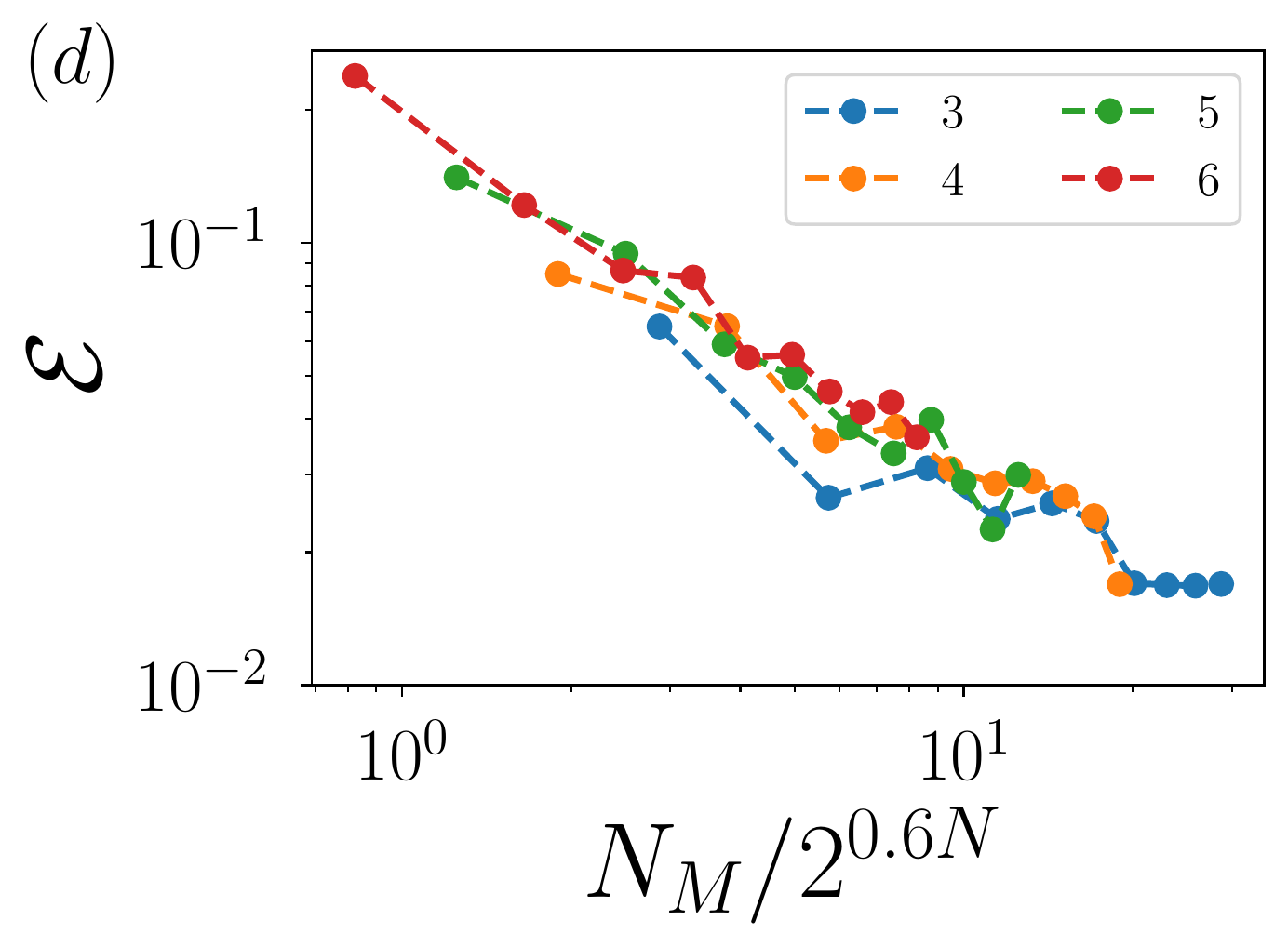}
\includegraphics[width=0.48\columnwidth]{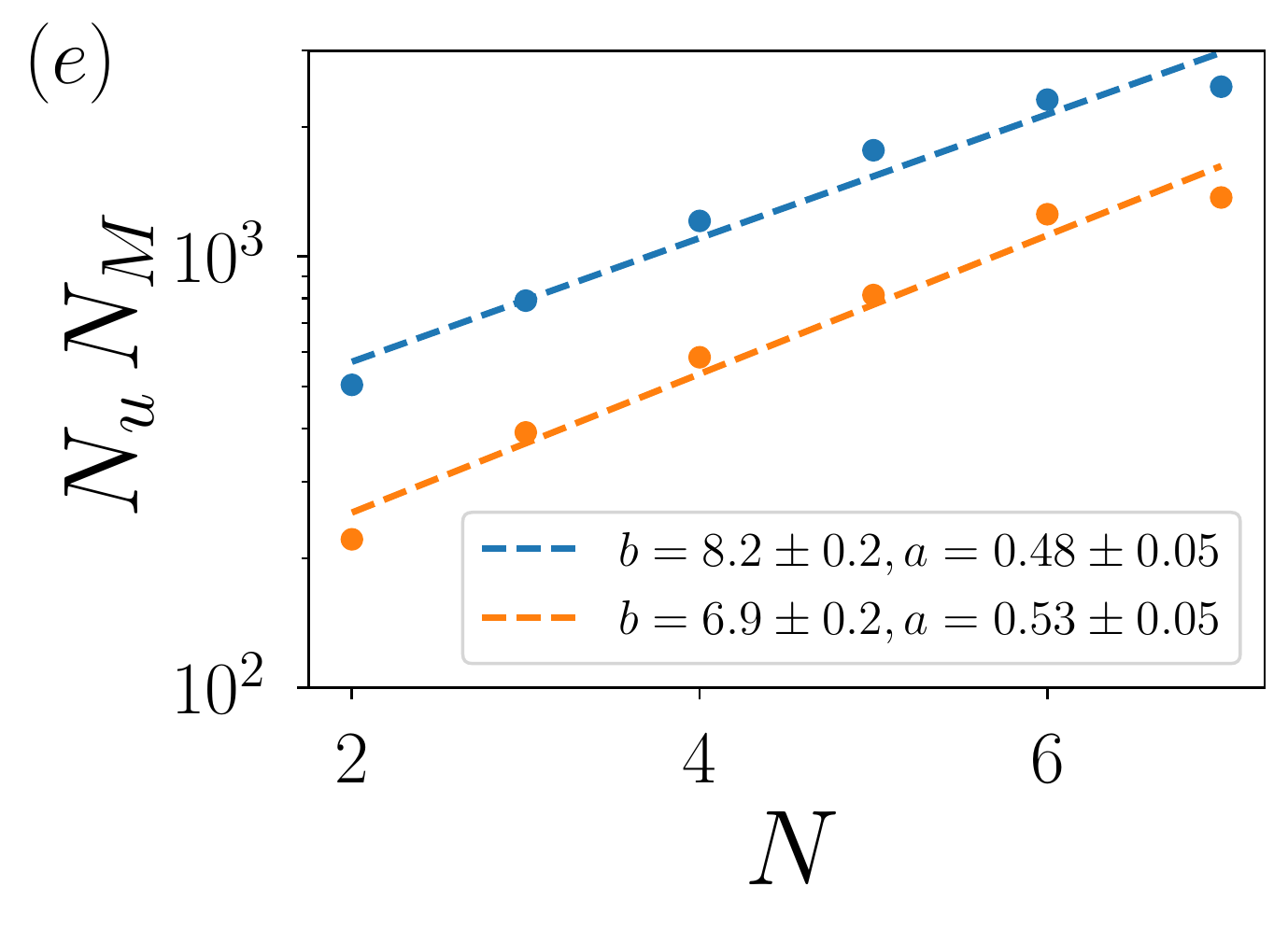}
\includegraphics[width=0.48\columnwidth]{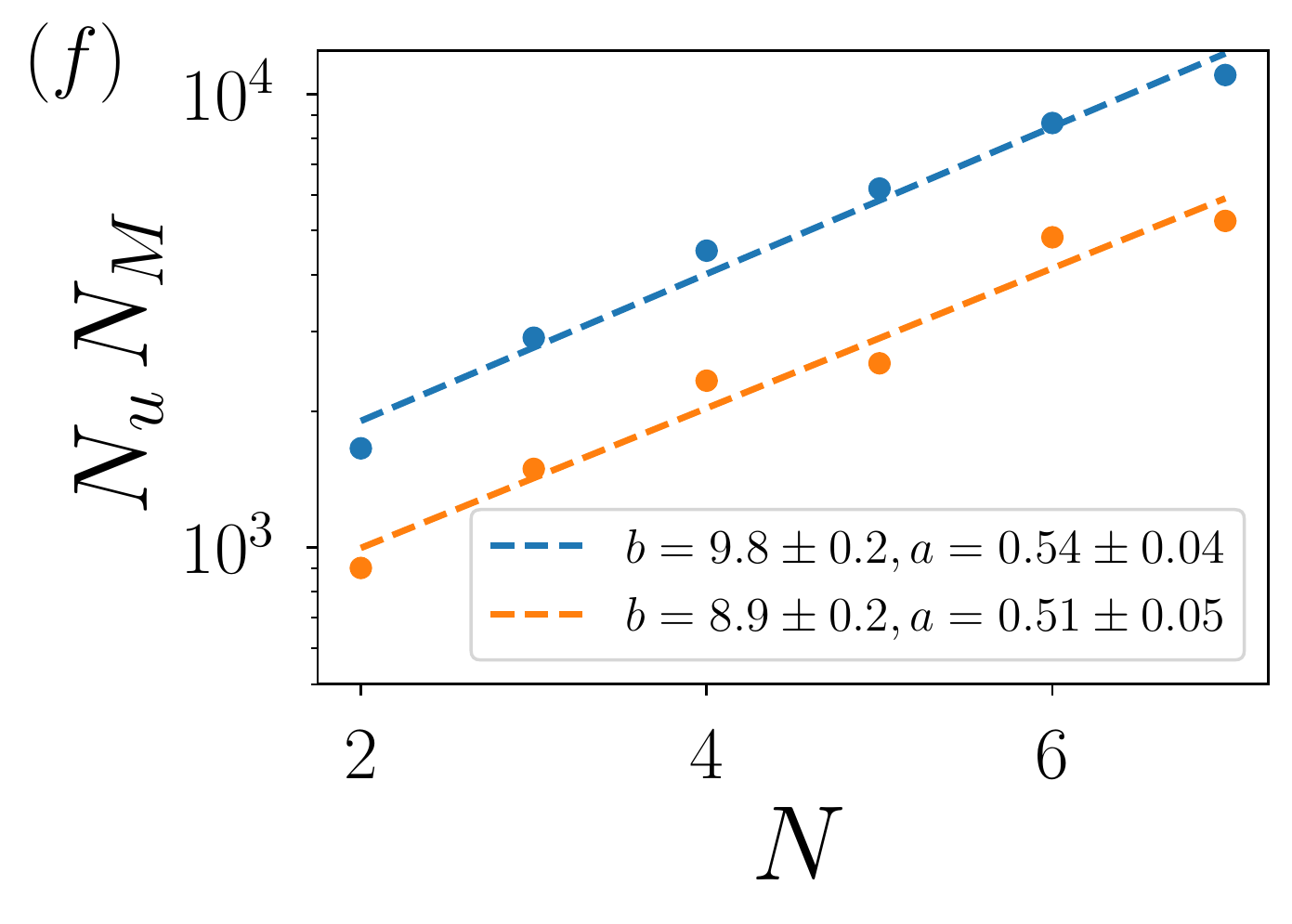}
\caption{{\it Scaling of statistical errors for product, GHZ and random states.} Panels (a) and (b) (product state) and (c) and (d) (GHZ state) show the error in function of the rescaled unit $N_M/2^{aN}$ for different system sizes for a fixed value of $N_u$ ($N_u = 500$ for product state and $N_u = 200$ for GHZ state). The unitaries in the left panels (a) and (c) being sampled from a uniform distribution while the right panel (b) and (d) are sampled from the trained neural network. Panel (e) and (f) highlight the scaling  of the required total number of measurements $N_uN_M$ as a function of $N$ for uniform and importance sampling for a pure random state, to obtain a statistical error of $\mathcal{E}=0.1$ (e), and $\mathcal{E}=0.05$ (f).
\label{fig:SM_scaling}}
\end{figure}
\subsection{Mixed state sub-system optimisation}
We further analyze the performance of importance sampling for a reduced state $\rho_A = \mathrm{Tr_B}(\rho_{AB})$ where $A$ is the half partition of the highly entangled 10-qubit quantum simulation state denoted by $\rho_{AB}$ studied earlier in the main text. The half-partition purity was found to be $\mathrm{Tr}(\rho_{A}^2) = 0.16$. 
The role of importance sampling continues to be relevant for probing such partitions of states that are highly mixed because the bitstring probabilities obtained in the experiments take values in a much reduced interval ~\cite{Brydges2019}. Therefore, it is important to sample the adequate unitaries through importance sampling to provide better probability signals for our concerned estimates.

\begin{figure}
\vspace{1cm}
\begin{minipage}[b]{0.5\linewidth}
\centering
\includegraphics[width=\textwidth]{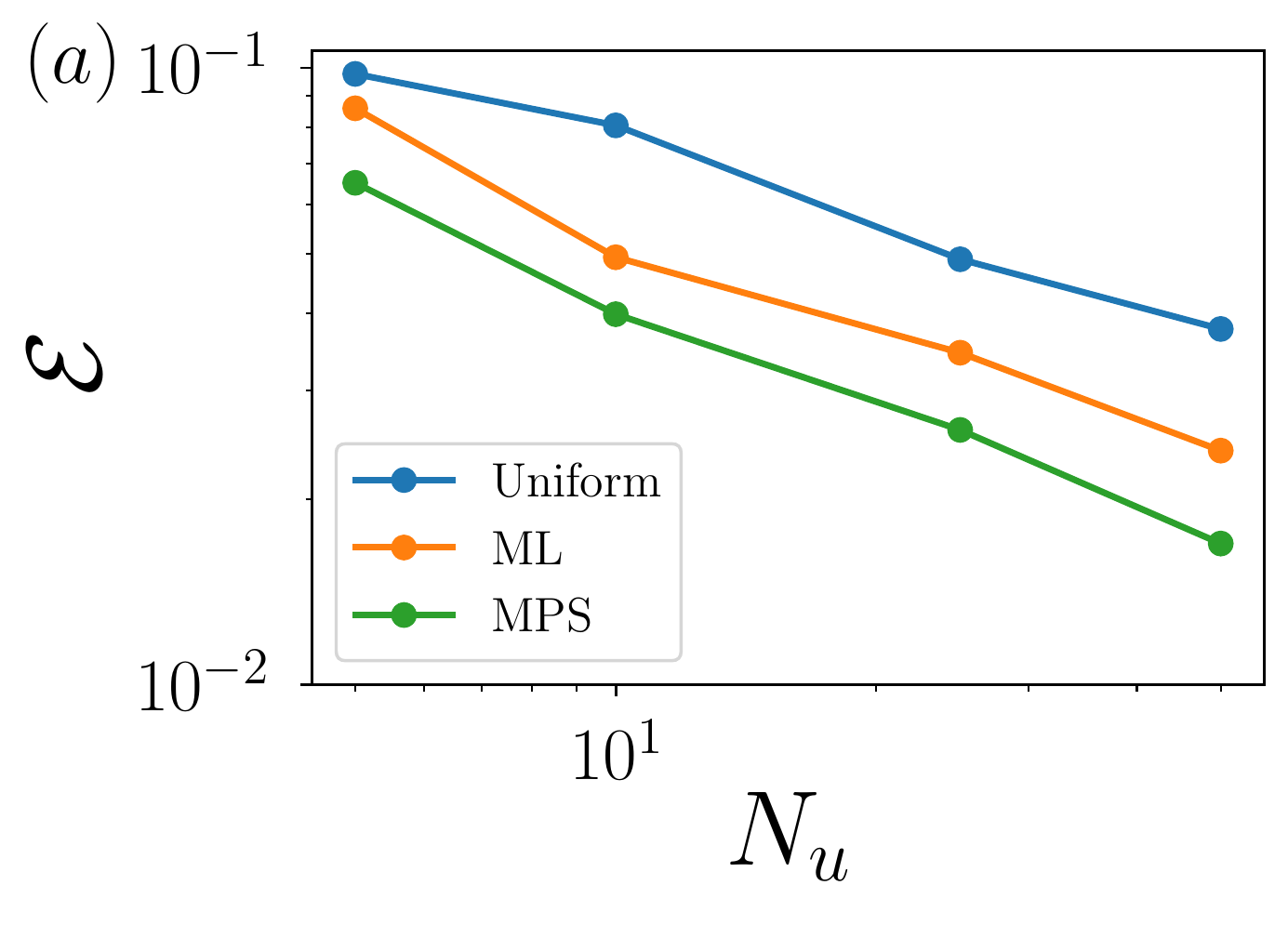}
\end{minipage}
\hskip -1ex
\begin{minipage}[b]{0.5\linewidth}
\centering
\includegraphics[width=\textwidth]{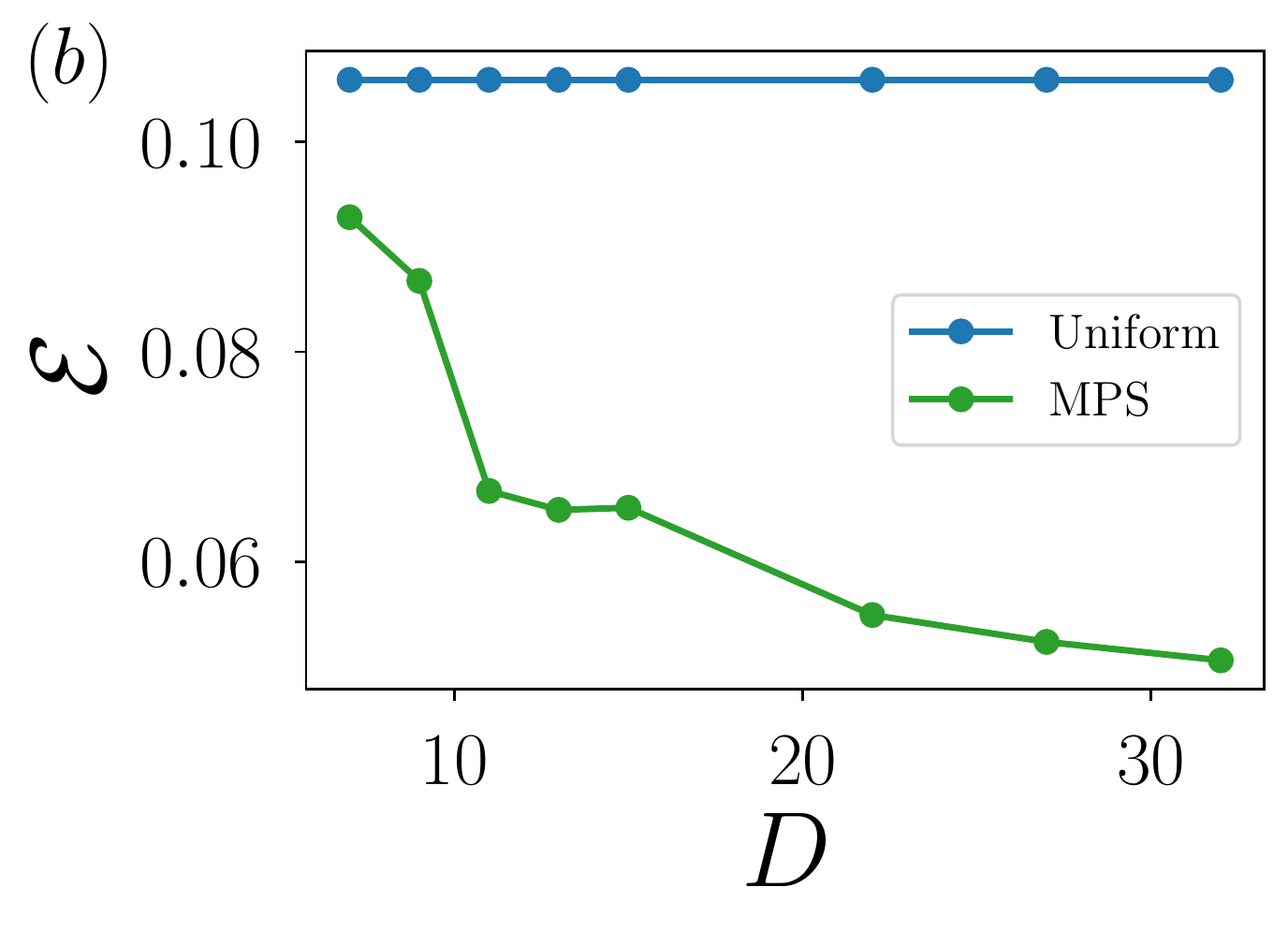}
\end{minipage}
\hskip -1ex
\begin{minipage}[b]{0.75\linewidth}
\centering
\includegraphics[width=\textwidth]{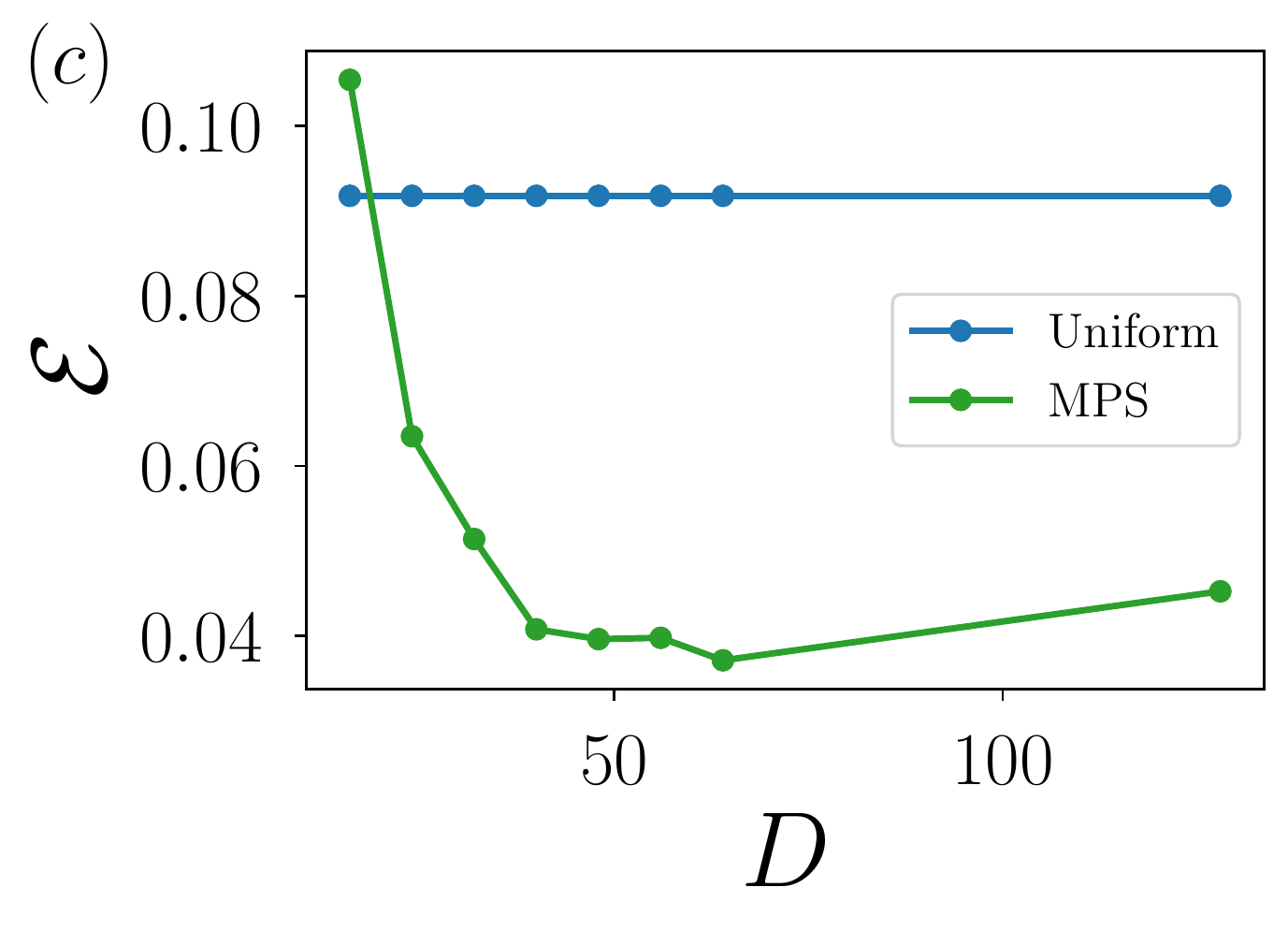}
\end{minipage}
\caption{{\it Purity estimation of half-partition mixed state of the highly entangled 10 and 20 qubit states with ML and MPS samplers.} Panel (a) shows the average statistical error $\mathcal{E}$ of the estimated purity in function of $N_u$ with $N_M = 7500$ for a uniform sampling and importance sampling done from a NN and a MPS representation of the corresponding reduced state. Panel (b) illustrates the scaling of the error $\mathcal{E}$ w.r.t different bond dimensions $D$ used for the MPS representation of the reduced state for $N_u = 5$ and $N_M = 7500$. Panel (c) shows the scaling performance for a 10-qubit reduced state from a 20-qubit system with $N_u = 5$ and $N_M = 10^5$. \label{fig:mixed}}
\end{figure}
We can construct the importance sampler $X_\mathrm{IS}(u)$ for $\rho_A$ in two ways: first, by training a NN on the reduced pure theory state $\rho_A$ to build a fit function and secondly using an approximate MPS representation $\ket{\psi_D}$ of bond dimension $D$ of the reduced state $\rho_\mathrm{red} = \mathrm{Tr_B}(\ket{\psi_D}\bra{\psi_D})$. Fig.\ref{fig:mixed}a already shows a reduction of statistical error compared to an uniform sampling by using the best trained importance sampler CNN of the mixed reduced state. The statistical errors can be further reduced by using the MPS representation of the reduced state of $D = 15$ with a fidelity overlap to the true state being $\mathcal{F}(\rho_A,\rho_\mathrm{red}) = 0.83$. In Fig.\ref{fig:mixed}b we show the reduction of statistical errors as a function of the bond dimension $D$ of the reduced MPS representation. 

We equally highlight in Fig.~\ref{fig:mixed}c the performance of importance sampling on a mixed reduced 10-qubit state $\rho_A$ taken from an entangled $20-$qubit system with the reduced state purity $\tr(\rho_A^2) = 0.103$. For the parameters of Ref.~\cite{Brydges2019}, the evolution time with the $XY$ model was set to $t=7.5$ ms, and we considered here a pure state approximation $\rho=\ket{\psi}\bra{\psi}$.
The importance sampler state was taken to be the MPS approximation $\rho_{\mathrm{red}} = \tr_B(\ket{\psi_D}\bra{\psi_D})$ of varying bond dimension $D$ of the 20 qubit state $\ket{\psi}$. The MPS representation of $D = 24$ which has a fidelity $\mathcal{F}(\rho_A,\rho_\mathrm{red}) = 0.87$ already outperforms the uniform sampler (second point in Fig.~\ref{fig:mixed}c). We observe a similar reduction of statistical errors for moderate values of $D$, which emphasizes the strength of importance sampling with MPS models. 


\begin{thebibliography}{50}%
\makeatletter
\providecommand \@ifxundefined [1]{%
 \@ifx{#1\undefined}
}%
\providecommand \@ifnum [1]{%
 \ifnum #1\expandafter \@firstoftwo
 \else \expandafter \@secondoftwo
 \fi
}%
\providecommand \@ifx [1]{%
 \ifx #1\expandafter \@firstoftwo
 \else \expandafter \@secondoftwo
 \fi
}%
\providecommand \natexlab [1]{#1}%
\providecommand \enquote  [1]{``#1''}%
\providecommand \bibnamefont  [1]{#1}%
\providecommand \bibfnamefont [1]{#1}%
\providecommand \citenamefont [1]{#1}%
\providecommand \href@noop [0]{\@secondoftwo}%
\providecommand \href [0]{\begingroup \@sanitize@url \@href}%
\providecommand \@href[1]{\@@startlink{#1}\@@href}%
\providecommand \@@href[1]{\endgroup#1\@@endlink}%
\providecommand \@sanitize@url [0]{\catcode `\\12\catcode `\$12\catcode
  `\&12\catcode `\#12\catcode `\^12\catcode `\_12\catcode `\%12\relax}%
\providecommand \@@startlink[1]{}%
\providecommand \@@endlink[0]{}%
\providecommand \url  [0]{\begingroup\@sanitize@url \@url }%
\providecommand \@url [1]{\endgroup\@href {#1}{\urlprefix }}%
\providecommand \urlprefix  [0]{URL }%
\providecommand \Eprint [0]{\href }%
\providecommand \doibase [0]{http://dx.doi.org/}%
\providecommand \selectlanguage [0]{\@gobble}%
\providecommand \bibinfo  [0]{\@secondoftwo}%
\providecommand \bibfield  [0]{\@secondoftwo}%
\providecommand \translation [1]{[#1]}%
\providecommand \BibitemOpen [0]{}%
\providecommand \bibitemStop [0]{}%
\providecommand \bibitemNoStop [0]{.\EOS\space}%
\providecommand \EOS [0]{\spacefactor3000\relax}%
\providecommand \BibitemShut  [1]{\csname bibitem#1\endcsname}%
\let\auto@bib@innerbib\@empty
\bibitem [{\citenamefont {Arute}\ \emph {et~al.}(2019)\citenamefont {Arute},
  \citenamefont {Arya}, \citenamefont {Babbush}, \citenamefont {Bacon},
  \citenamefont {Bardin}, \citenamefont {Barends}, \citenamefont {Biswas},
  \citenamefont {Boixo}, \citenamefont {Brandao}, \citenamefont {Buell} \emph
  {et~al.}}]{Arute2019}%
  \BibitemOpen
  \bibfield  {author} {\bibinfo {author} {\bibfnamefont {F.}~\bibnamefont
  {Arute}}, \bibinfo {author} {\bibfnamefont {K.}~\bibnamefont {Arya}},
  \bibinfo {author} {\bibfnamefont {R.}~\bibnamefont {Babbush}}, \bibinfo
  {author} {\bibfnamefont {D.}~\bibnamefont {Bacon}}, \bibinfo {author}
  {\bibfnamefont {J.~C.}\ \bibnamefont {Bardin}}, \bibinfo {author}
  {\bibfnamefont {R.}~\bibnamefont {Barends}}, \bibinfo {author} {\bibfnamefont
  {R.}~\bibnamefont {Biswas}}, \bibinfo {author} {\bibfnamefont
  {S.}~\bibnamefont {Boixo}}, \bibinfo {author} {\bibfnamefont {F.~G. S.~L.}\
  \bibnamefont {Brandao}}, \bibinfo {author} {\bibfnamefont {D.~A.}\
  \bibnamefont {Buell}},  \emph {et~al.},\ }\href {\doibase
  10.1038/s41586-019-1666-5} {\bibfield  {journal} {\bibinfo  {journal}
  {Nature}\ }\textbf {\bibinfo {volume} {574}},\ \bibinfo {pages} {505}
  (\bibinfo {year} {2019})}\BibitemShut {NoStop}%
\bibitem [{\citenamefont {Ebadi}\ \emph {et~al.}(2021)\citenamefont {Ebadi},
  \citenamefont {Wang}, \citenamefont {Levine}, \citenamefont {Keesling},
  \citenamefont {Semeghini}, \citenamefont {Omran}, \citenamefont {Bluvstein},
  \citenamefont {Samajdar}, \citenamefont {Pichler}, \citenamefont {Ho},\ and\
  \citenamefont {et~al.}}]{Ebadi2021}%
  \BibitemOpen
  \bibfield  {author} {\bibinfo {author} {\bibfnamefont {S.}~\bibnamefont
  {Ebadi}}, \bibinfo {author} {\bibfnamefont {T.~T.}\ \bibnamefont {Wang}},
  \bibinfo {author} {\bibfnamefont {H.}~\bibnamefont {Levine}}, \bibinfo
  {author} {\bibfnamefont {A.}~\bibnamefont {Keesling}}, \bibinfo {author}
  {\bibfnamefont {G.}~\bibnamefont {Semeghini}}, \bibinfo {author}
  {\bibfnamefont {A.}~\bibnamefont {Omran}}, \bibinfo {author} {\bibfnamefont
  {D.}~\bibnamefont {Bluvstein}}, \bibinfo {author} {\bibfnamefont
  {R.}~\bibnamefont {Samajdar}}, \bibinfo {author} {\bibfnamefont
  {H.}~\bibnamefont {Pichler}}, \bibinfo {author} {\bibfnamefont {W.~W.}\
  \bibnamefont {Ho}}, \ and\ \bibinfo {author} {\bibnamefont {et~al.}},\ }\href
  {\doibase 10.1038/s41586-021-03582-4} {\bibfield  {journal} {\bibinfo
  {journal} {Nature}\ }\textbf {\bibinfo {volume} {595}},\ \bibinfo {pages}
  {227–232} (\bibinfo {year} {2021})}\BibitemShut {NoStop}%
\bibitem [{\citenamefont {Scholl}\ \emph {et~al.}(2021)\citenamefont {Scholl},
  \citenamefont {Schuler}, \citenamefont {Williams}, \citenamefont
  {Eberharter}, \citenamefont {Barredo}, \citenamefont {Schymik}, \citenamefont
  {Lienhard}, \citenamefont {Henry}, \citenamefont {Lang}, \citenamefont
  {Lahaye},\ and\ \citenamefont {et~al.}}]{Scholl2021}%
  \BibitemOpen
  \bibfield  {author} {\bibinfo {author} {\bibfnamefont {P.}~\bibnamefont
  {Scholl}}, \bibinfo {author} {\bibfnamefont {M.}~\bibnamefont {Schuler}},
  \bibinfo {author} {\bibfnamefont {H.~J.}\ \bibnamefont {Williams}}, \bibinfo
  {author} {\bibfnamefont {A.~A.}\ \bibnamefont {Eberharter}}, \bibinfo
  {author} {\bibfnamefont {D.}~\bibnamefont {Barredo}}, \bibinfo {author}
  {\bibfnamefont {K.-N.}\ \bibnamefont {Schymik}}, \bibinfo {author}
  {\bibfnamefont {V.}~\bibnamefont {Lienhard}}, \bibinfo {author}
  {\bibfnamefont {L.-P.}\ \bibnamefont {Henry}}, \bibinfo {author}
  {\bibfnamefont {T.~C.}\ \bibnamefont {Lang}}, \bibinfo {author}
  {\bibfnamefont {T.}~\bibnamefont {Lahaye}}, \ and\ \bibinfo {author}
  {\bibnamefont {et~al.}},\ }\href {\doibase 10.1038/s41586-021-03585-1}
  {\bibfield  {journal} {\bibinfo  {journal} {Nature}\ }\textbf {\bibinfo
  {volume} {595}},\ \bibinfo {pages} {233–238} (\bibinfo {year}
  {2021})}\BibitemShut {NoStop}%
\bibitem [{\citenamefont {{Van Enk}}\ and\ \citenamefont
  {Beenakker}(2012)}]{VanEnk2012}%
  \BibitemOpen
  \bibfield  {author} {\bibinfo {author} {\bibfnamefont {S.~J.}\ \bibnamefont
  {{Van Enk}}}\ and\ \bibinfo {author} {\bibfnamefont {C.~W.}\ \bibnamefont
  {Beenakker}},\ }\href {\doibase 10.1103/PhysRevLett.108.110503} {\bibfield
  {journal} {\bibinfo  {journal} {Phys. Rev. Lett.}\ }\textbf {\bibinfo
  {volume} {108}},\ \bibinfo {pages} {110503} (\bibinfo {year}
  {2012})}\BibitemShut {NoStop}%
\bibitem [{\citenamefont {Tran}\ \emph {et~al.}(2016)\citenamefont {Tran},
  \citenamefont {Daki\ifmmode~\acute{c}\else \'{c}\fi{}}, \citenamefont
  {Laskowski},\ and\ \citenamefont {Paterek}}]{Cong2016}%
  \BibitemOpen
  \bibfield  {author} {\bibinfo {author} {\bibfnamefont {M.~C.}\ \bibnamefont
  {Tran}}, \bibinfo {author} {\bibfnamefont {B.}~\bibnamefont
  {Daki\ifmmode~\acute{c}\else \'{c}\fi{}}}, \bibinfo {author} {\bibfnamefont
  {W.}~\bibnamefont {Laskowski}}, \ and\ \bibinfo {author} {\bibfnamefont
  {T.}~\bibnamefont {Paterek}},\ }\href {\doibase 10.1103/PhysRevA.94.042302}
  {\bibfield  {journal} {\bibinfo  {journal} {Phys. Rev. A}\ }\textbf {\bibinfo
  {volume} {94}},\ \bibinfo {pages} {042302} (\bibinfo {year}
  {2016})}\BibitemShut {NoStop}%
\bibitem [{\citenamefont {Elben}\ \emph {et~al.}(2018)\citenamefont {Elben},
  \citenamefont {Vermersch}, \citenamefont {Dalmonte}, \citenamefont {Cirac},\
  and\ \citenamefont {Zoller}}]{Elben2018}%
  \BibitemOpen
  \bibfield  {author} {\bibinfo {author} {\bibfnamefont {A.}~\bibnamefont
  {Elben}}, \bibinfo {author} {\bibfnamefont {B.}~\bibnamefont {Vermersch}},
  \bibinfo {author} {\bibfnamefont {M.}~\bibnamefont {Dalmonte}}, \bibinfo
  {author} {\bibfnamefont {J.~I.}\ \bibnamefont {Cirac}}, \ and\ \bibinfo
  {author} {\bibfnamefont {P.}~\bibnamefont {Zoller}},\ }\href {\doibase
  10.1103/PhysRevLett.120.050406} {\bibfield  {journal} {\bibinfo  {journal}
  {Phys. Rev. Lett.}\ }\textbf {\bibinfo {volume} {120}},\ \bibinfo {pages}
  {050406} (\bibinfo {year} {2018})}\BibitemShut {NoStop}%
\bibitem [{\citenamefont {Vermersch}\ \emph {et~al.}(2018)\citenamefont
  {Vermersch}, \citenamefont {Elben}, \citenamefont {Dalmonte}, \citenamefont
  {Cirac},\ and\ \citenamefont {Zoller}}]{Vermersch2018}%
  \BibitemOpen
  \bibfield  {author} {\bibinfo {author} {\bibfnamefont {B.}~\bibnamefont
  {Vermersch}}, \bibinfo {author} {\bibfnamefont {A.}~\bibnamefont {Elben}},
  \bibinfo {author} {\bibfnamefont {M.}~\bibnamefont {Dalmonte}}, \bibinfo
  {author} {\bibfnamefont {J.~I.}\ \bibnamefont {Cirac}}, \ and\ \bibinfo
  {author} {\bibfnamefont {P.}~\bibnamefont {Zoller}},\ }\href {\doibase
  10.1103/PhysRevA.97.023604} {\bibfield  {journal} {\bibinfo  {journal} {Phys.
  Rev. A}\ }\textbf {\bibinfo {volume} {97}},\ \bibinfo {pages} {023604}
  (\bibinfo {year} {2018})}\BibitemShut {NoStop}%
\bibitem [{\citenamefont {Elben}\ \emph {et~al.}(2019)\citenamefont {Elben},
  \citenamefont {Vermersch}, \citenamefont {Roos},\ and\ \citenamefont
  {Zoller}}]{Elben2018a}%
  \BibitemOpen
  \bibfield  {author} {\bibinfo {author} {\bibfnamefont {A.}~\bibnamefont
  {Elben}}, \bibinfo {author} {\bibfnamefont {B.}~\bibnamefont {Vermersch}},
  \bibinfo {author} {\bibfnamefont {C.~F.}\ \bibnamefont {Roos}}, \ and\
  \bibinfo {author} {\bibfnamefont {P.}~\bibnamefont {Zoller}},\ }\href
  {\doibase 10.1103/PhysRevA.99.052323} {\bibfield  {journal} {\bibinfo
  {journal} {Phys. Rev. A}\ }\textbf {\bibinfo {volume} {99}},\ \bibinfo
  {pages} {052323} (\bibinfo {year} {2019})}\BibitemShut {NoStop}%
\bibitem [{\citenamefont {Knips}\ \emph {et~al.}(2020)\citenamefont {Knips},
  \citenamefont {Dziewior}, \citenamefont {K{\l}obus}, \citenamefont
  {Laskowski}, \citenamefont {Paterek}, \citenamefont {Shadbolt}, \citenamefont
  {Weinfurter},\ and\ \citenamefont {Meinecke}}]{Knips2019}%
  \BibitemOpen
  \bibfield  {author} {\bibinfo {author} {\bibfnamefont {L.}~\bibnamefont
  {Knips}}, \bibinfo {author} {\bibfnamefont {J.}~\bibnamefont {Dziewior}},
  \bibinfo {author} {\bibfnamefont {W.}~\bibnamefont {K{\l}obus}}, \bibinfo
  {author} {\bibfnamefont {W.}~\bibnamefont {Laskowski}}, \bibinfo {author}
  {\bibfnamefont {T.}~\bibnamefont {Paterek}}, \bibinfo {author} {\bibfnamefont
  {P.~J.}\ \bibnamefont {Shadbolt}}, \bibinfo {author} {\bibfnamefont
  {H.}~\bibnamefont {Weinfurter}}, \ and\ \bibinfo {author} {\bibfnamefont
  {J.~D.~A.}\ \bibnamefont {Meinecke}},\ }\href {\doibase
  10.1038/s41534-020-0281-5} {\bibfield  {journal} {\bibinfo  {journal} {npj
  Quant. Inf.}\ }\textbf {\bibinfo {volume} {6}},\ \bibinfo {pages} {51}
  (\bibinfo {year} {2020})}\BibitemShut {NoStop}%
\bibitem [{\citenamefont {Ketterer}\ \emph {et~al.}(2019)\citenamefont
  {Ketterer}, \citenamefont {Wyderka},\ and\ \citenamefont
  {G{\"{u}}hne}}]{Ketterer2019}%
  \BibitemOpen
  \bibfield  {author} {\bibinfo {author} {\bibfnamefont {A.}~\bibnamefont
  {Ketterer}}, \bibinfo {author} {\bibfnamefont {N.}~\bibnamefont {Wyderka}}, \
  and\ \bibinfo {author} {\bibfnamefont {O.}~\bibnamefont {G{\"{u}}hne}},\
  }\href {\doibase 10.1103/PhysRevLett.122.120505} {\bibfield  {journal}
  {\bibinfo  {journal} {Phys. Rev. Lett.}\ }\textbf {\bibinfo {volume} {122}},\
  \bibinfo {pages} {120505} (\bibinfo {year} {2019})}\BibitemShut {NoStop}%
\bibitem [{\citenamefont {Huang}\ \emph {et~al.}(2020)\citenamefont {Huang},
  \citenamefont {Kueng},\ and\ \citenamefont {Preskill}}]{Huang2020}%
  \BibitemOpen
  \bibfield  {author} {\bibinfo {author} {\bibfnamefont {H.-Y.}\ \bibnamefont
  {Huang}}, \bibinfo {author} {\bibfnamefont {R.}~\bibnamefont {Kueng}}, \ and\
  \bibinfo {author} {\bibfnamefont {J.}~\bibnamefont {Preskill}},\ }\href
  {\doibase 10.1038/s41567-020-0932-7} {\bibfield  {journal} {\bibinfo
  {journal} {Nat. Phys.}\ }\textbf {\bibinfo {volume} {16}},\ \bibinfo {pages}
  {1050} (\bibinfo {year} {2020})}\BibitemShut {NoStop}%
\bibitem [{\citenamefont {Elben}\ \emph
  {et~al.}(2020{\natexlab{a}})\citenamefont {Elben}, \citenamefont {Kueng},
  \citenamefont {Huang}, \citenamefont {van Bijnen}, \citenamefont {Kokail},
  \citenamefont {Dalmonte}, \citenamefont {Calabrese}, \citenamefont {Kraus},
  \citenamefont {Preskill}, \citenamefont {Zoller},\ and\ \citenamefont
  {Vermersch}}]{Elben2020}%
  \BibitemOpen
  \bibfield  {author} {\bibinfo {author} {\bibfnamefont {A.}~\bibnamefont
  {Elben}}, \bibinfo {author} {\bibfnamefont {R.}~\bibnamefont {Kueng}},
  \bibinfo {author} {\bibfnamefont {H.-Y.~R.}\ \bibnamefont {Huang}}, \bibinfo
  {author} {\bibfnamefont {R.}~\bibnamefont {van Bijnen}}, \bibinfo {author}
  {\bibfnamefont {C.}~\bibnamefont {Kokail}}, \bibinfo {author} {\bibfnamefont
  {M.}~\bibnamefont {Dalmonte}}, \bibinfo {author} {\bibfnamefont
  {P.}~\bibnamefont {Calabrese}}, \bibinfo {author} {\bibfnamefont
  {B.}~\bibnamefont {Kraus}}, \bibinfo {author} {\bibfnamefont
  {J.}~\bibnamefont {Preskill}}, \bibinfo {author} {\bibfnamefont
  {P.}~\bibnamefont {Zoller}}, \ and\ \bibinfo {author} {\bibfnamefont
  {B.}~\bibnamefont {Vermersch}},\ }\href {\doibase
  10.1103/PhysRevLett.125.200501} {\bibfield  {journal} {\bibinfo  {journal}
  {Phys. Rev. Lett.}\ }\textbf {\bibinfo {volume} {125}},\ \bibinfo {pages}
  {200501} (\bibinfo {year} {2020}{\natexlab{a}})}\BibitemShut {NoStop}%
\bibitem [{\citenamefont {Zhou}\ \emph {et~al.}(2020)\citenamefont {Zhou},
  \citenamefont {Zeng},\ and\ \citenamefont {Liu}}]{Zhou2020}%
  \BibitemOpen
  \bibfield  {author} {\bibinfo {author} {\bibfnamefont {Y.}~\bibnamefont
  {Zhou}}, \bibinfo {author} {\bibfnamefont {P.}~\bibnamefont {Zeng}}, \ and\
  \bibinfo {author} {\bibfnamefont {Z.}~\bibnamefont {Liu}},\ }\href {\doibase
  10.1103/PhysRevLett.125.200502} {\bibfield  {journal} {\bibinfo  {journal}
  {Phys. Rev. Lett.}\ }\textbf {\bibinfo {volume} {125}},\ \bibinfo {pages}
  {200502} (\bibinfo {year} {2020})}\BibitemShut {NoStop}%
\bibitem [{\citenamefont {Ketterer}\ \emph {et~al.}(2020)\citenamefont
  {Ketterer}, \citenamefont {Wyderka},\ and\ \citenamefont
  {G{\"{u}}hne}}]{Ketterer2020}%
  \BibitemOpen
  \bibfield  {author} {\bibinfo {author} {\bibfnamefont {A.}~\bibnamefont
  {Ketterer}}, \bibinfo {author} {\bibfnamefont {N.}~\bibnamefont {Wyderka}}, \
  and\ \bibinfo {author} {\bibfnamefont {O.}~\bibnamefont {G{\"{u}}hne}},\
  }\href {\doibase 10.22331/q-2020-09-16-325} {\bibfield  {journal} {\bibinfo
  {journal} {Quantum}\ }\textbf {\bibinfo {volume} {4}},\ \bibinfo {pages}
  {325} (\bibinfo {year} {2020})}\BibitemShut {NoStop}%
\bibitem [{\citenamefont {Ketterer}\ \emph {et~al.}()\citenamefont {Ketterer},
  \citenamefont {Imai}, \citenamefont {Wyderka},\ and\ \citenamefont
  {G{\"{u}}hne}}]{Ketterer2020a}%
  \BibitemOpen
  \bibfield  {author} {\bibinfo {author} {\bibfnamefont {A.}~\bibnamefont
  {Ketterer}}, \bibinfo {author} {\bibfnamefont {S.}~\bibnamefont {Imai}},
  \bibinfo {author} {\bibfnamefont {N.}~\bibnamefont {Wyderka}}, \ and\
  \bibinfo {author} {\bibfnamefont {O.}~\bibnamefont {G{\"{u}}hne}},\
  }\href@noop {} {\ }\Eprint {http://arxiv.org/abs/2012.12176}
  {arXiv:2012.12176} \BibitemShut {NoStop}%
\bibitem [{\citenamefont {Vitale}\ \emph {et~al.}()\citenamefont {Vitale},
  \citenamefont {Elben}, \citenamefont {Kueng}, \citenamefont {Neven},
  \citenamefont {Carrasco}, \citenamefont {Kraus}, \citenamefont {Zoller},
  \citenamefont {Calabrese}, \citenamefont {Vermersch},\ and\ \citenamefont
  {Dalmonte}}]{Vitale2021}%
  \BibitemOpen
  \bibfield  {author} {\bibinfo {author} {\bibfnamefont {V.}~\bibnamefont
  {Vitale}}, \bibinfo {author} {\bibfnamefont {A.}~\bibnamefont {Elben}},
  \bibinfo {author} {\bibfnamefont {R.}~\bibnamefont {Kueng}}, \bibinfo
  {author} {\bibfnamefont {A.}~\bibnamefont {Neven}}, \bibinfo {author}
  {\bibfnamefont {J.}~\bibnamefont {Carrasco}}, \bibinfo {author}
  {\bibfnamefont {B.}~\bibnamefont {Kraus}}, \bibinfo {author} {\bibfnamefont
  {P.}~\bibnamefont {Zoller}}, \bibinfo {author} {\bibfnamefont
  {P.}~\bibnamefont {Calabrese}}, \bibinfo {author} {\bibfnamefont
  {B.}~\bibnamefont {Vermersch}}, \ and\ \bibinfo {author} {\bibfnamefont
  {M.}~\bibnamefont {Dalmonte}},\ }\href {http://arxiv.org/abs/2101.07814} {\
  }\Eprint {http://arxiv.org/abs/2101.07814} {arXiv:2101.07814} \BibitemShut
  {NoStop}%
\bibitem [{\citenamefont {Imai}\ \emph {et~al.}(2021)\citenamefont {Imai},
  \citenamefont {Wyderka}, \citenamefont {Ketterer},\ and\ \citenamefont
  {G\"uhne}}]{Satoya2021}%
  \BibitemOpen
  \bibfield  {author} {\bibinfo {author} {\bibfnamefont {S.}~\bibnamefont
  {Imai}}, \bibinfo {author} {\bibfnamefont {N.}~\bibnamefont {Wyderka}},
  \bibinfo {author} {\bibfnamefont {A.}~\bibnamefont {Ketterer}}, \ and\
  \bibinfo {author} {\bibfnamefont {O.}~\bibnamefont {G\"uhne}},\ }\href
  {\doibase 10.1103/PhysRevLett.126.150501} {\bibfield  {journal} {\bibinfo
  {journal} {Phys. Rev. Lett.}\ }\textbf {\bibinfo {volume} {126}},\ \bibinfo
  {pages} {150501} (\bibinfo {year} {2021})}\BibitemShut {NoStop}%
\bibitem [{\citenamefont {Rath}\ \emph {et~al.}()\citenamefont {Rath},
  \citenamefont {Branciard}, \citenamefont {Minguzzi},\ and\ \citenamefont
  {Vermersch}}]{Rath2021b}%
  \BibitemOpen
  \bibfield  {author} {\bibinfo {author} {\bibfnamefont {A.}~\bibnamefont
  {Rath}}, \bibinfo {author} {\bibfnamefont {C.}~\bibnamefont {Branciard}},
  \bibinfo {author} {\bibfnamefont {A.}~\bibnamefont {Minguzzi}}, \ and\
  \bibinfo {author} {\bibfnamefont {B.}~\bibnamefont {Vermersch}},\ }\href
  {http://arxiv.org/abs/2105.13164} {\ }\Eprint
  {http://arxiv.org/abs/2105.13164} {arXiv:2105.13164} \BibitemShut {NoStop}%
\bibitem [{\citenamefont {Vermersch}\ \emph {et~al.}(2019)\citenamefont
  {Vermersch}, \citenamefont {Elben}, \citenamefont {Sieberer}, \citenamefont
  {Yao},\ and\ \citenamefont {Zoller}}]{Vermersch2019}%
  \BibitemOpen
  \bibfield  {author} {\bibinfo {author} {\bibfnamefont {B.}~\bibnamefont
  {Vermersch}}, \bibinfo {author} {\bibfnamefont {A.}~\bibnamefont {Elben}},
  \bibinfo {author} {\bibfnamefont {L.~M.}\ \bibnamefont {Sieberer}}, \bibinfo
  {author} {\bibfnamefont {N.~Y.}\ \bibnamefont {Yao}}, \ and\ \bibinfo
  {author} {\bibfnamefont {P.}~\bibnamefont {Zoller}},\ }\href {\doibase
  10.1103/PhysRevX.9.021061} {\bibfield  {journal} {\bibinfo  {journal} {Phys.
  Rev. X}\ }\textbf {\bibinfo {volume} {9}},\ \bibinfo {pages} {021061}
  (\bibinfo {year} {2019})}\BibitemShut {NoStop}%
\bibitem [{\citenamefont {Qi}\ \emph {et~al.}()\citenamefont {Qi},
  \citenamefont {Davis}, \citenamefont {Periwal},\ and\ \citenamefont
  {Schleier-Smith}}]{Qi2019}%
  \BibitemOpen
  \bibfield  {author} {\bibinfo {author} {\bibfnamefont {X.-L.}\ \bibnamefont
  {Qi}}, \bibinfo {author} {\bibfnamefont {E.~J.}\ \bibnamefont {Davis}},
  \bibinfo {author} {\bibfnamefont {A.}~\bibnamefont {Periwal}}, \ and\
  \bibinfo {author} {\bibfnamefont {M.}~\bibnamefont {Schleier-Smith}},\ }\href
  {http://arxiv.org/abs/1906.00524} {\ }\Eprint
  {http://arxiv.org/abs/1906.00524} {arXiv:1906.00524} \BibitemShut {NoStop}%
\bibitem [{\citenamefont {Garcia}\ \emph {et~al.}(2021)\citenamefont {Garcia},
  \citenamefont {Zhou},\ and\ \citenamefont {Jaffe}}]{Garcia2021}%
  \BibitemOpen
  \bibfield  {author} {\bibinfo {author} {\bibfnamefont {R.~J.}\ \bibnamefont
  {Garcia}}, \bibinfo {author} {\bibfnamefont {Y.}~\bibnamefont {Zhou}}, \ and\
  \bibinfo {author} {\bibfnamefont {A.}~\bibnamefont {Jaffe}},\ }\href
  {\doibase 10.1103/PhysRevResearch.3.033155} {\bibfield  {journal} {\bibinfo
  {journal} {Phys. Rev. Research}\ }\textbf {\bibinfo {volume} {3}},\ \bibinfo
  {pages} {033155} (\bibinfo {year} {2021})}\BibitemShut {NoStop}%
\bibitem [{\citenamefont {Elben}\ \emph
  {et~al.}(2020{\natexlab{b}})\citenamefont {Elben}, \citenamefont {Yu},
  \citenamefont {Zhu}, \citenamefont {Hafezi}, \citenamefont {Pollmann},
  \citenamefont {Zoller},\ and\ \citenamefont {Vermersch}}]{Elben2019}%
  \BibitemOpen
  \bibfield  {author} {\bibinfo {author} {\bibfnamefont {A.}~\bibnamefont
  {Elben}}, \bibinfo {author} {\bibfnamefont {J.}~\bibnamefont {Yu}}, \bibinfo
  {author} {\bibfnamefont {G.}~\bibnamefont {Zhu}}, \bibinfo {author}
  {\bibfnamefont {M.}~\bibnamefont {Hafezi}}, \bibinfo {author} {\bibfnamefont
  {F.}~\bibnamefont {Pollmann}}, \bibinfo {author} {\bibfnamefont
  {P.}~\bibnamefont {Zoller}}, \ and\ \bibinfo {author} {\bibfnamefont
  {B.}~\bibnamefont {Vermersch}},\ }\href {\doibase 10.1126/sciadv.aaz3666}
  {\bibfield  {journal} {\bibinfo  {journal} {Sci. Adv.}\ }\textbf {\bibinfo
  {volume} {6}},\ \bibinfo {pages} {eaaz3666} (\bibinfo {year}
  {2020}{\natexlab{b}})}\BibitemShut {NoStop}%
\bibitem [{\citenamefont {Cian}\ \emph {et~al.}(2021)\citenamefont {Cian},
  \citenamefont {Dehghani}, \citenamefont {Elben}, \citenamefont {Vermersch},
  \citenamefont {Zhu}, \citenamefont {Barkeshli}, \citenamefont {Zoller},\ and\
  \citenamefont {Hafezi}}]{Cian2020}%
  \BibitemOpen
  \bibfield  {author} {\bibinfo {author} {\bibfnamefont {Z.-P.}\ \bibnamefont
  {Cian}}, \bibinfo {author} {\bibfnamefont {H.}~\bibnamefont {Dehghani}},
  \bibinfo {author} {\bibfnamefont {A.}~\bibnamefont {Elben}}, \bibinfo
  {author} {\bibfnamefont {B.}~\bibnamefont {Vermersch}}, \bibinfo {author}
  {\bibfnamefont {G.}~\bibnamefont {Zhu}}, \bibinfo {author} {\bibfnamefont
  {M.}~\bibnamefont {Barkeshli}}, \bibinfo {author} {\bibfnamefont
  {P.}~\bibnamefont {Zoller}}, \ and\ \bibinfo {author} {\bibfnamefont
  {M.}~\bibnamefont {Hafezi}},\ }\href {\doibase
  10.1103/PhysRevLett.126.050501} {\bibfield  {journal} {\bibinfo  {journal}
  {Phys. Rev. Lett.}\ }\textbf {\bibinfo {volume} {126}},\ \bibinfo {pages}
  {050501} (\bibinfo {year} {2021})}\BibitemShut {NoStop}%
\bibitem [{\citenamefont {Elben}\ \emph
  {et~al.}(2020{\natexlab{c}})\citenamefont {Elben}, \citenamefont {Vermersch},
  \citenamefont {van Bijnen}, \citenamefont {Kokail}, \citenamefont {Brydges},
  \citenamefont {Maier}, \citenamefont {Joshi}, \citenamefont {Blatt},
  \citenamefont {Roos},\ and\ \citenamefont {Zoller}}]{Elben2020a}%
  \BibitemOpen
  \bibfield  {author} {\bibinfo {author} {\bibfnamefont {A.}~\bibnamefont
  {Elben}}, \bibinfo {author} {\bibfnamefont {B.}~\bibnamefont {Vermersch}},
  \bibinfo {author} {\bibfnamefont {R.}~\bibnamefont {van Bijnen}}, \bibinfo
  {author} {\bibfnamefont {C.}~\bibnamefont {Kokail}}, \bibinfo {author}
  {\bibfnamefont {T.}~\bibnamefont {Brydges}}, \bibinfo {author} {\bibfnamefont
  {C.}~\bibnamefont {Maier}}, \bibinfo {author} {\bibfnamefont {M.~K.}\
  \bibnamefont {Joshi}}, \bibinfo {author} {\bibfnamefont {R.}~\bibnamefont
  {Blatt}}, \bibinfo {author} {\bibfnamefont {C.~F.}\ \bibnamefont {Roos}}, \
  and\ \bibinfo {author} {\bibfnamefont {P.}~\bibnamefont {Zoller}},\ }\href
  {\doibase 10.1103/PhysRevLett.124.010504} {\bibfield  {journal} {\bibinfo
  {journal} {Phys. Rev. Lett.}\ }\textbf {\bibinfo {volume} {124}},\ \bibinfo
  {pages} {010504} (\bibinfo {year} {2020}{\natexlab{c}})}\BibitemShut
  {NoStop}%
\bibitem [{\citenamefont {Gross}\ \emph {et~al.}(2010)\citenamefont {Gross},
  \citenamefont {Liu}, \citenamefont {Flammia}, \citenamefont {Becker},\ and\
  \citenamefont {Eisert}}]{Gross2010}%
  \BibitemOpen
  \bibfield  {author} {\bibinfo {author} {\bibfnamefont {D.}~\bibnamefont
  {Gross}}, \bibinfo {author} {\bibfnamefont {Y.~K.}\ \bibnamefont {Liu}},
  \bibinfo {author} {\bibfnamefont {S.~T.}\ \bibnamefont {Flammia}}, \bibinfo
  {author} {\bibfnamefont {S.}~\bibnamefont {Becker}}, \ and\ \bibinfo {author}
  {\bibfnamefont {J.}~\bibnamefont {Eisert}},\ }\href {\doibase
  10.1103/PhysRevLett.105.150401} {\bibfield  {journal} {\bibinfo  {journal}
  {Phys. Rev. Lett.}\ }\textbf {\bibinfo {volume} {105}},\ \bibinfo {pages}
  {150401} (\bibinfo {year} {2010})}\BibitemShut {NoStop}%
\bibitem [{\citenamefont {Brydges}\ \emph {et~al.}(2019)\citenamefont
  {Brydges}, \citenamefont {Elben}, \citenamefont {Jurcevic}, \citenamefont
  {Vermersch}, \citenamefont {Maier}, \citenamefont {Lanyon}, \citenamefont
  {Zoller}, \citenamefont {Blatt},\ and\ \citenamefont {Roos}}]{Brydges2019}%
  \BibitemOpen
  \bibfield  {author} {\bibinfo {author} {\bibfnamefont {T.}~\bibnamefont
  {Brydges}}, \bibinfo {author} {\bibfnamefont {A.}~\bibnamefont {Elben}},
  \bibinfo {author} {\bibfnamefont {P.}~\bibnamefont {Jurcevic}}, \bibinfo
  {author} {\bibfnamefont {B.}~\bibnamefont {Vermersch}}, \bibinfo {author}
  {\bibfnamefont {C.}~\bibnamefont {Maier}}, \bibinfo {author} {\bibfnamefont
  {B.~P.}\ \bibnamefont {Lanyon}}, \bibinfo {author} {\bibfnamefont
  {P.}~\bibnamefont {Zoller}}, \bibinfo {author} {\bibfnamefont
  {R.}~\bibnamefont {Blatt}}, \ and\ \bibinfo {author} {\bibfnamefont {C.~F.}\
  \bibnamefont {Roos}},\ }\href {\doibase 10.1126/science.aau4963} {\bibfield
  {journal} {\bibinfo  {journal} {Science}\ }\textbf {\bibinfo {volume}
  {364}},\ \bibinfo {pages} {260} (\bibinfo {year} {2019})}\BibitemShut
  {NoStop}%
\bibitem [{\citenamefont {Joshi}\ \emph {et~al.}(2020)\citenamefont {Joshi},
  \citenamefont {Elben}, \citenamefont {Vermersch}, \citenamefont {Brydges},
  \citenamefont {Maier}, \citenamefont {Zoller}, \citenamefont {Blatt},\ and\
  \citenamefont {Roos}}]{Joshi2020}%
  \BibitemOpen
  \bibfield  {author} {\bibinfo {author} {\bibfnamefont {M.~K.}\ \bibnamefont
  {Joshi}}, \bibinfo {author} {\bibfnamefont {A.}~\bibnamefont {Elben}},
  \bibinfo {author} {\bibfnamefont {B.}~\bibnamefont {Vermersch}}, \bibinfo
  {author} {\bibfnamefont {T.}~\bibnamefont {Brydges}}, \bibinfo {author}
  {\bibfnamefont {C.}~\bibnamefont {Maier}}, \bibinfo {author} {\bibfnamefont
  {P.}~\bibnamefont {Zoller}}, \bibinfo {author} {\bibfnamefont
  {R.}~\bibnamefont {Blatt}}, \ and\ \bibinfo {author} {\bibfnamefont {C.~F.}\
  \bibnamefont {Roos}},\ }\href {\doibase 10.1103/PhysRevLett.124.240505}
  {\bibfield  {journal} {\bibinfo  {journal} {Phys. Rev. Lett.}\ }\textbf
  {\bibinfo {volume} {124}},\ \bibinfo {pages} {240505} (\bibinfo {year}
  {2020})}\BibitemShut {NoStop}%
\bibitem [{\citenamefont {Horodecki}\ and\ \citenamefont
  {Horodecki}(1996)}]{Horodecki1996}%
  \BibitemOpen
  \bibfield  {author} {\bibinfo {author} {\bibfnamefont {R.}~\bibnamefont
  {Horodecki}}\ and\ \bibinfo {author} {\bibfnamefont {M.}~\bibnamefont
  {Horodecki}},\ }\href {\doibase 10.1103/PhysRevA.54.1838} {\bibfield
  {journal} {\bibinfo  {journal} {Phys. Rev. A}\ }\textbf {\bibinfo {volume}
  {54}},\ \bibinfo {pages} {1838} (\bibinfo {year} {1996})}\BibitemShut
  {NoStop}%
\bibitem [{\citenamefont {Eisert}\ \emph {et~al.}(2010)\citenamefont {Eisert},
  \citenamefont {Cramer},\ and\ \citenamefont {Plenio}}]{Eisert2010}%
  \BibitemOpen
  \bibfield  {author} {\bibinfo {author} {\bibfnamefont {J.}~\bibnamefont
  {Eisert}}, \bibinfo {author} {\bibfnamefont {M.}~\bibnamefont {Cramer}}, \
  and\ \bibinfo {author} {\bibfnamefont {M.~B.}\ \bibnamefont {Plenio}},\
  }\href {\doibase 10.1103/RevModPhys.82.277} {\bibfield  {journal} {\bibinfo
  {journal} {Rev. Mod. Phys.}\ }\textbf {\bibinfo {volume} {82}},\ \bibinfo
  {pages} {277} (\bibinfo {year} {2010})}\BibitemShut {NoStop}%
\bibitem [{\citenamefont {Satzinger}\ \emph {et~al.}(2021)\citenamefont
  {Satzinger}, \citenamefont {Liu}, \citenamefont {Smith}, \citenamefont
  {Knapp}, \citenamefont {Newman}, \citenamefont {Jones}, \citenamefont {Chen},
  \citenamefont {Quintana}, \citenamefont {Mi}, \citenamefont {Dunsworth},
  \citenamefont {Gidney}, \citenamefont {Aleiner}, \citenamefont {Arute},
  \citenamefont {Arya}, \citenamefont {Atalaya}, \citenamefont {Babbush},
  \citenamefont {Bardin}, \citenamefont {Barends}, \citenamefont {Basso},
  \citenamefont {Bengtsson}, \citenamefont {Bilmes}, \citenamefont {Broughton},
  \citenamefont {Buckley}, \citenamefont {Buell}, \citenamefont {Burkett},
  \citenamefont {Bushnell}, \citenamefont {Chiaro}, \citenamefont {Collins},
  \citenamefont {Courtney}, \citenamefont {Demura}, \citenamefont {Derk},
  \citenamefont {Eppens}, \citenamefont {Erickson}, \citenamefont {Farhi},
  \citenamefont {Foaro}, \citenamefont {Fowler}, \citenamefont {Foxen},
  \citenamefont {Giustina}, \citenamefont {Greene}, \citenamefont {Gross},
  \citenamefont {Harrigan}, \citenamefont {Harrington}, \citenamefont {Hilton},
  \citenamefont {Hong}, \citenamefont {Huang}, \citenamefont {Huggins},
  \citenamefont {Ioffe}, \citenamefont {Isakov}, \citenamefont {Jeffrey},
  \citenamefont {Jiang}, \citenamefont {Kafri}, \citenamefont {Kechedzhi},
  \citenamefont {Khattar}, \citenamefont {Kim}, \citenamefont {Klimov},
  \citenamefont {Korotkov}, \citenamefont {Kostritsa}, \citenamefont
  {Landhuis}, \citenamefont {Laptev}, \citenamefont {Locharla}, \citenamefont
  {Lucero}, \citenamefont {Martin}, \citenamefont {McClean}, \citenamefont
  {McEwen}, \citenamefont {Miao}, \citenamefont {Mohseni}, \citenamefont
  {Montazeri}, \citenamefont {Mruczkiewicz}, \citenamefont {Mutus},
  \citenamefont {Naaman}, \citenamefont {Neeley}, \citenamefont {Neill},
  \citenamefont {Niu}, \citenamefont {O'Brien}, \citenamefont {Opremcak},
  \citenamefont {Pató}, \citenamefont {Petukhov}, \citenamefont {Rubin},
  \citenamefont {Sank}, \citenamefont {Shvarts}, \citenamefont {Strain},
  \citenamefont {Szalay}, \citenamefont {Villalonga}, \citenamefont {White},
  \citenamefont {Yao}, \citenamefont {Yeh}, \citenamefont {Yoo}, \citenamefont
  {Zalcman}, \citenamefont {Neven}, \citenamefont {Boixo}, \citenamefont
  {Megrant}, \citenamefont {Chen}, \citenamefont {Kelly}, \citenamefont
  {Smelyanskiy}, \citenamefont {Kitaev}, \citenamefont {Knap}, \citenamefont
  {Pollmann},\ and\ \citenamefont {Roushan}}]{satzinger2021realizing}%
  \BibitemOpen
  \bibfield  {author} {\bibinfo {author} {\bibfnamefont {K.~J.}\ \bibnamefont
  {Satzinger}}, \bibinfo {author} {\bibfnamefont {Y.}~\bibnamefont {Liu}},
  \bibinfo {author} {\bibfnamefont {A.}~\bibnamefont {Smith}}, \bibinfo
  {author} {\bibfnamefont {C.}~\bibnamefont {Knapp}}, \bibinfo {author}
  {\bibfnamefont {M.}~\bibnamefont {Newman}}, \bibinfo {author} {\bibfnamefont
  {C.}~\bibnamefont {Jones}}, \bibinfo {author} {\bibfnamefont
  {Z.}~\bibnamefont {Chen}}, \bibinfo {author} {\bibfnamefont {C.}~\bibnamefont
  {Quintana}}, \bibinfo {author} {\bibfnamefont {X.}~\bibnamefont {Mi}},
  \bibinfo {author} {\bibfnamefont {A.}~\bibnamefont {Dunsworth}}, \bibinfo
  {author} {\bibfnamefont {C.}~\bibnamefont {Gidney}}, \bibinfo {author}
  {\bibfnamefont {I.}~\bibnamefont {Aleiner}}, \bibinfo {author} {\bibfnamefont
  {F.}~\bibnamefont {Arute}}, \bibinfo {author} {\bibfnamefont
  {K.}~\bibnamefont {Arya}}, \bibinfo {author} {\bibfnamefont {J.}~\bibnamefont
  {Atalaya}}, \bibinfo {author} {\bibfnamefont {R.}~\bibnamefont {Babbush}},
  \bibinfo {author} {\bibfnamefont {J.~C.}\ \bibnamefont {Bardin}}, \bibinfo
  {author} {\bibfnamefont {R.}~\bibnamefont {Barends}}, \bibinfo {author}
  {\bibfnamefont {J.}~\bibnamefont {Basso}}, \bibinfo {author} {\bibfnamefont
  {A.}~\bibnamefont {Bengtsson}}, \bibinfo {author} {\bibfnamefont
  {A.}~\bibnamefont {Bilmes}}, \bibinfo {author} {\bibfnamefont
  {M.}~\bibnamefont {Broughton}}, \bibinfo {author} {\bibfnamefont {B.~B.}\
  \bibnamefont {Buckley}}, \bibinfo {author} {\bibfnamefont {D.~A.}\
  \bibnamefont {Buell}}, \bibinfo {author} {\bibfnamefont {B.}~\bibnamefont
  {Burkett}}, \bibinfo {author} {\bibfnamefont {N.}~\bibnamefont {Bushnell}},
  \bibinfo {author} {\bibfnamefont {B.}~\bibnamefont {Chiaro}}, \bibinfo
  {author} {\bibfnamefont {R.}~\bibnamefont {Collins}}, \bibinfo {author}
  {\bibfnamefont {W.}~\bibnamefont {Courtney}}, \bibinfo {author}
  {\bibfnamefont {S.}~\bibnamefont {Demura}}, \bibinfo {author} {\bibfnamefont
  {A.~R.}\ \bibnamefont {Derk}}, \bibinfo {author} {\bibfnamefont
  {D.}~\bibnamefont {Eppens}}, \bibinfo {author} {\bibfnamefont
  {C.}~\bibnamefont {Erickson}}, \bibinfo {author} {\bibfnamefont
  {E.}~\bibnamefont {Farhi}}, \bibinfo {author} {\bibfnamefont
  {L.}~\bibnamefont {Foaro}}, \bibinfo {author} {\bibfnamefont {A.~G.}\
  \bibnamefont {Fowler}}, \bibinfo {author} {\bibfnamefont {B.}~\bibnamefont
  {Foxen}}, \bibinfo {author} {\bibfnamefont {M.}~\bibnamefont {Giustina}},
  \bibinfo {author} {\bibfnamefont {A.}~\bibnamefont {Greene}}, \bibinfo
  {author} {\bibfnamefont {J.~A.}\ \bibnamefont {Gross}}, \bibinfo {author}
  {\bibfnamefont {M.~P.}\ \bibnamefont {Harrigan}}, \bibinfo {author}
  {\bibfnamefont {S.~D.}\ \bibnamefont {Harrington}}, \bibinfo {author}
  {\bibfnamefont {J.}~\bibnamefont {Hilton}}, \bibinfo {author} {\bibfnamefont
  {S.}~\bibnamefont {Hong}}, \bibinfo {author} {\bibfnamefont {T.}~\bibnamefont
  {Huang}}, \bibinfo {author} {\bibfnamefont {W.~J.}\ \bibnamefont {Huggins}},
  \bibinfo {author} {\bibfnamefont {L.~B.}\ \bibnamefont {Ioffe}}, \bibinfo
  {author} {\bibfnamefont {S.~V.}\ \bibnamefont {Isakov}}, \bibinfo {author}
  {\bibfnamefont {E.}~\bibnamefont {Jeffrey}}, \bibinfo {author} {\bibfnamefont
  {Z.}~\bibnamefont {Jiang}}, \bibinfo {author} {\bibfnamefont
  {D.}~\bibnamefont {Kafri}}, \bibinfo {author} {\bibfnamefont
  {K.}~\bibnamefont {Kechedzhi}}, \bibinfo {author} {\bibfnamefont
  {T.}~\bibnamefont {Khattar}}, \bibinfo {author} {\bibfnamefont
  {S.}~\bibnamefont {Kim}}, \bibinfo {author} {\bibfnamefont {P.~V.}\
  \bibnamefont {Klimov}}, \bibinfo {author} {\bibfnamefont {A.~N.}\
  \bibnamefont {Korotkov}}, \bibinfo {author} {\bibfnamefont {F.}~\bibnamefont
  {Kostritsa}}, \bibinfo {author} {\bibfnamefont {D.}~\bibnamefont {Landhuis}},
  \bibinfo {author} {\bibfnamefont {P.}~\bibnamefont {Laptev}}, \bibinfo
  {author} {\bibfnamefont {A.}~\bibnamefont {Locharla}}, \bibinfo {author}
  {\bibfnamefont {E.}~\bibnamefont {Lucero}}, \bibinfo {author} {\bibfnamefont
  {O.}~\bibnamefont {Martin}}, \bibinfo {author} {\bibfnamefont {J.~R.}\
  \bibnamefont {McClean}}, \bibinfo {author} {\bibfnamefont {M.}~\bibnamefont
  {McEwen}}, \bibinfo {author} {\bibfnamefont {K.~C.}\ \bibnamefont {Miao}},
  \bibinfo {author} {\bibfnamefont {M.}~\bibnamefont {Mohseni}}, \bibinfo
  {author} {\bibfnamefont {S.}~\bibnamefont {Montazeri}}, \bibinfo {author}
  {\bibfnamefont {W.}~\bibnamefont {Mruczkiewicz}}, \bibinfo {author}
  {\bibfnamefont {J.}~\bibnamefont {Mutus}}, \bibinfo {author} {\bibfnamefont
  {O.}~\bibnamefont {Naaman}}, \bibinfo {author} {\bibfnamefont
  {M.}~\bibnamefont {Neeley}}, \bibinfo {author} {\bibfnamefont
  {C.}~\bibnamefont {Neill}}, \bibinfo {author} {\bibfnamefont {M.~Y.}\
  \bibnamefont {Niu}}, \bibinfo {author} {\bibfnamefont {T.~E.}\ \bibnamefont
  {O'Brien}}, \bibinfo {author} {\bibfnamefont {A.}~\bibnamefont {Opremcak}},
  \bibinfo {author} {\bibfnamefont {B.}~\bibnamefont {Pató}}, \bibinfo
  {author} {\bibfnamefont {A.}~\bibnamefont {Petukhov}}, \bibinfo {author}
  {\bibfnamefont {N.~C.}\ \bibnamefont {Rubin}}, \bibinfo {author}
  {\bibfnamefont {D.}~\bibnamefont {Sank}}, \bibinfo {author} {\bibfnamefont
  {V.}~\bibnamefont {Shvarts}}, \bibinfo {author} {\bibfnamefont
  {D.}~\bibnamefont {Strain}}, \bibinfo {author} {\bibfnamefont
  {M.}~\bibnamefont {Szalay}}, \bibinfo {author} {\bibfnamefont
  {B.}~\bibnamefont {Villalonga}}, \bibinfo {author} {\bibfnamefont {T.~C.}\
  \bibnamefont {White}}, \bibinfo {author} {\bibfnamefont {Z.}~\bibnamefont
  {Yao}}, \bibinfo {author} {\bibfnamefont {P.}~\bibnamefont {Yeh}}, \bibinfo
  {author} {\bibfnamefont {J.}~\bibnamefont {Yoo}}, \bibinfo {author}
  {\bibfnamefont {A.}~\bibnamefont {Zalcman}}, \bibinfo {author} {\bibfnamefont
  {H.}~\bibnamefont {Neven}}, \bibinfo {author} {\bibfnamefont
  {S.}~\bibnamefont {Boixo}}, \bibinfo {author} {\bibfnamefont
  {A.}~\bibnamefont {Megrant}}, \bibinfo {author} {\bibfnamefont
  {Y.}~\bibnamefont {Chen}}, \bibinfo {author} {\bibfnamefont {J.}~\bibnamefont
  {Kelly}}, \bibinfo {author} {\bibfnamefont {V.}~\bibnamefont {Smelyanskiy}},
  \bibinfo {author} {\bibfnamefont {A.}~\bibnamefont {Kitaev}}, \bibinfo
  {author} {\bibfnamefont {M.}~\bibnamefont {Knap}}, \bibinfo {author}
  {\bibfnamefont {F.}~\bibnamefont {Pollmann}}, \ and\ \bibinfo {author}
  {\bibfnamefont {P.}~\bibnamefont {Roushan}},\ }\href@noop {} {\enquote
  {\bibinfo {title} {Realizing topologically ordered states on a quantum
  processor},}\ } (\bibinfo {year} {2021}),\ \Eprint
  {http://arxiv.org/abs/2104.01180} {arXiv:2104.01180 [quant-ph]} \BibitemShut
  {NoStop}%
\bibitem [{\citenamefont {Zhu}\ \emph {et~al.}(2021)\citenamefont {Zhu},
  \citenamefont {Cian}, \citenamefont {Noel}, \citenamefont {Risinger},
  \citenamefont {Biswas}, \citenamefont {Egan}, \citenamefont {Zhu},
  \citenamefont {Green}, \citenamefont {Alderete}, \citenamefont {Nguyen},
  \citenamefont {Wang}, \citenamefont {Maksymov}, \citenamefont {Nam},
  \citenamefont {Cetina}, \citenamefont {Linke}, \citenamefont {Hafezi},\ and\
  \citenamefont {Monroe}}]{zhu2021crossplatform}%
  \BibitemOpen
  \bibfield  {author} {\bibinfo {author} {\bibfnamefont {D.}~\bibnamefont
  {Zhu}}, \bibinfo {author} {\bibfnamefont {Z.-P.}\ \bibnamefont {Cian}},
  \bibinfo {author} {\bibfnamefont {C.}~\bibnamefont {Noel}}, \bibinfo {author}
  {\bibfnamefont {A.}~\bibnamefont {Risinger}}, \bibinfo {author}
  {\bibfnamefont {D.}~\bibnamefont {Biswas}}, \bibinfo {author} {\bibfnamefont
  {L.}~\bibnamefont {Egan}}, \bibinfo {author} {\bibfnamefont {Y.}~\bibnamefont
  {Zhu}}, \bibinfo {author} {\bibfnamefont {A.~M.}\ \bibnamefont {Green}},
  \bibinfo {author} {\bibfnamefont {C.~H.}\ \bibnamefont {Alderete}}, \bibinfo
  {author} {\bibfnamefont {N.~H.}\ \bibnamefont {Nguyen}}, \bibinfo {author}
  {\bibfnamefont {Q.}~\bibnamefont {Wang}}, \bibinfo {author} {\bibfnamefont
  {A.}~\bibnamefont {Maksymov}}, \bibinfo {author} {\bibfnamefont
  {Y.}~\bibnamefont {Nam}}, \bibinfo {author} {\bibfnamefont {M.}~\bibnamefont
  {Cetina}}, \bibinfo {author} {\bibfnamefont {N.~M.}\ \bibnamefont {Linke}},
  \bibinfo {author} {\bibfnamefont {M.}~\bibnamefont {Hafezi}}, \ and\ \bibinfo
  {author} {\bibfnamefont {C.}~\bibnamefont {Monroe}},\ }\href@noop {}
  {\enquote {\bibinfo {title} {Cross-platform comparison of arbitrary quantum
  computations},}\ } (\bibinfo {year} {2021}),\ \Eprint
  {http://arxiv.org/abs/2107.11387} {arXiv:2107.11387 [quant-ph]} \BibitemShut
  {NoStop}%
\bibitem [{SM()}]{SM}%
  \BibitemOpen
  \href@noop {} {}\bibinfo {note} {{See Supplemental Material, which includes
  Refs. 6, 8, 13, 25.}}\BibitemShut {Stop}%
\bibitem [{\citenamefont {Diaconis}\ and\ \citenamefont
  {Forrester}()}]{Diaconis2015}%
  \BibitemOpen
  \bibfield  {author} {\bibinfo {author} {\bibfnamefont {P.}~\bibnamefont
  {Diaconis}}\ and\ \bibinfo {author} {\bibfnamefont {P.~J.}\ \bibnamefont
  {Forrester}},\ }\href {http://arxiv.org/abs/1512.09229} {\ }\Eprint
  {http://arxiv.org/abs/1512.09229} {arXiv:1512.09229} \BibitemShut {NoStop}%
\bibitem [{\citenamefont {Planitz}\ \emph {et~al.}(1987)\citenamefont
  {Planitz}, \citenamefont {Press}, \citenamefont {Flannery}, \citenamefont
  {Teukolsky},\ and\ \citenamefont {Vetterling}}]{NRecipes2007}%
  \BibitemOpen
  \bibfield  {author} {\bibinfo {author} {\bibfnamefont {M.}~\bibnamefont
  {Planitz}}, \bibinfo {author} {\bibfnamefont {W.~H.}\ \bibnamefont {Press}},
  \bibinfo {author} {\bibfnamefont {B.~P.}\ \bibnamefont {Flannery}}, \bibinfo
  {author} {\bibfnamefont {S.~A.}\ \bibnamefont {Teukolsky}}, \ and\ \bibinfo
  {author} {\bibfnamefont {W.~T.}\ \bibnamefont {Vetterling}},\ }\href
  {\doibase 10.2307/3616786} {\emph {\bibinfo {title} {{Numerical Recipes: The
  Art of Scientific Computing}}}},\ \bibinfo {edition} {3rd}\ ed.,\
  Vol.~\bibinfo {volume} {71}\ (\bibinfo  {publisher} {Cambridge University
  Press},\ \bibinfo {address} {New York, NY, USA},\ \bibinfo {year}
  {1987})\BibitemShut {NoStop}%
\bibitem [{Note1()}]{Note1}%
  \BibitemOpen
  \bibinfo {note} {In the examples below, we have $X_\protect \mathrm
  {IS}\approx X(u)>0$.}\BibitemShut {Stop}%
\bibitem [{\citenamefont {Schollw{\"{o}}ck}(2011)}]{Schollwock2011}%
  \BibitemOpen
  \bibfield  {author} {\bibinfo {author} {\bibfnamefont {U.}~\bibnamefont
  {Schollw{\"{o}}ck}},\ }\href {\doibase 10.1016/j.aop.2010.09.012} {\bibfield
  {journal} {\bibinfo  {journal} {Annals of Physics}\ }\textbf {\bibinfo
  {volume} {326}},\ \bibinfo {pages} {96} (\bibinfo {year} {2011})}\BibitemShut
  {NoStop}%
\bibitem [{\citenamefont {Cramer}\ \emph {et~al.}(2010)\citenamefont {Cramer},
  \citenamefont {Plenio}, \citenamefont {Flammia}, \citenamefont {Somma},
  \citenamefont {Gross}, \citenamefont {Bartlett}, \citenamefont
  {Landon-Cardinal}, \citenamefont {Poulin},\ and\ \citenamefont
  {Liu}}]{Cramer2010}%
  \BibitemOpen
  \bibfield  {author} {\bibinfo {author} {\bibfnamefont {M.}~\bibnamefont
  {Cramer}}, \bibinfo {author} {\bibfnamefont {M.~B.}\ \bibnamefont {Plenio}},
  \bibinfo {author} {\bibfnamefont {S.~T.}\ \bibnamefont {Flammia}}, \bibinfo
  {author} {\bibfnamefont {R.}~\bibnamefont {Somma}}, \bibinfo {author}
  {\bibfnamefont {D.}~\bibnamefont {Gross}}, \bibinfo {author} {\bibfnamefont
  {S.~D.}\ \bibnamefont {Bartlett}}, \bibinfo {author} {\bibfnamefont
  {O.}~\bibnamefont {Landon-Cardinal}}, \bibinfo {author} {\bibfnamefont
  {D.}~\bibnamefont {Poulin}}, \ and\ \bibinfo {author} {\bibfnamefont {Y.-K.}\
  \bibnamefont {Liu}},\ }\href {\doibase 10.1038/ncomms1147} {\bibfield
  {journal} {\bibinfo  {journal} {Nat. Comm.}\ }\textbf {\bibinfo {volume}
  {1}},\ \bibinfo {pages} {149} (\bibinfo {year} {2010})}\BibitemShut {NoStop}%
\bibitem [{\citenamefont {Torlai}\ \emph {et~al.}(2018)\citenamefont {Torlai},
  \citenamefont {Mazzola}, \citenamefont {Carrasquilla}, \citenamefont
  {Troyer}, \citenamefont {Melko},\ and\ \citenamefont {Carleo}}]{Torlai2018}%
  \BibitemOpen
  \bibfield  {author} {\bibinfo {author} {\bibfnamefont {G.}~\bibnamefont
  {Torlai}}, \bibinfo {author} {\bibfnamefont {G.}~\bibnamefont {Mazzola}},
  \bibinfo {author} {\bibfnamefont {J.}~\bibnamefont {Carrasquilla}}, \bibinfo
  {author} {\bibfnamefont {M.}~\bibnamefont {Troyer}}, \bibinfo {author}
  {\bibfnamefont {R.}~\bibnamefont {Melko}}, \ and\ \bibinfo {author}
  {\bibfnamefont {G.}~\bibnamefont {Carleo}},\ }\href {\doibase
  10.1038/s41567-018-0048-5} {\bibfield  {journal} {\bibinfo  {journal} {Nat.
  Phys.}\ }\textbf {\bibinfo {volume} {14}},\ \bibinfo {pages} {447} (\bibinfo
  {year} {2018})}\BibitemShut {NoStop}%
\bibitem [{\citenamefont {Torlai}\ \emph {et~al.}(2019)\citenamefont {Torlai},
  \citenamefont {Timar}, \citenamefont {van Nieuwenburg}, \citenamefont
  {Levine}, \citenamefont {Omran}, \citenamefont {Keesling}, \citenamefont
  {Bernien}, \citenamefont {Greiner}, \citenamefont {{Vuleti\ifmmode
  \acutec\else {\'{c}}\fi}}, \citenamefont {Lukin}, \citenamefont {Melko},\
  and\ \citenamefont {Endres}}]{Torlai2019}%
  \BibitemOpen
  \bibfield  {author} {\bibinfo {author} {\bibfnamefont {G.}~\bibnamefont
  {Torlai}}, \bibinfo {author} {\bibfnamefont {B.}~\bibnamefont {Timar}},
  \bibinfo {author} {\bibfnamefont {E.~P.~L.}\ \bibnamefont {van Nieuwenburg}},
  \bibinfo {author} {\bibfnamefont {H.}~\bibnamefont {Levine}}, \bibinfo
  {author} {\bibfnamefont {A.}~\bibnamefont {Omran}}, \bibinfo {author}
  {\bibfnamefont {A.}~\bibnamefont {Keesling}}, \bibinfo {author}
  {\bibfnamefont {H.}~\bibnamefont {Bernien}}, \bibinfo {author} {\bibfnamefont
  {M.}~\bibnamefont {Greiner}}, \bibinfo {author} {\bibfnamefont
  {V.}~\bibnamefont {{Vuleti\ifmmode \acutec\else {\'{c}}\fi}}}, \bibinfo
  {author} {\bibfnamefont {M.~D.}\ \bibnamefont {Lukin}}, \bibinfo {author}
  {\bibfnamefont {R.~G.}\ \bibnamefont {Melko}}, \ and\ \bibinfo {author}
  {\bibfnamefont {M.}~\bibnamefont {Endres}},\ }\href {\doibase
  10.1103/PhysRevLett.123.230504} {\bibfield  {journal} {\bibinfo  {journal}
  {Phys. Rev. Lett.}\ }\textbf {\bibinfo {volume} {123}},\ \bibinfo {pages}
  {230504} (\bibinfo {year} {2019})}\BibitemShut {NoStop}%
\bibitem [{\citenamefont {Kokail}\ \emph {et~al.}(2021)\citenamefont {Kokail},
  \citenamefont {van Bijnen}, \citenamefont {Elben}, \citenamefont
  {Vermersch},\ and\ \citenamefont {Zoller}}]{Kokail2021}%
  \BibitemOpen
  \bibfield  {author} {\bibinfo {author} {\bibfnamefont {C.}~\bibnamefont
  {Kokail}}, \bibinfo {author} {\bibfnamefont {R.}~\bibnamefont {van Bijnen}},
  \bibinfo {author} {\bibfnamefont {A.}~\bibnamefont {Elben}}, \bibinfo
  {author} {\bibfnamefont {B.}~\bibnamefont {Vermersch}}, \ and\ \bibinfo
  {author} {\bibfnamefont {P.}~\bibnamefont {Zoller}},\ }\href {\doibase
  10.1038/s41567-021-01260-w} {\bibfield  {journal} {\bibinfo  {journal}
  {Nature Physics}\ }\textbf {\bibinfo {volume} {17}},\ \bibinfo {pages}
  {936–942} (\bibinfo {year} {2021})}\BibitemShut {NoStop}%
\bibitem [{\citenamefont {Carleo}\ and\ \citenamefont
  {Troyer}(2017)}]{Carleo2017}%
  \BibitemOpen
  \bibfield  {author} {\bibinfo {author} {\bibfnamefont {G.}~\bibnamefont
  {Carleo}}\ and\ \bibinfo {author} {\bibfnamefont {M.}~\bibnamefont
  {Troyer}},\ }\href {\doibase 10.1126/science.aag2302} {\bibfield  {journal}
  {\bibinfo  {journal} {Science}\ }\textbf {\bibinfo {volume} {355}},\ \bibinfo
  {pages} {602} (\bibinfo {year} {2017})}\BibitemShut {NoStop}%
\bibitem [{\citenamefont {Flammia}\ and\ \citenamefont
  {Liu}(2011)}]{Flammia2011}%
  \BibitemOpen
  \bibfield  {author} {\bibinfo {author} {\bibfnamefont {S.~T.}\ \bibnamefont
  {Flammia}}\ and\ \bibinfo {author} {\bibfnamefont {Y.-K.}\ \bibnamefont
  {Liu}},\ }\href {\doibase 10.1103/PhysRevLett.106.230501} {\bibfield
  {journal} {\bibinfo  {journal} {Phys. Rev. Lett.}\ }\textbf {\bibinfo
  {volume} {106}},\ \bibinfo {pages} {230501} (\bibinfo {year}
  {2011})}\BibitemShut {NoStop}%
\bibitem [{\citenamefont {da~Silva}\ \emph {et~al.}(2011)\citenamefont
  {da~Silva}, \citenamefont {Landon-Cardinal},\ and\ \citenamefont
  {Poulin}}]{DaSilva2011}%
  \BibitemOpen
  \bibfield  {author} {\bibinfo {author} {\bibfnamefont {M.~P.}\ \bibnamefont
  {da~Silva}}, \bibinfo {author} {\bibfnamefont {O.}~\bibnamefont
  {Landon-Cardinal}}, \ and\ \bibinfo {author} {\bibfnamefont {D.}~\bibnamefont
  {Poulin}},\ }\href {\doibase 10.1103/PhysRevLett.107.210404} {\bibfield
  {journal} {\bibinfo  {journal} {Phys. Rev. Lett.}\ }\textbf {\bibinfo
  {volume} {107}},\ \bibinfo {pages} {210404} (\bibinfo {year}
  {2011})}\BibitemShut {NoStop}%
\bibitem [{Note2()}]{Note2}%
  \BibitemOpen
  \bibinfo {note} {The precise master equation is given in Ref.~\cite
  {Brydges2019}, the propagation time is $t=5$ ms}\BibitemShut {NoStop}%
\bibitem [{\citenamefont {Han}\ \emph {et~al.}(2018)\citenamefont {Han},
  \citenamefont {Wang}, \citenamefont {Fan}, \citenamefont {Wang},\ and\
  \citenamefont {Zhang}}]{Han2018}%
  \BibitemOpen
  \bibfield  {author} {\bibinfo {author} {\bibfnamefont {Z.-Y.}\ \bibnamefont
  {Han}}, \bibinfo {author} {\bibfnamefont {J.}~\bibnamefont {Wang}}, \bibinfo
  {author} {\bibfnamefont {H.}~\bibnamefont {Fan}}, \bibinfo {author}
  {\bibfnamefont {L.}~\bibnamefont {Wang}}, \ and\ \bibinfo {author}
  {\bibfnamefont {P.}~\bibnamefont {Zhang}},\ }\href {\doibase
  10.1103/PhysRevX.8.031012} {\bibfield  {journal} {\bibinfo  {journal} {Phys.
  Rev. X}\ }\textbf {\bibinfo {volume} {8}},\ \bibinfo {pages} {31012}
  (\bibinfo {year} {2018})}\BibitemShut {NoStop}%
\bibitem [{\citenamefont {Zaletel}\ \emph {et~al.}(2015)\citenamefont
  {Zaletel}, \citenamefont {Mong}, \citenamefont {Karrasch}, \citenamefont
  {Moore},\ and\ \citenamefont {Pollmann}}]{Zaletel2015}%
  \BibitemOpen
  \bibfield  {author} {\bibinfo {author} {\bibfnamefont {M.~P.}\ \bibnamefont
  {Zaletel}}, \bibinfo {author} {\bibfnamefont {R.~S.}\ \bibnamefont {Mong}},
  \bibinfo {author} {\bibfnamefont {C.}~\bibnamefont {Karrasch}}, \bibinfo
  {author} {\bibfnamefont {J.~E.}\ \bibnamefont {Moore}}, \ and\ \bibinfo
  {author} {\bibfnamefont {F.}~\bibnamefont {Pollmann}},\ }\href {\doibase
  10.1103/PhysRevB.91.165112} {\bibfield  {journal} {\bibinfo  {journal} {Phys.
  Rev. B}\ }\textbf {\bibinfo {volume} {91}},\ \bibinfo {pages} {165112}
  (\bibinfo {year} {2015})},\ \Eprint {http://arxiv.org/abs/1407.1832}
  {1407.1832} \BibitemShut {NoStop}%
\bibitem [{\citenamefont {Fishman}\ \emph {et~al.}()\citenamefont {Fishman},
  \citenamefont {White},\ and\ \citenamefont {Stoudenmire}}]{Fishman}%
  \BibitemOpen
  \bibfield  {author} {\bibinfo {author} {\bibfnamefont {M.}~\bibnamefont
  {Fishman}}, \bibinfo {author} {\bibfnamefont {S.~R.}\ \bibnamefont {White}},
  \ and\ \bibinfo {author} {\bibfnamefont {E.~M.}\ \bibnamefont
  {Stoudenmire}},\ }\href {http://arxiv.org/abs/2007.14822} {\ }\Eprint
  {http://arxiv.org/abs/2007.14822} {arXiv:2007.14822} \BibitemShut {NoStop}%
\bibitem [{\citenamefont {Johansson}\ \emph {et~al.}(2013)\citenamefont
  {Johansson}, \citenamefont {Nation},\ and\ \citenamefont
  {Nori}}]{Johansson2013}%
  \BibitemOpen
  \bibfield  {author} {\bibinfo {author} {\bibfnamefont {J.}~\bibnamefont
  {Johansson}}, \bibinfo {author} {\bibfnamefont {P.}~\bibnamefont {Nation}}, \
  and\ \bibinfo {author} {\bibfnamefont {F.}~\bibnamefont {Nori}},\ }\href
  {\doibase 10.1016/j.cpc.2012.11.019} {\bibfield  {journal} {\bibinfo
  {journal} {Comput. Phys. Commun.}\ }\textbf {\bibinfo {volume} {184}},\
  \bibinfo {pages} {1234} (\bibinfo {year} {2013})}\BibitemShut {NoStop}%
\bibitem [{\citenamefont {Hoeffding}(1992)}]{Hoeffding1992}%
  \BibitemOpen
  \bibfield  {author} {\bibinfo {author} {\bibfnamefont {W.}~\bibnamefont
  {Hoeffding}},\ }in\ \href@noop {} {\emph {\bibinfo {booktitle} {Breakthroughs
  in Statistics}}}\ (\bibinfo  {publisher} {Springer},\ \bibinfo {year}
  {1992})\ pp.\ \bibinfo {pages} {308--334}\BibitemShut {NoStop}%
\bibitem [{Note3()}]{Note3}%
  \BibitemOpen
  \bibinfo {note} {Using Weingarten calculus and techniques presented in
  Ref.~\cite {Zhou2020}, we can generalize the variance for uniform sampling to
  pure product states of $N$ qudits with arbitrary local dimension $d$ (i.e.\
  including the case of global random unitaries). We find $$ \Gamma _4=\left
  (\protect \frac {d^2+9 d+2}{d^2+5 d+6} \right )^N, \Gamma _3=\left (\protect
  \frac {3d}{2+d}\right )^N, \Gamma _2=\left (2d-1\right )^N . $$}\BibitemShut
  {NoStop}%
\end{thebibliography}
\end{document}